\newcommand{\csr}{committee selection rule\xspace}
\newcommand{\UCWD}{UCWD\xspace}
\newcommand{\DiReCF}{($\mu$, $\pi$)-DRCF\xspace}
\newcommand{\DiReCFp}[2]{(#1, #2)-DRCF\xspace}
\newcommand{\DiReCWD}{($\mu$, $\pi$, $\mathtt{f}$)-DRCWD\xspace}
\newcommand{\DiReCWDp}[2]{(#1, #2, $\mathtt{f}$)-DRCWD\xspace}
\newcommand{\group}[2]{A_{(#1, #2)}}
\newcommand{\groupsub}[2]{A_{(#1, #2_#1)}}
\newcommand{\population}[2]{P_{(#1, #2)}}
\newcommand{\populationsub}[2]{P_{(#1, #2_#1)}}
\newcommand{\nphard}{NP-hard\xspace}
\newcommand{\etal}{\emph{et~al.}\xspace}
\newtheorem{definition}{Definition}
\newtheorem{theorem}{Theorem}
\newtheorem{example}{Example}
\newtheorem{corollary}{Corollary}
\newtheorem{observation}{Observation}
\newtheorem{conjecture}{Conjecture}
\newtheorem{lemma}{Lemma}
\newtheorem{claim}{Claim}
\DeclareMathOperator{\pos}{pos}
\title{DiRe Committee : Diversity and Representation Constraints in Multiwinner Elections}
\author{
  Kunal Relia\thanks{This work was supported in part by Julia Stoyanovich's  NSF Grants No. 1916647 and 1934464.} \\
  New York University, USA\\
  \texttt{krelia@nyu.edu} \\
}
\begin{document}
\maketitle

\begin{abstract}
The study of fairness in multiwinner elections focuses on settings where candidates have attributes. However, voters may also be divided into predefined populations under one or more attributes (e.g., ``California'' and ``Illinois'' populations under the ``state'' attribute), which may be same or different from candidate attributes. The models that focus on candidate attributes alone may systematically under-represent smaller voter populations. Hence, we develop a model, DiRe Committee Winner Determination (DRCWD), which delineates candidate and voter attributes to select a committee by specifying diversity and representation constraints and a voting rule. We 
analyze its computational complexity, inapproximability, and parameterized complexity. We develop a heuristic-based algorithm, which finds the winning DiRe committee in under two minutes on 63\% of the instances of synthetic datasets and on 100\% of instances of real-world datasets. We present an empirical analysis of the running time, feasibility, and utility traded-off. 

Overall, DRCWD motivates that a study of multiwinner elections should consider both its actors, namely candidates and voters, as candidate-specific models can unknowingly harm voter populations, and vice versa. 
Additionally, even when the attributes of candidates and voters coincide, it is important to treat them separately as diversity 
does not imply representation and vice versa. This is to say that having a female candidate on the committee, for example, is different from having a candidate on the committee who is preferred by the female voters, and who themselves may or may not be female.
\end{abstract}

\keywords{Fairness \and Multiwinner Elections \and Computational Social Choice}

\clearpage
\tableofcontents
\clearpage

\section{Introduction}
\label{sec:intro}


The problem of selecting a committee from a given set of candidates arises in multiple domains; it ranges from political sciences (e.g., selecting the parliament of a country) to recommendation systems (e.g., selecting the movies to show on Netflix). Formally, given a set $C$ of $m$ candidates (politicians and movies, respectively), a set $V$ of $n$ voters (citizens and Netflix subscribers, respectively) give their ordered preferences over all candidates $c \in C$ to select a committee of size $k$. These preferences can be stated directly in case of parliamentary elections, or they can be derived based on 
input, such as when Netflix subscribers' viewing behavior is used to derive their preferences. In this paper, we focus on selecting a $k$-sized (fixed size) committee using direct, ordered, and complete preferences. 

Which committee is selected depends on the \csr, also called multiwinner voting rule. 
Examples of commonly used families of rules when a complete ballot of each voter is given are Condorcet principle-based rules \cite{faliszewski2016committee}, which select a committee that is at least as strong as every other committee in a pairwise majority comparison,   approval-based voting rules \cite{faliszewski2016committee,kilgour2010approval,sanchez2017proportional} where each voter submits an \emph{approval ballot} approving a subset of candidates, and ordinal preference ballot-based voting rules like k-Borda and $\beta$-Chamberlin-Courant ($\beta$-CC) \cite{elkind2017properties,faliszewski2017multiwinner} that are analogous to single-winner rules. We note that this version of CC rule is different from the Chamberlin–Courant \emph{approval} voting rule used in the context of approval elections \cite{aziz2017justified, lackner2021consistent}. We refer readers to Section 2.2 of \cite{faliszewski2017multiwinner} for further details on the commonly used families of multiwinner voting rules. In this paper, we focus on ordinal preference-based rules that are analogous to single-winner rules.


Recent 
work on fairness in multiwinner elections show that 
these rules can 
create or propagate biases by 
systematically harming 
candidates coming from historically 
disadvantaged
groups~\cite{bredereck2018multiwinner,celis2017multiwinner}. Hence, diversity constraints on candidate attributes were introduced to overcome this problem.  However, voters may be divided into predefined populations under one or more attributes, which may be different from candidate attributes. For example, voters in Figure~\ref{fig:example/voters} are divided into ``California'' and ``Illinois'' populations under the ``state'' attribute. The models that focus on candidate attributes alone may 
systematically under-represent smaller voter populations.  

\begin{figure}[t!]
\centering
\begin{subfigure}{.3\textwidth}
  \centering
  \includegraphics[width=0.975\linewidth]{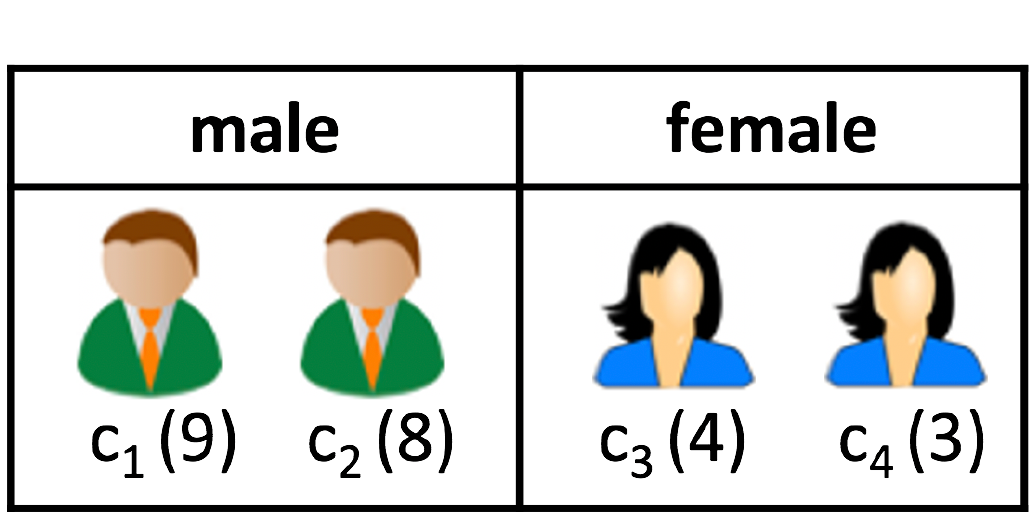}
  \caption{\textbf{candidates}}
  \label{fig:example/candidates}
\end{subfigure}%
\begin{subfigure}{.3\textwidth}
  \centering
  \includegraphics[width=0.975\linewidth]{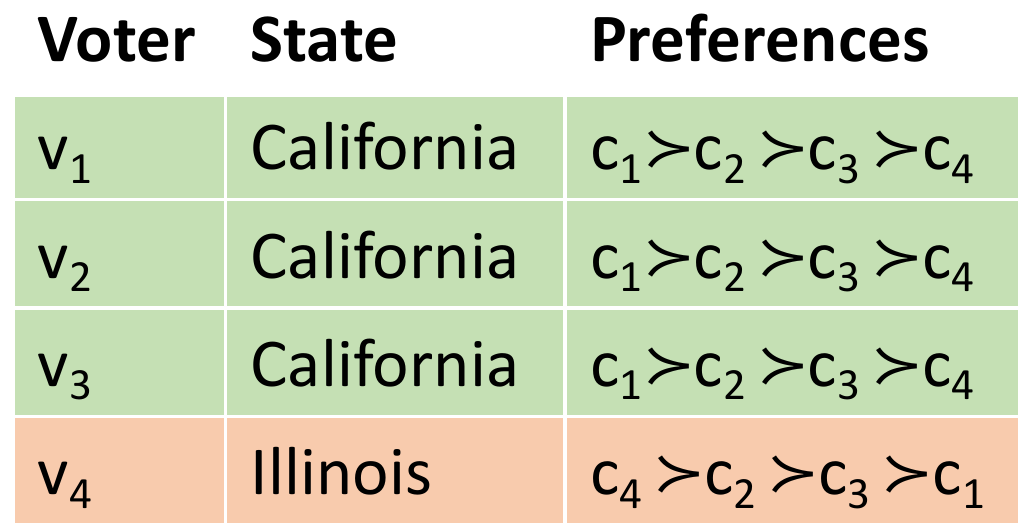}
  \caption{\textbf{voters}}
  \label{fig:example/voters}
\end{subfigure}
\caption{(a) Candidates with ``gender'' attribute (and their Borda scores) and (b) voters with ``state'' attribute. The winning committee (size $k$=2) for California and Illinois, states in the United States, is \{$c_1,c_2\}$ and \{$c_4,c_2\}$, respectively.}
\label{fig:example}
\end{figure}

\begin{example}
\label{eg:DiReMotivation}
Consider an election $E$ consisting of $m$ = 4 candidates (Figure~\ref{fig:example/candidates}) and $n$ = 4 voters giving ordered preference over $m$ candidates (Figure~\ref{fig:example/voters}) to select a committee of size $k$ = 2. Candidates and voters have one attribute each, namely gender and state, respectively. 
The $k$-Borda\footnote{The Borda rule associates score $m - i$ with the $i^{\text{th}}$ position, and $k$-Borda selects $k$ candidates with the highest Borda score.} winning committee computed for each voter population is $\{c_1$, $c_2\}$ for California and $\{c_4$, $c_2\}$ for Illinois.

Suppose that we impose a \emph{diversity constraint} that requires the committee 
to have at least one candidate of each gender, and a \emph{representation constraint} that requires the committee to have 
at least one candidate from the winning committee of each state.  Observe that the highest-scoring committee, which is also representative, consists of $\{c_1$, $c_2\}$ (score = 17), but this committee is not diverse, since both candidates are male. Further, the highest-scoring diverse committee consisting of $\{c_1$, $c_3\}$  (score = 13) is not representative because it does not include any winning candidates from Illinois, the smaller state. The highest-scoring diverse \textbf{and} representative committee is $\{c_2$, $c_3\}$ (score = 12).
\end{example}

This example illustrates the inevitable utility cost due to enforcing of additional constraints.

Note that, in contrast to prior work in computational social choice, we incorporate  voter attributes that are separate from candidate attributes. 
Also, our work is different from the notion of ``proportional representation'' \cite{sanchez2017proportional,brill2018multiwinner,monroe1995fully}, where the number of candidates selected in the committee from each group is proportional to the number of voters preferring that group, and from its variants such as ``fair'' representation \cite{koriyama2013optimal}. All these approaches dynamically divide the voters based on the cohesiveness of their preferences. 
Another related work, multi-attribute proportional representation \cite{lang2018multi}, couples candidate and voter attributes.  An important observation we make here is that, even if the attributes of the candidates and of the voters \emph{coincide}, it may still be important to treat them \emph{separately} in committee selection.  
This is because having a female candidate on the committee, for example, is different from having a candidate on the committee who is preferred by the female voters, and who themselves may or may not be female.



{\bf Contributions.} In this paper, we define a model that treats candidate and voter attributes separately during committee selection, and thus enables selection of the highest-scoring diverse \emph{and} representative committee. 
We show  NP-hardness of committee selection under our model for various settings, give results on inapproximability and parameterized complexity, and present a heuristic-based algorithm. Finally, we present an experimental evaluation using real and synthetic datasets, in which we show the efficiency of our algorithm, analyze the feasibility of committee selection and illustrate the utility trade-offs.

\section{Related Work}
\label{sec:RW}

Our work is at an intersection of multiple ideas, and hence, in this section, we briefly summarize the related work spread across different domains, some of which we already discussed in the previous section.

\paragraph{Fairness in Ranking and Set Selection.}
There is a growing understanding in the field of theoretical computer science about the possible presence of algorithmic bias in multiple domains \cite{baeza2016data, bellamy2018ai, celis2017ranking, danks2017algorithmic, hajian2016algorithmic, lambrecht2019algorithmic}, especially in variants of set selection problem \cite{stoyanovich2018online}. 
The study of fairness in ranking and set selection, closely related to the study of multiwinner elections, use constraints in algorithms to mitigate bias caused against historically disadvantaged groups. Stoyanovich \etal~\cite{stoyanovich2018online} use constraints in streaming set selection problem, and Yang and Stoyanovich~\cite{yang2017measuring} and Yang \etal~\cite{yang2019balanced} use constraints for ranked outputs. 
Kuhlman and Rundensteiner~\cite{Kuhlman2020rank} focus on fair rank aggregation and Bei~\etal \cite{bei2020candidate} use proportional fairness constraints. 
Our work adds to the research on the use of constraints to mitigate algorithmic bias.

\paragraph{Fairness in Participatory Budgeting.} Multiwinner elections are a special case of participatory budgeting, and fairness in the latter domain has also received particular attention. For example, projects (equivalent to candidates) are divided into groups and for fairness they consider lower and upper bounds on utility
achieved and the lower and upper bounds on cost of projects used in every group \cite{patel2020group}. Fluschnik \etal \cite{fluschnik2019fair} aim to achieve fairness among projects using their objective function. Next, Hershkowitz \etal \cite{hershkowitz2021district} have studied fairness from the utility received by the districts (equivalent to voters), Peters \etal \cite{peters2020proportional} define axioms for proportional representation of voters, and Lackner \etal \cite{lackner2021fairness} define fairness in long-term participatory budgeting from voters' perspective. However, we note that none of these work simultaneously consider fairness from the perspective of both, the projects and the districts. 

\paragraph{Two-sided Fairness.}
The need for fairness from the perspective of different stakeholders of a system is well-studied. 
For instance, Patro \etal \cite{patro2020fairrec}, Chakraborty \etal \cite{chakraborty2017fair}, and Suhr \etal \cite{suhr2019two} consider two-sided fairness in two-sided platforms\footnote{A two-sided platform is an intermediary economic platform having \emph{two distinct user groups} that provide each other with network benefits such that the decisions of each set of user group affects the outcomes of the other set \cite{rysman2009economics}. For example, credit cards market consists of  cardholders and merchants and health maintenance organizations consists of patients and doctors.} and Abdollahpouri\etal \cite{abdollahpouri2019multi} and Burke \etal \cite{burke2017multisided} shared desirable fairness properties for different categories of multi-sided platforms. However, this line of work focuses on multi-sided fairness in multi-sided platforms, which is technically different from an election. An election, roughly speaking, can be considered a ``one-sided platform'' consisting of more than one stakeholders as during an election, candidates do not make decisions that affect the voters. Hence, $\delta$-sided fairness in one-sided platform is also needed where $\delta$ is the number of distinct user-groups on the platform. More generally, $\delta$-sided fairness in $\eta$-sided platform warrants an analysis of $\delta \cdot \eta$ perspectives of fairness, i.e., the effect of fairness on each of the $\delta$ stakeholders for each of the $\eta$ fairness metrics being used. In elections, $\delta = 2$ (candidates and voters) and $\eta=1$ (voting). Additionally, Aziz \cite{aziz2021two} summarized a line of work related to diversity concerns in two-sided matching that focused on diversity with respect to one stakeholder only. 


\paragraph{Unconstrained Multiwinner Elections and Proportional Representation.}
The study of complexity of unconstrained multiwinner elections has received attention  \cite{faliszewski2017multiwinner}. Selecting a committee using Chamberlin-Courant (CC) \cite{chamberlin1983representative} rule is NP-hard \cite{procaccia2008complexity}, and approximation algorithms have resulted in the best known ratio of $1-\frac{1}{e}$
\cite{skowron2015achieving,lu2011budgeted}. Yang and Wang~\cite{yang2018parameterized} studied its parameterized complexity. 
Another commonly studied rule, Monroe \cite{monroe1995fully}, is also NP-hard \cite{betzler2013computation, elkind2017properties}. Sonar \etal \cite{sonar2020complexity} showed that even checking whether a given committee is optimal when using these two rules is hard. Finally, the hardness of problems involving restricted voter preferences and \csr have been studied   \cite{elkind2015structure,peters2017preferences} and so has the proportional representation in dynamic ranking \cite{israel2021dynamic}.

\paragraph{Constrained Multiwinner Elections.}
The study of complexity of using diversity constraints in elections 
and its complexity 
has also received particular attention. 
Goalbase score functions, which specify an arbitrary set of logic constraints and let the score capture the number of constraints satisfied, could be used to ensure diversity \cite{uckelman2009representing}. Using diversity constraints over multiple attributes in single-winner elections is NP-hard \cite{lang2018multi}. Also, using diversity constraints over multiple attributes in multiwinner elections is NP-hard, which has led to approximation algorithms and matching hardness of approximation results by Bredereck \etal \cite{bredereck2018multiwinner} and Celis \etal \cite{celis2017multiwinner}. Finally, due to the hardness of using diversity constraints over multiple attributes in approval-based multiwinner elections \cite{brams1990constrained}, these have been formalized as integer linear programs (ILP) \cite{potthoff1990use}. In contrast, Skowron \etal \cite{skowron2015achieving} showed that ILP-based algorithms fail in real world when using ranked voting-related proportional representation rules like Chamberlin-Courant and Monroe rules, even when there are no constraints.

Overall, the work by Bredereck \etal \cite{bredereck2018multiwinner}, Celis \etal \cite{celis2017multiwinner}, and  Lang and Skowron \cite{lang2018multi} is closest to ours. However, we differ as we: (i) consider elections with predefined voter populations under one or more attributes, (ii) delineate voter and candidate attributes even when they coincide, and (iii) consider representation \emph{and} diversity constraints. No previous work, to the best of our knowledge, has considered fairness from the perspective of voter attributes or has delineated candidate and voter attributes even when they coincide.



\section{Preliminaries and Notation}
\label{sec:prelim}


\paragraph{Multiwinner Elections.} Let $E = (C, V )$ be an election consisting of a candidate set $C = \{c_1,\dots,c_m\}$ and a voter set $V = \{v_1,\dots,v_n\}$, where each voter $v \in V$ has a preference list $\succ_{v}$ over $m$ candidates, ranking all of the candidates from the most to the least desired. $\pos_{v}(c)$ denotes the position of candidate $c \in C$ in the ranking of voter $v \in V$, where the most preferred candidate has position 1 and the least preferred has position $m$. 

Given an election $E = (C,V)$ and a positive integer $k \in [m]$ (for $k \in \mathbb{N}$, we write $[k] = \{1, \dots, k\}$), a multiwinner election 
selects a $k$-sized subset of candidates (or a committee) $W$ using a multiwinner voting rule $\mathtt{f}$ (discussed later) such that the score of the committee $\mathtt{f}(W)$ is the highest. Formally, given $E = (C,V)$ and $k$, $\mathtt{f}$ outputs the required committee $W$ of exactly $k$ candidates with the highest score. 
We assume ties are broken using a pre-decided priority order over all candidates.

\paragraph{Candidate Groups.} 
The candidates have $\mu$ attributes, $A_1 , . . . , A_\mu$, such that $\mu \in \mathbb{Z}$ and $\mu \geq 0$. Each attribute $A_i$, for all $i \in [\mu]$, partitions the candidates into $g_i \in [m]$ groups, $\group{i}{1} , . . . , \groupsub{i}{g} \subseteq C$. Formally, $\group{i}{j} \cap  \group{i}{j'} = \emptyset$, $\forall j,j' \in [g_i], j \ne j'$. For example, candidates in Figure~\ref{fig:example/candidates} have one attribute gender ($\mu$ = 1) with two disjoint groups, male and female ($g_1$ = 2). Overall, the set $\mathcal{G}$ of \emph{all} such arbitrary and potentially non-disjoint groups will be $\group{1}{1} , . . . , \groupsub{\mu}{g} \subseteq C$. Note that the number of groups a candidate belongs to is equal to the number of attributes $\mu$. 



\paragraph{Voter Populations.}
The voters have $\pi$ attributes, $A'_1 , . . . , A'_\pi$, such that $\pi \in \mathbb{Z}$ and $\pi \geq 0$. The voter attributes may be different from the candidate attributes. Each attribute $A'_i$, for all $i \in [\pi]$, partitions the voters into $p_i \in [n]$ populations, $\population{i}{1} , . . . , \populationsub{i}{p} \subseteq V$. Formally, $\population{i}{j} \cap  \population{i}{j'} = \emptyset$, $\forall j,j' \in [p_i], j \ne j'$. For example, voters in Figure~\ref{fig:example/voters} have one attribute state ($\pi$ = 1), which has two populations California and Illinois ($p_1$ = 2). Overall, the set $\mathcal{P}$ of \emph{all} such predefined and potentially non-disjoint populations will be $\population{1}{1} , . . . , \populationsub{\pi}{p} \subseteq V$. 

The number of populations a voter belongs to is equal to the number of attributes $\pi$. Additionally, we are given $W_P$, the winning committee of each population $P \in \mathcal{P}$. We note that a fine-grained accounting of representation is not possible in our model. This is because when a \csr such as Chamberlin-Courant rule is used to determine each population’s winning committee $W_P$, then a complete-ranking of each population’s collective preferences is not possible. Thus, we have design our model to only consider each population's winning committee $W_P$.


\paragraph{Multiwinner Voting Rules.} There are multiple types of multiwinner voting rules, also called committee selection rules. 
In this paper, we focus on committee selection rules $\mathtt{f}$ that are based on single-winner positional voting rules, and are
monotone and submodular ($\forall A \subseteq B, f(A) \leq f(B)$ and $f(B) \leq f(A) + f(B \setminus  A)$) \cite{bredereck2018multiwinner,celis2017multiwinner}. 


\begin{definition}
\label{def-CC} \textbf{Chamberlin–Courant (CC) rule:}
The CC rule \cite{chamberlin1983representative} associates each voter with a candidate in the committee who is their most preferred candidate in that committee. 
 The score of a committee is the sum of scores given by voters to their associated candidate. 
 Specifically, $\mathbf{\beta}$-CC uses Borda positional voting rule such that it assigns a score of $m-i$ to the $i^{\text{th}}$ ranked candidate who is their highest ranked candidate in the committee.

        
\end{definition}

\begin{definition}
\label{def-monroe} \textbf{Monroe rule:} The Monroe rule \cite{monroe1995fully} dynamically divides the $n$ voters into $\pi$ populations based on the cohesiveness of their preferences 
where $\pi$ = $k$ (assuming $k$ divides $n$).
Then, each subpopulation's most preferred candidate is selected into the $k$-sized committee. Formally, for each population, say $P \in \mathcal{P}$, select the candidate $c$  that has the highest score for that subpopulation: $\max_{c \in C}(\mathtt{f}_{P}(c))$. In other words, each candidate in the committee is represented by an equal number of voters.
\end{definition}

A special case of submodular functions are separable functions, which calculate the score of committee as follows: 
score of a committee $W$ is the sum of the scores of individual candidates in the committee. Formally, $\mathtt{f}$ is separable if it is submodular and $\mathtt{f}(W) = \sum_{c \in W}^{}\mathtt{f}(c)$ \cite{bredereck2018multiwinner}. 
Monotone and separable selection rules are natural and are considered good when the goal of an election is to shortlist a set of individually excellent candidates \cite{faliszewski2017multiwinner}:


\begin{definition}
\label{def-kborda} \textbf{$k$-Borda rule} The $k$-Borda rule
outputs committees of $k$ candidates with the highest Borda scores.
\end{definition}

\section{DiRe Committee Model}
\label{sec:DiReModel}

In this section, we formally define a model to select a diverse \emph{and} representative committee, namely $DiRe$ committee, and show its generality.

\begin{definition}
\label{def-UCWD}
\textbf{Unconstrained Committee Winner Determination (\UCWD):} We are given a set $C$ of $m$ candidates, a set $V$ of $n$ voters such that each voter $v \in V$ has a preference list $\succ_{v}$ over $m$ candidates, a \csr $\mathtt{f}$, and a committee size $k \in [m]$.
Let $\mathcal{W}$ denote the family of all size-$k$ committees. The goal of \UCWD is to select a committee $W \in \mathcal{W}$ that maximizes $\mathtt{f}(W)$. 
\end{definition}

We now discuss the diversity and representation constraints. The lowest possible value that these constraints can take is 1, which replicates real-world scenarios. For instance, the United Nations charter guarantees at least one representative to each member country in the United Nations General Assembly, independent of the country's population. Similarly, each state of the United States of America is guaranteed at least three representatives in the US House of Representatives. Hence, from fairness perspective, each candidate group and voter population deserves at least one candidate in the committee. Theoretically, all results in this paper hold even if the lowest possible value that the constraints can take is 0.
\paragraph{Diversity Constraints,} denoted by $l^D_G \in [1,$ $\min(k, |G|)]$ 
for each candidate group $G \in \mathcal{G}$, enforces at least $l^D_G$ candidates from the group $G$ to be in the committee $W$. Formally, for all $G \in \mathcal{G}$, $|G \cap W|\geq l^D_G$. 
We note that we do not propose to use the upper bounds as it induces quota system, which is not desirable from social choice perspective. 

\paragraph{Representation Constraints,} denoted by $l^R_P \in [1,k]$ for each voter population $P \in \mathcal{P}$, enforces at least $l^R_P$ candidates from the population $P$'s committee $W_P$ to be in the committee $W$. Formally, for all $P \in \mathcal{P}$, $|W_P \cap W|\geq l^R_P$. We again do not propose to use the upper bounds as it induces the undesirable quota system.  

\begin{definition}
\label{def-DiReCF}
\textbf{($\mu,\pi$)-DiRe Committee Feasibility (\DiReCF):} We are given an instance of election $E=(C,V)$,  
a committee size $k \in [m]$, a set of candidate groups $\mathcal{G}$ over $\mu$ attributes and their diversity constraints $l^D_G$ for all $G \in \mathcal{G}$, and a set of voter populations $\mathcal{P}$ over $\pi$ attributes and their representation constraints $l^R_P$ and the winning committees $W_P$ for all $P \in \mathcal{P}$.
Let $\mathcal{W}$ denote the family of all size-$k$ committees. 
The goal of \DiReCF is to select committees $W \in \mathcal{W}$ that satisfy the diversity and representation constraints such that $|G\cap W|\geq l^D_G$ for all $G\in \mathcal{G}$ and $|W_P\cap W|\geq l^R_P$ for all $P\in \mathcal{P}$. All such committees that satisfy the constraints are called DiRe committees.
\end{definition}

If a \csr $\mathtt{f}$ is also an input to the feasibility problem, we get the \DiReCWD problem:

\begin{definition}
\label{def-DiReCWD}
\textbf{($\mu,\pi$, $\mathtt{f}$)-DiRe Committee Winner Determination (\DiReCWD):} 
Given an instance of \DiReCF and a \csr $\mathtt{f}$, let $\mathcal{W}$ denote the family of all size-$k$ committees, then the goal of \DiReCWD is to select a committee $W \in \mathcal{W}$ that maximizes the $\mathtt{f}(W)$ among all $DiRe$ committees.
\end{definition}

We note that we denote the possible values that $\mu$ and $\pi$ can take using parenthesis. For example, `\DiReCWDp{$\leq2$}{0}' implies that we are specifying a setting $\forall \mu \in \mathbb{Z} : 0 \leq \mu \leq 2$. We use
the same notation for ‘$\geq$’ such that `\DiReCWDp{$\geq3$}{0}' implies that we are specifying a setting $\forall \mu \in \mathbb{Z} :  \mu \geq 3$ . We use the same notation for $\pi$.

\begin{observation}\label{obs:hardnessrelations}
\DiReCWD is a generalized version of \DiReCF and \UCWD. Hence, if \DiReCWD is polynomial time computable, then so are the corresponding \UCWD and \DiReCF problems. 
If either \UCWD is NP-hard or \DiReCF is NP-hard, then \DiReCWD is NP-hard.
\end{observation}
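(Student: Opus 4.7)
The plan is to establish the three claims in order, by constructing explicit polynomial-time reductions showing that both \UCWD and \DiReCF are special cases of \DiReCWD.

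First, to show that \DiReCWD generalizes \UCWD, I would take an arbitrary instance of \UCWD, namely $(C, V, \mathtt{f}, k)$, and construct a \DiReCWD instance with the same $C$, $V$, $\mathtt{f}$, $k$, and with $\mu = 0$ and $\pi = 0$. Since there are no attributes, the sets $\mathcal{G}$ and $\mathcal{P}$ are empty, so the diversity and representation constraint families are vacuous. Hence every size-$k$ committee qualifies as a DiRe committee, and the goal of maximizing $\mathtt{f}(W)$ over DiRe committees coincides exactly with the goal of \UCWD. This is a polynomial-time reduction (in fact, the identity on the input), so a polynomial-time algorithm for \DiReCWD immediately yields one for \UCWD.

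Next, to show that \DiReCWD generalizes \DiReCF, I would take an instance of \DiReCF, namely $(E, k, \mathcal{G}, \{l^D_G\}, \mathcal{P}, \{l^R_P\}, \{W_P\})$, and build a \DiReCWD instance by attaching any fixed monotone submodular \csr such as $k$-Borda. The resulting \DiReCWD algorithm, when run on this instance, either returns a committee $W$ (which by Definition~\ref{def-DiReCWD} must satisfy all diversity and representation constraints, hence is a witness that the \DiReCF instance is a YES-instance) or reports that no feasible committee exists (certifying a NO-instance). The construction is again polynomial, so a polynomial-time algorithm for \DiReCWD solves \DiReCF in polynomial time. Combining the two reductions establishes the generalization claim and the polynomial-time transfer.

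Finally, the NP-hardness implication is simply the contrapositive of the polynomial-time transfer. If \UCWD is \nphard, then by the first reduction \DiReCWD cannot be in P unless P $=$ NP, so \DiReCWD is \nphard; the same reasoning with the second reduction handles the case where \DiReCF is \nphard. I would note that since both reductions are many-one and polynomial, the hardness transfer is under Karp reductions, which is the standard notion used throughout the paper.

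I do not anticipate any serious obstacle: the result is essentially a structural observation, and the only thing to be careful about is that the reduction from \DiReCF to \DiReCWD requires attaching \emph{some} valid \csr, and that the output of \DiReCWD (a committee or an infeasibility report) is sufficient to decide the feasibility question. Both are straightforward given Definitions~\ref{def-DiReCF} and~\ref{def-DiReCWD}.
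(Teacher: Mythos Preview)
Your proposal is correct. The paper states Observation~\ref{obs:hardnessrelations} without any proof---it is treated as self-evident following the definitions---so your explicit reductions (setting $\mu=\pi=0$ to recover \UCWD, and attaching an arbitrary polynomial-time \csr such as $k$-Borda to recover \DiReCF) constitute exactly the natural justification the paper leaves implicit.
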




\subsection{\texorpdfstring{\DiReCWD}{Lg} and Related Models}
\label{sec:general}

Our model provides the flexibility to specify the diversity and representation constraints and to select the voting rule. Thus, 
in this section we define the diverse committee problem \cite{bredereck2018multiwinner,celis2017multiwinner}  and the apportionment problem \cite{brill2018multiwinner,hodge2018mathematics}  as special cases of \DiReCWD.

\paragraph{\DiReCWDp{$\mu$}{0} and Diverse Committee Problem.}
We define the diverse committee problem in our model \cite{bredereck2018multiwinner,celis2017multiwinner}: In the diverse committee problem, we are given an instance of \UCWD that consists of 
a set of candidate groups $\mathcal{G}$ and the corresponding diversity constraints, lower bound $l^D_G$ and upper bound $u^D_G$, for all $G \in \mathcal{G}$. Let $\mathcal{W}$ denote the family of all size-$k$ committees. The goal of the diverse committee problem is to select a committee $W \in \mathcal{W}$ that maximizes the $\mathtt{f}(W)$ among the committees that satisfy the constraints.

It is clear that \DiReCWDp{$\mu$}{0}, i.e., without the presence of any voter attributes, is equivalent to the diverse committee problem. As we do not use upper bounds, our model is generalizable when the upper bound $u^D_G$ in the diverse committee model is equal to the size of group $G$ for all $G \in \mathcal{G}$ and the minimum value that the lower bound can take is 1 for all $G \in \mathcal{G}$. This is in line with the approach used in Theorem 6 of Celis \etal \cite{celis2017multiwinner}. Formally, $u^D_G=|G|$ and $l^D_G\geq1$ for all $G \in \mathcal{G}$. 

\paragraph{\DiReCWDp{0}{1} and Apportionment Problem.}
We define the apportionment problem in our model \cite{brill2018multiwinner}: In the apportionment problem, we are given an instance of \UCWD that consists of
a set of disjoint voter populations $\mathcal{P}$ over one attribute and winning committees $W_P$ for all $P \in \mathcal{P}$. Let $\mathcal{W}$ denote the family of all size-$k$ committees. The goal of the apportionment problem is to select a committee $W \in \mathcal{W}$ that maximizes the $\mathtt{f}(W)$ among all the committees that satisfy the lower quota, i.e., $\forall P \in \mathcal{P}$, $|W_P \cap W| \geq \frac{|P|}{n}\cdot k$. 

It is easy to see that \DiReCWDp{0}{1}, which consists of zero candidate attributes and one voter attribute, is same as the apportionment problem if we set the representation constraint of each population to be equal to the lower quota of the apportionment problem. Formally, $\forall P \in \mathcal{P}$, $l^R_P$ = $\left\lfloor\frac{|P|}{n}\cdot k\right\rfloor$, realistically assuming  that $\forall P \in \mathcal{P}, \frac{|P|}{n} \geq \frac{1}{k}$.

Finally, we note that our model can be adopted to accept approval votes as an input and thus if each population is completely cohesive within itself, then the representation constraints can be set to formulate known representation methods like proportional justified representation \cite{sanchez2017proportional} and extended justified representation \cite{aziz2017justified} as \DiReCWD. Though we note that such reformulations may not be as straightforward as the discussed reformulations.

\section{Complexity Results}
\label{sec:compresults}



\begin{table}[t]
\centering

\begin{tabular}{|c||c|}
\hline
Instance of \DiReCWD & Computational Complexity\\

\hline\hline
($\leq2$, $0$, separable)-DRCWD  & P (Lem.~\ref{lemma:DiReCFinP})\\
\hline
($\geq3$, 0, separable)-DRCWD & \nphard (Thm.~\ref{thm:DiReCWDdiv30odd}, Thm.~\ref{thm:DiReCWDdiv30even})\\
\hline
($\geq0$, $\geq1$, separable)-DRCWD & \nphard (Thm.~\ref{lemma:DiReCWDrep01}, Cor.~\ref{lemma:DiReCWDrep}) \\
\hline
($\geq0$, $\geq0$, submodular)-DRCWD & \nphard (Thm.~\ref{lemma:DiReCWDsubmod}, Cor.~\ref{cor:DiReCWDsubmod})\\
\hline
\end{tabular}
\vspace{0.1cm}
\caption{A summary of complexity of \DiReCWD (Theorem~\ref{thm:DiReClass}, Corollary~\ref{cor:DiReClass}). The value in brackets for $\mu$ and $\pi$ denote that the results hold for all non-negative integers $\mu$ and all non-negative integers $\pi$ that satisfy the condition stated in the brackets. The results are under the assumption P $\neq$ NP. `Lem.' denotes Lemma. `Thm.' denotes Theorem. `Cor.' denotes Corollary.
}
\label{tab:compResults}
\end{table}

In this section, we give a classification of the computational complexity\footnote{The hardness, inapproximability, and parameterized complexity results throughout the paper are under the assumption P $\neq$ NP.} of the \DiReCWDp{$\mu$}{$\pi$} 
problem under different settings.
Finding a committee using a submodular scoring function like the utilitarian version of Chamberlin-Courant rule is known to be NP-hard \cite{procaccia2008complexity} 
and selecting a diverse committee when a candidate belongs to three groups is also known to be NP-hard \cite{bredereck2018multiwinner,celis2017multiwinner}. However, the proofs of these hardness results are fragmented over several papers and the proofs use reductions from several well-known NP-hard problems. For instance, the proof of hardness for the use of Chamberlin-Courant uses a reduction from exact 3-cover \cite{procaccia2008complexity} and the proof of hardness for computing a diverse committee uses a reduction from 3-dimensional matching \cite{bredereck2018multiwinner} and 3-hypergraph matching \cite{celis2017multiwinner}. Moreover, we are the first ones to introduce representation constraints and hardness due to its use is unknown. Hence, in this section, we provide a complete classification of the \DiReCWD problem by giving a reduction from a \emph{single} well-known NP-hard problem, namely, the vertex cover problem, inspired by the similar approach used in \cite{chakraborty2021classifying}.

Finally, we note that as the following classification holds for \emph{every} integer $\mu \geq 0$ (specifically, every whole number as $\mu$ can not be negative) and \emph{every} integer $\pi\geq0$, our reductions are designed for the same range of values.


\begin{theorem}\label{thm:DiReClass}
Let $\mu,$ $\pi \in \mathbb{Z}$ : $\mu,$ $\pi\geq 0$ and $\mathtt{f}$ be a \csr, then \DiReCWDp{$\mu$}{$\pi$} is NP-hard.
\end{theorem}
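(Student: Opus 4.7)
The proof plan is a case analysis on $(\mu, \pi)$ and the structure of $\mathtt{f}$ that appeals to a collection of dedicated NP-hardness reductions, each of which is given in the theorems following this one. Specifically, Table~\ref{tab:compResults} partitions the parameter space into three hard regimes -- $(\mu \geq 3, \pi = 0, \text{separable})$, $(\mu \geq 0, \pi \geq 1, \text{separable})$, and $(\mu \geq 0, \pi \geq 0, \text{submodular})$ -- and a single polynomial regime, $(\mu \leq 2, \pi = 0, \text{separable})$, handled by Lemma~\ref{lemma:DiReCFinP}. The present theorem asserts hardness everywhere outside that P-regime, and its proof is the concatenation of Theorems~\ref{thm:DiReCWDdiv30odd}, \ref{thm:DiReCWDdiv30even}, \ref{lemma:DiReCWDrep01}, and~\ref{lemma:DiReCWDsubmod}, together with Corollaries~\ref{lemma:DiReCWDrep} and~\ref{cor:DiReCWDsubmod}; Corollary~\ref{cor:DiReClass} then records the sharp dichotomy.

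The unifying backbone of every reduction is from the vertex cover problem. Given an instance $(G = (\mathcal{V}, \mathcal{E}), t)$, I plan to identify candidates with vertices of $G$, set $k = t$ (plus a small number of mandatory dummy slots), and encode incidence structure through either candidate group membership (for diversity-driven hardness) or through population winning committees $W_P$ (for representation-driven hardness), so that a $k$-committee satisfies the DiRe constraints if and only if the corresponding vertex set covers $\mathcal{E}$. Using vertex cover uniformly, rather than the heterogeneous sources (exact 3-cover, 3-dimensional matching, 3-hypergraph matching) scattered across the existing literature, is the key methodological choice that keeps the overall classification clean.

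For case (i), $(\mu \geq 3, \pi = 0, \text{separable})$, each edge $e = \{u, v\}$ is encoded across three attributes so that excluding both $c_u$ and $c_v$ violates some diversity constraint while excluding at most one remains feasible. Candidate Borda-style scores are levelled so that only constraint satisfaction, not score optimization, drives the answer; the separation into the two theorems reflects a parity-dependent gadget needed when extending the base $\mu = 3$ construction to arbitrarily many attributes. For case (ii), $(\mu \geq 0, \pi \geq 1, \text{separable})$, each edge $e = \{u, v\}$ is encoded by a voter population $P_e$ with $W_{P_e} = \{c_u, c_v\}$ and $l^R_{P_e} = 1$, so that representation is equivalent to edge coverage; Cor.~\ref{lemma:DiReCWDrep} then lifts this from $\pi = 1$ to arbitrary $\pi \geq 1$ by padding with non-informative voter attributes. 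For case (iii), $(\mu \geq 0, \pi \geq 0, \text{submodular})$, hardness is inherited from the known NP-hardness of unconstrained $\beta$-CC~\cite{procaccia2008complexity}, re-expressed from a vertex cover source to preserve the uniform style of the paper.

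I expect case (i) to be the main obstacle: designing a single group-construction gadget that works for every $\mu \geq 3$ of either parity while keeping $\mathtt{f}$ separable is delicate, which is presumably why the authors split it into two theorems indexed by parity of $\mu$. Cases (ii) and (iii) are more routine once the uniform vertex-cover template is in place; the principal care is ensuring that the enforced lower bounds $l^D_G, l^R_P \geq 1$ do not trivialise the reductions, which I plan to handle by introducing a small number of dummy candidates with dominant scores that are forced into every optimal committee without affecting correctness of the forward or backward direction of the reduction.
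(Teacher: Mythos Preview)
Your proposal is correct and mirrors the paper's approach almost exactly: the paper proves Theorem~\ref{thm:DiReClass} precisely by the case split recorded in Corollary~\ref{cor:DiReClass}, with each hard case established via a reduction from vertex cover on 3-regular graphs (Theorems~\ref{thm:DiReCWDdiv30odd}--\ref{lemma:DiReCWDsubmod} and Corollaries~\ref{lemma:DiReCWDrep}, \ref{cor:DiReCWDsubmod}), and the parity split in case~(i) is indeed driven by the gadget that pads $\mu$ from $3$ to an arbitrary value. The one place your sketch is slightly thinner than the paper is case~(ii): since $|W_P|=k$ in the model, you cannot set $W_{P_e}=\{c_u,c_v\}$ directly; the paper pads each population's winning committee with $k-2$ population-specific dummy candidates and then argues that any dummy in the final committee can be swapped for a vertex-candidate without loss---your closing remark about dummy candidates is on the right track, but the dummies are needed to fill out $W_P$, not to handle the lower-bound condition.
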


\begin{corollary}\label{cor:DiReClass}\textbf{Classification of Complexity of \DiReCWD.}
\begin{enumerate}
    \item If $\forall \mu \in \mathbb{Z} : \mu\geq0$, $\forall \pi \in \mathbb{Z} : \pi\geq0$, and $\mathtt{f}$ is a monotone, \textbf{submodular} function, then \DiReCWD is NP-hard. 
    \item If $\forall \mu \in [0$, $2]$, $\pi=0$, and $\mathtt{f}$ is a monotone, \textbf{separable} function, then \DiReCWD is in P. 
    \item If $\forall \mu \in \mathbb{Z} : \mu\geq3$, $\pi=0$, and $\mathtt{f}$ is a monotone, \textbf{separable} function, then \DiReCWD is NP-hard.
    \item If $\forall \mu \in \mathbb{Z} : \mu\geq0$, $\forall \pi \in \mathbb{Z} : \pi\geq1$, and $\mathtt{f}$ is a monotone, \textbf{separable} function, then \DiReCWD is NP-hard.
\end{enumerate}

\end{corollary}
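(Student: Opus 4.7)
The plan is to prove the four-part classification by combining a single-source reduction from Vertex Cover for the three hardness cases with a matroid/flow argument for the polynomial-time case. The unifying idea is to make vertices of the input graph become candidates, let the target cover size $k'$ (with padding) become the committee size $k$, and encode each edge of the graph either by candidate attributes (part 3), by voter populations together with their representation constraints (part 4), or implicitly by a submodular scoring rule (part 1).

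I would first dispense with part 1 (submodular, arbitrary $\mu,\pi\geq 0$). Since unconstrained Chamberlin--Courant winner determination is already \nphard~\cite{procaccia2008complexity}, Observation~\ref{obs:hardnessrelations} promotes the hardness to \DiReCWD for any $\mu,\pi\geq 0$; one only has to pad with trivial attribute partitions so that no constraint is actually active. For part 3 (separable, $\mu\geq 3$, $\pi=0$) I would start from Vertex Cover on a cubic graph, where the problem remains \nphard, properly edge-colour it in three colours, and introduce one candidate attribute per colour. Attribute $i$ groups candidates by the colour-$i$ edge that they are incident to, so each group contains exactly the two endpoints of a given colour-$i$ edge. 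Setting all diversity lower bounds $l^D_G=1$ then enforces that, simultaneously across the three colour classes, every edge of $G$ has an endpoint in the committee, i.e., the committee is a vertex cover. Extra attributes for $\mu>3$ are padded trivially, and the separable Borda-type scores are chosen to be identical on vertex-candidates and strictly lower on any auxiliary padding candidates, so that the scoring objective does not interfere with feasibility.

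For part 4 (separable, $\pi\geq 1$) I would use vertex-candidates and introduce one voter population $P_e$ per edge $e=(u,v)$ under a single voter attribute. Because $W_P$ is part of the input in the definition of \DiReCWD, I can simply declare $W_{P_e}=\{u,v\}\cup D_e$ where $D_e$ is a set of $k-2$ dummy candidates unique to $e$, and set $l^R_{P_e}=1$. With the separable scores arranged so that each $D_e$-dummy is strictly worse than every vertex-candidate, an optimal feasible $k$-committee must select at least one of $\{u,v\}$ to satisfy the representation constraint on $P_e$, so feasible optimal committees are in bijection with size-$k'$ vertex covers; for $\pi\geq 2$ I would pad voter attributes trivially. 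Finally, for part 2 (separable, $\mu\in[0,2]$, $\pi=0$) I would model \DiReCWD as a maximum-weight common-independent-set problem in the intersection of the uniform matroid of rank $k$ and one partition matroid per attribute, with the diversity lower bounds enforced by complementing the matroid (or, equivalently, as a min-cost flow with lower and upper arc bounds on a bipartite network encoding the attribute partition). For $\mu\leq 1$ a direct greedy suffices, while for $\mu=2$ Edmonds' weighted matroid-intersection algorithm yields a polynomial-time exact solution under the linear separable objective.

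The main obstacle I anticipate is calibrating the positional scores in parts 3 and 4 so that the separable objective strictly rewards vertex-cover structure over ``cheating'' solutions that insert dummy candidates or that waste committee slots on redundantly covering edges. Because the scores have to arise from a valid election (each voter must submit a complete ranking), the required numerical margin between vertex-candidates and dummy candidates must be realised through the positional-voting formula rather than prescribed directly; ensuring this uniformly across all $(\mu,\pi)$ settings---while keeping the reduction polynomial in the size of the Vertex Cover instance---is the delicate engineering step, and is also what lets a \emph{single} reduction template service parts 1, 3, and 4 of the classification.
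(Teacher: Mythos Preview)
Your plan is essentially correct and aligns closely with the paper's arguments, though the paper insists on deriving all four parts from fresh Vertex Cover reductions rather than citing prior hardness. Thus for part~1 it reproves CC hardness via VC in Theorem~\ref{lemma:DiReCWDsubmod} (vertices become candidates, each edge becomes a voter ranking its endpoints first, and a scoring vector with $s_1=s_2$ makes zero misrepresentation equivalent to a vertex cover), whereas you invoke~\cite{procaccia2008complexity} and Observation~\ref{obs:hardnessrelations}; your route is shorter and equally valid. For part~4 your construction is the same as Theorem~\ref{lemma:DiReCWDrep01}: one population per edge, $W_P$ containing the two endpoints plus edge-private dummies, representation constraint~$1$. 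The paper carries out the score engineering you anticipate by building explicit voter rankings so that $W_P$ really is the separable rule's winning committee for population~$P$, rather than simply declaring it.

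For part~3 your edge-colouring formulation is the clean version of what the paper does in Theorems~\ref{thm:DiReCWDdiv30odd}--\ref{thm:DiReCWDdiv30even}, but mind one gap: a cubic graph need not be $3$-edge-colourable (Petersen), so you must start from a subclass where VC stays \nphard and a $3$-edge-colouring is efficiently computable---e.g., $2$-connected planar cubic graphs via Tait's equivalence with the Four Colour Theorem. The paper does not name the colouring explicitly, but the requirement that each attribute partition all candidates forces it. Where you pad trivially for $\mu>3$, the paper instead builds elaborate blocks of $2\mu-1$ dummy candidates per vertex so that every one of the $\mu$ attributes nontrivially partitions the full candidate set; this is a self-imposed realism stipulation, not needed for bare hardness, and it is why the paper splits into separate odd/even-$\mu$ constructions.

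For part~2 your matroid phrasing as written intersects three matroids (uniform plus two partition), which Edmonds' algorithm does not cover. The repair is that after complementing to $C\setminus W$, the size requirement $|C\setminus W|=m-k$ is just the cardinality parameter in weighted $2$-matroid intersection over the two partition matroids induced by the attribute capacities $|G|-l^D_G$, so Edmonds does apply. Your min-cost-flow alternative is correct as stated and is in fact the route taken by Celis et al., which the paper simply cites (Lemma~\ref{lemma:DiReCFinP}) rather than reproving.
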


\subsection{Tractable Case}

\begin{theorem}\label{them:DCFinP}[Theorem 21, Corollary 22 in full-version of Celis \etal \cite{celis2017multiwinner}]
The diverse committee feasibility problem can be solved in polynomial time when $\mu$ = 2.

\end{theorem}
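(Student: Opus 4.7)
The plan is to exploit the bipartite product structure that arises specifically when $\mu = 2$. With two candidate attributes, every candidate is characterized by an ordered pair $(i,j)$, where $i \in [g_1]$ identifies their group under attribute 1 and $j \in [g_2]$ identifies their group under attribute 2. Writing $n_{ij} = |\group{1}{i} \cap \group{2}{j}|$ for the number of candidates in each ``cell'', the feasibility question depends only on the non-negative integer counts $x_{ij} = |W \cap \group{1}{i} \cap \group{2}{j}|$, because a committee is specified up to a free choice of representatives within each cell. The \DiReCFp{2}{0} feasibility question thus reduces to: do there exist integers $\{x_{ij}\}$ with $0 \leq x_{ij} \leq n_{ij}$, $\sum_{i,j} x_{ij} = k$, $\sum_{j \in [g_2]} x_{ij} \geq l^D_{\group{1}{i}}$ for every $i$, and $\sum_{i \in [g_1]} x_{ij} \geq l^D_{\group{2}{j}}$ for every $j$?

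Second, I would encode this system as a flow problem with lower bounds on a bipartite four-layer network: a source $s$, a layer of nodes $R_1, \dots, R_{g_1}$ for the attribute-1 groups, a layer of nodes $S_1, \dots, S_{g_2}$ for the attribute-2 groups, and a sink $t$. Place a lower bound of $l^D_{\group{1}{i}}$ and an upper bound of $k$ on each edge $s \to R_i$; a capacity $n_{ij}$ on each cross edge $R_i \to S_j$; and a lower bound of $l^D_{\group{2}{j}}$ and an upper bound of $k$ on each edge $S_j \to t$. A flow of value $k$ from $s$ to $t$ is, by construction, in one-to-one correspondence with an integer solution $\{x_{ij}\}$ of the system above. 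The standard reformulation of lower-bounded flow into an unconstrained max-flow instance yields a polynomial-time feasibility test, and total unimodularity of the underlying network matrix guarantees an integer optimum whenever a fractional one exists. Given the values $x_{ij}$, picking any $x_{ij}$ candidates from cell $(i,j)$ completes the construction of a \DiReCF-feasible committee.

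The main conceptual step is the recognition that $\mu = 2$ makes the constraints exactly those of a transportation polytope, which is integral; everything after that is routine. The reason the same strategy collapses for $\mu \geq 3$, and must be replaced by the reductions used in Theorem~\ref{thm:DiReCWDdiv30odd} and Theorem~\ref{thm:DiReCWDdiv30even}, is that the cells become indexed by triples, giving a three-dimensional transportation polytope whose integer feasibility already encodes 3-dimensional matching and is therefore no longer solvable by network flow.
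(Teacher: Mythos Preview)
Your proof is correct. Note, however, that the paper itself does not supply a proof of this theorem; it simply records the statement and attributes it to Theorem~21 and Corollary~22 of Celis \etal\ \cite{celis2017multiwinner}, invoking their algorithm again in the proof of Lemma~\ref{lemma:DiReCFinP}. The argument in that reference is essentially the one you give: with two attributes, each candidate sits in a single cell of a $g_1 \times g_2$ grid, the constraint matrix is that of a bipartite transportation system and hence totally unimodular, and feasibility reduces to a polynomial-time network-flow computation with lower bounds. So your route and the cited one coincide in spirit.

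One minor remark. The ``diverse committee feasibility problem'' as defined in Section~\ref{sec:general} carries both lower bounds $l^D_G$ and upper bounds $u^D_G$, whereas you phrase the reduction for \DiReCFp{2}{0}, which has only lower bounds. Your network already accommodates the general case: replace the upper bound of $k$ on the edge $s \to R_i$ by $\min(k, u^D_{\group{1}{i}})$, and likewise on $S_j \to t$. The integrality argument is unaffected, and this is precisely the setting in which the paper later applies the cited algorithm (taking $u^D_G = |G|$ to recover the lower-bound-only model).
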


Without loss of generality (W.l.o.g.), the above theorem holds when it is assumed that $\mu=2$. Hence, it holds for all $\mu \in \mathbb{Z}$ : $0 \leq \mu \leq$  2.
Therefore, based on the relationship between \DiReCWDp{$\mu$}{0} and Diverse Committee Problem (Section~\ref{sec:general}), we prove the following lemma, which in turn proves the statement in Corollary~\ref{cor:DiReClass}(2):
\begin{lemma}\label{lemma:DiReCFinP}
If $ \mu \in [0$, $2]$, $\pi=0$, and $\mathtt{f}$ is a monotone, \textbf{separable} function, then \DiReCWD is in P.
\end{lemma}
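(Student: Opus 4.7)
The plan is to leverage Celis et al.'s feasibility result (Theorem~\ref{them:DCFinP}) together with the separability of $\mathtt{f}$, reducing \DiReCWDpse{$\mu$}{0} for $\mu \in [0,2]$ to a tractable combinatorial optimization problem. Since $\pi = 0$, the representation constraints are vacuous and the winning committees $W_P$ play no role, so by the correspondence spelled out in Section~\ref{sec:general} every instance of \DiReCWDpse{$\mu$}{0} is exactly a Diverse Committee Winner Determination instance with a monotone separable scoring rule. Separability lets us write $\mathtt{f}(W) = \sum_{c \in W} \mathtt{f}(c)$, so the problem is to maximize a linear function over the family of size-$k$ subsets of $C$ satisfying the lower-bound quotas $|G \cap W| \geq l^D_G$ for every $G \in \mathcal{G}$.

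I would split the argument by $\mu$. For $\mu = 0$ the problem collapses to plain \UCWD with a separable rule: sort candidates by $\mathtt{f}(c)$, output the top $k$, done in $O(m \log m)$. For $\mu = 1$ the groups form a single partition of $C$; the feasible family is the base family of a partition matroid truncated to size $k$, and the weighted greedy algorithm solves it in polynomial time (take the $l^D_G$ top-scoring candidates from each group $G$, then fill the remaining $k - \sum_G l^D_G$ seats with the highest-scoring leftovers, after a trivial feasibility check). For $\mu = 2$ the feasible family is the intersection of two partition matroids (one per attribute), so weighted matroid intersection applies and yields a polynomial-time optimum. Equivalently, the natural ILP
\begin{equation*}
\max \sum_{c \in C} \mathtt{f}(c) \, x_c \quad \text{s.t.} \quad \sum_c x_c = k, \ \sum_{c \in G} x_c \geq l^D_G \ \forall G \in \mathcal{G}, \ x \in \{0,1\}^m
\end{equation*}
has a totally unimodular constraint matrix when $\mu \leq 2$ (the group-incidence block is the vertex-edge incidence matrix of a bipartite graph), so its LP relaxation has integer optimal vertices and can be solved in polynomial time.

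The only place where some care is required is reconciling the "at least one candidate per group" convention used throughout the paper with possible empty groups or over-constrained systems; this is exactly where Theorem~\ref{them:DCFinP} already gives us a polynomial-time feasibility oracle, and combining that oracle with the weighted matroid-intersection routine (or solving the LP relaxation once and rounding trivially because of total unimodularity) closes the gap between feasibility and optimization. I expect the main obstacle to be merely bookkeeping: verifying that the diversity-constraint polytope really does coincide with the matroid-intersection polytope for $\mu = 2$ and handling edge cases where $\sum_G l^D_G$ is close to $k$. Nothing beyond standard polyhedral combinatorics is needed, so the lemma follows from assembling these three cases under one uniform polynomial-time procedure.
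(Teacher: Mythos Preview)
Your proof is correct and takes a more explicit, self-contained route than the paper's. The paper's argument is essentially a pointer: it observes that with $\pi=0$ and separable $\mathtt{f}$ the problem coincides with diverse committee winner determination, then for $\mu=2$ defers entirely to the algorithm of Theorem~21 in Celis \etal\ \cite{celis2017multiwinner} (with upper bounds set to $|G|$), and for $\mu=1$ says only ``select the top $l_G^D$ scoring candidates from each group'' without spelling out how to fill the remaining $k-\sum_G l_G^D$ seats. Your case split is the same, but you supply the machinery yourself---greedy for $\mu\le 1$, weighted matroid intersection / total unimodularity for $\mu=2$---which buys independence from the cited black box and handles the \emph{optimization} (not merely feasibility) directly, whereas the paper leans on the informal claim that for separable $\mathtt{f}$ feasibility-in-P entails optimization-in-P.

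One technical wrinkle worth tightening: as stated, the feasible family for $\mu\in\{1,2\}$ is \emph{not} literally the base family of a (truncated) partition matroid or an intersection thereof, because partition matroids impose \emph{upper} bounds while your constraints are \emph{lower} bounds. The standard complementation $W\mapsto C\setminus W$ turns $|W\cap G|\ge l_G^D$ into $|(C\setminus W)\cap G|\le |G|-l_G^D$ and reduces the task to a minimum-weight common independent set of size $m-k$ in two genuine partition matroids. Equivalently, your bipartite formulation with degree lower bounds and a global cardinality constraint is a min-cost flow instance, so integrality of the LP follows without having to verify TU of the matrix augmented by the all-ones row (the bipartite incidence block is TU, but appending the cardinality row is an extra step you have not justified). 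Either route closes the argument cleanly.
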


\begin{proof}
When $\pi$=0, there are no voter attributes or representation constraints, and hence, the \DiReCWDp{$\mu$}{$0$} problem is equivalent to the diverse committee problem. Moreover, when  $\mathtt{f}$ is a monotone, \textbf{separable} function, then the complexity of the \DiReCWDp{$\mu$}{$0$} is equivalent to the complexity of \DiReCFp{$\mu$}{$0$}. Thus, the polynomial time result of diverse committee \emph{feasibility} problem when the number of groups a candidate belongs to is equal to two, which in our model implies that the number of candidate attributes is  equal to two ($\mu= 2$), holds for our setting (Theorem 9 \cite{bredereck2018multiwinner},  Corollary 22 (full-version) \cite{celis2017multiwinner}). 

More specifically, when $\mu=2$, we use the algorithm given in the proof of Theorem 21 by Celis \etal \cite{celis2017multiwinner} and set the upper bound equal to the group size. Formally, $u_G^D=|G|$ for all $G \in \mathcal{G}$.

Next, when $\mu=1$, a straight-forward algorithm that selects the top $l_G^D$ scoring candidates for all $G \in \mathcal{G}$ results into a DiRe committee, which satisfies the diversity constraints $|G \cap W|\geq l_G^D$.
\end{proof}

\subsection{Hardness Results}
\paragraph{NP-hard problem used.} As discussed earlier, the NP-hardness of \DiReCWD when using representation constraints is unknown. Moreover, the known hardness results for using submodular but not separable scoring function and diverse committee selection problems were established via reductions from different NP-hard problems.  We will establish the NP-hardness of \DiReCWD for various settings of $\mu$, $\pi$, and $\mathtt{f}$ via reductions from a single well known NP-hard problem, namely, the vertex cover problem on 3-regular\footnote{A 3-regular graph stipulates that each vertex is connected to exactly three other vertices, each one with an edge, i.e., each vertex has a degree of 3. The VC problem on 3-regular graphs is NP-hard. We use 3-regular graphs to exploit its structure to prove the hardness of \DiReCWD with respect to (w.r.t.) diversity constraints (Theorem~\ref{thm:DiReCWDdiv30odd} and Theorem~\ref{thm:DiReCWDdiv30even}). We note that the reductions used in the proofs of Theorem~\ref{lemma:DiReCWDrep01} and Theorem~\ref{lemma:DiReCWDsubmod} do not need 3-regular graphs and hold for VC problem on arbitrary graphs as well.}, 2-uniform\footnote{The size of hyperedges has implications in the hardness of approximation and parameterized complexity results and hence, we mention it over here. For the complexity results, we use 2-uniform hypergraphs only.} hypergraphs \cite{garey1979computers, alimonti1997hardness}. 

\begin{definition}
\label{def-VC}
\textbf{Vertex Cover (VC) problem:} 
Given a graph $H$ consisting of a set of $m$ vertices $X$ = $\{x_1,x_2,\dots,x_{m}\}$ and a set of $n$ edges $E$ = $\{e_1,e_2,\dots,e_{n}\}$ where each $e \in E$ connects two vertices in $X$ such that $e$ corresponds to a 2-element subset of $X$, then a vertex cover of $H$ is a subset $S$ of vertices such that each $e$ contains at least one vertex from $S$ (i.e. $\forall$ $e \in E$, $e \cap S \neq \phi$). The vertex cover problem is to find the minimum vertex cover of $H$.
\end{definition}

\subsubsection{\texorpdfstring{\DiReCWD}{} w.r.t. diversity constraints}
When $\pi=0$, \DiReCWD is related to the diverse committee selection problem. However, the hardness of \DiReCWD when $\mu\geq3$ does not follow the hardness of the diverse committee selection problem when the number of groups that a candidate can belong to is \emph{greater than or equal} 3 \cite{bredereck2018multiwinner, celis2017multiwinner} as the reductions in these papers are specifically for the case when $\mu=3$. 

More specifically, Theorem 9 of Bredereck \etal \cite{bredereck2018multiwinner} uses a reduction from 3-Dimensional Matching that only holds for instances when the number of groups that a candidate can belong to is \emph{exactly} 3. Also, they set lower bound \emph{and} upper bound to 1, which is mathematically different from our setting where we only allow lower bounds. On the other hand, Theorem 6 (``NP-hardness of feasibility: $\Delta$ $\geq$ 3''\footnote{In Celis \etal \cite{celis2017multiwinner}, $\Delta$ denotes ``the maximum number of groups in which any candidate can be''.}) of Celis \etal \cite{celis2017multiwinner} uses two reductions: the first reduction from $\Delta$-hypergraph matching is indeed for the case when the number of groups that a candidate can belong to is \emph{greater than or equal} to 3 but is limited to instances when lower bound is set to 0 and upper bound to 1, which is a trivial case in our setting as we only use lower bounds and do not allow for upper bounds. Moreover, in-principle, the reduction from $\Delta$-hypergraph matching uses a different problem for each $\Delta$ as when $\Delta\neq\Delta'$, the $\Delta$-hypergraph matching and $\Delta'$-hypergraph matching are separate problems. The second reduction from 3-regular vertex cover is for instances when the number of groups that a candidate can belong to is \emph{exactly} 3. 


Hence, in this section, we give a reduction from a \emph{single} known NP-hard problem, namely the vertex cover problem, such that our result holds $\forall \mu \in \mathbb{Z} : \mu\geq3$ even when $\forall G \in \mathcal{G}$, $l_G^D=1$. Also, the reductions are designed to conform to the real-world stipulations: (i) each candidate attribute $A_i,  \forall i \in [\mu]$,  \emph{partitions} all the $m$ candidates into two or more groups and (ii) either no two attributes partition the candidates in the same way or if they do, the lower bounds across groups of the two attributes are not the same. For stipulation (ii), note that if two attributes partition the candidates in the same way and if the lower bounds across groups of the two attributes are also the same, then mathematically they are identical attributes that can be combined into one attribute.
The next two theorems help us prove the statement in Corollary~\ref{cor:DiReClass}(3).





\begin{theorem}\label{thm:DiReCWDdiv30odd}
If $\forall \mu \in \mathbb{Z} : \mu\geq 3$ and $\mu$ is an odd number, $\pi=0$, and   $\mathtt{f}$ is a monotone, separable function, then \DiReCWD 
is NP-hard, even when $\forall G \in \mathcal{G}$, $l_G^D=1$.
\end{theorem}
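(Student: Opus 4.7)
My plan is to reduce from vertex cover on 3-regular graphs (Definition~\ref{def-VC}), which is NP-hard. Given an instance $(H=(X,E), k_H)$ with $|X|=m_H$ and $|E|=3m_H/2$, I will construct in polynomial time an instance of \DiReCWDp{$\mu$}{$0$} whose optimal DiRe committee reveals whether $H$ admits a vertex cover of size at most $k_H$. By Observation~\ref{obs:hardnessrelations} it suffices to design the reduction so that the diversity-constrained feasibility is equivalent to the existence of a size-$k_H$ vertex cover.

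I treat the base case $\mu=3$ first. The candidate set consists of one ``vertex candidate'' $c_v$ per $v \in X$, together with a polynomial-size pool of dummy candidates $D$. Voter preferences are designed so that, under the monotone separable rule $\mathtt{f}$, every dummy strictly outscores every $c_v$ by a margin large enough that every optimum committee of the target size $k := k_H + |D|$ must include all of $D$. The reduction therefore collapses to: does there exist $W' \subseteq \{c_v : v \in X\}$ with $|W'| = k_H$ satisfying all diversity constraints? To encode the $|E|$ edge-cover constraints using three partitions of the candidate set, I appeal to edge coloring and restrict to bridgeless planar 3-regular graphs, a class on which VC is still NP-hard (Garey--Johnson) and which, by the Four Color Theorem combined with Tait's equivalence, always admits a proper 3-edge-coloring $E = M_1 \sqcup M_2 \sqcup M_3$ with each $M_i$ a perfect matching of $H$. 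Attribute $A_i$ is then the partition whose vertex-candidate groups are exactly $\{\{c_u, c_v\} : \{u,v\} \in M_i\}$, with $D$ partitioned separately per attribute (e.g., into singletons using fresh labels so that the three partitions remain pairwise distinct). Setting $l^D_G = 1$ on every group realizes exactly the edge-cover condition on $W'$, while the dummy groups are trivially satisfied since $D \subseteq W$. Hence $W$ is a DiRe committee attaining the target score iff $\{v : c_v \in W \setminus D\}$ is a vertex cover of $H$ of size $k_H$.

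For odd $\mu > 3$, I append $\mu-3$ additional attributes, each a distinct bipartition $\{G, C \setminus G\}$ with $l^D_G = l^D_{C \setminus G}=1$; choosing each $G$ so that both $G$ and $C \setminus G$ contain at least one dummy, and making the bipartitions pairwise distinct (straightforward given polynomially many dummies), makes the extra constraints trivially satisfied by any committee containing $D$, preserving the reduction while meeting stipulations (i)--(ii). The main obstacle is the partition requirement: encoding one cover constraint per edge forces the edges used in each attribute to form a perfect matching, which is why the reduction relies on $3$-edge-colorability; the Four Color Theorem is precisely what lets us side-step the class-2 cubic case while keeping VC NP-hard. With these ingredients, polynomial-time equivalence between the VC instance and the constructed \DiReCWDp{$\mu$}{$0$} instance follows immediately, yielding NP-hardness.
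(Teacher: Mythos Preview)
Your argument is correct and follows a genuinely different route from the paper's proof. Both reduce from vertex cover on cubic graphs, but the constructions diverge in several respects. The paper does not explicitly justify why the edge groups of a 3-regular graph can be organized into three attributes each of which \emph{partitions} the candidate set; you make this explicit by restricting to bridgeless planar cubic graphs and invoking Tait's equivalence with the Four Color Theorem to obtain a proper 3-edge-coloring (computable in polynomial time), so that each color class is a perfect matching and hence a genuine partition of the vertex candidates. For the extension to larger $\mu$, the paper builds, for every vertex candidate, $\mu-3$ elaborate ``blocks'' of $2\mu-1$ dummies with a carefully engineered $T_1/T_2/T_3$ structure whose correctness hinges on $\mu-1$ being even (whence the odd/even case split of Theorems~\ref{thm:DiReCWDdiv30odd} and~\ref{thm:DiReCWDdiv30even}); you instead append $\mu-3$ pairwise distinct bipartitions that are automatically satisfied because every feasible committee already contains all dummies. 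Your construction is thus considerably simpler, is explicit about the partition requirement, and in fact works uniformly for all $\mu\ge 3$ rather than only odd $\mu$. The paper's cyclic voter profile, on the other hand, has the pleasant feature of equalizing all candidate scores so that the scoring rule becomes irrelevant without any structural assumption on $\mathtt{f}$.

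One minor point: your sentence about designing voter preferences so that dummies strictly outscore vertex candidates is superfluous and, for a completely arbitrary monotone separable $\mathtt{f}$, slightly delicate to state uniformly. It is also unnecessary: since every dummy sits in a singleton group with lower bound~$1$ under each of the first three attributes, \emph{every} feasible committee of size $k_H+|D|$ must already contain all of $D$, independently of the scores. The equivalence with vertex cover therefore follows purely from feasibility via Observation~\ref{obs:hardnessrelations}, and you can drop the scoring argument entirely.
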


\begin{proof}

We reduce an instance of vertex cover (VC) problem to an instance of \DiReCWD. We have one candidate $c_i$ for each vertex $x_i \in X$, and $m \cdot (2\mu^2-7\mu+3)$ dummy candidates $d \in D$ where $m$ corresponds to the number of vertices in the graph $G$ and $\mu$ is a positive, odd integer (hint: the number of candidate attributes). Formally, we set $A$ = \{$c_1, \dots, c_m$\} and the dummy candidate set $D$ = \{$d_1, \dots, d_{m \cdot (2\mu^2-7\mu+3)}$\}. Hence, the candidate set $C$ = $A \cup D$ is of size $|C|=$ $m+(m \cdot (2\mu^2-7\mu+3))$ candidates. We set the target committee size to be $k + m \mu^2 - 3m\mu$.

Next, we have $\mu$ candidate attributes. Each edge $e \in E$ that connects vertices $x_i$ and $x_j$ correspond to a candidate group $G \in \mathcal{G}$ that contains two candidates $c_i$ and $c_j$. As our reduction proceeds from a 3-regular graph, each vertex is connected to three edges. This corresponds to each candidate $c \in A$ having three attributes and thus, belonging to three groups. Next, for each of the $m$ candidates $c \in A$, we have $\mu-3$ blocks of dummy candidates and each block contains $2\mu-1$ dummy candidates $d \in D$. Thus, we have a total of $m \cdot (\mu-3) \cdot (2\mu-1)$ dummy candidates, which equals to $m \cdot (2\mu^2-7\mu+3)$ dummy candidates. Next, each block of candidates contains 3 sets of candidates: \textbf{Set $T_1$} contains one candidate and \textbf{Sets $T_2$} and $T_3$ contain $\mu-1$ candidates each. Specifically, each of the $\mu-3$ blocks for each candidate $c \in A$ is constructed as follows: 
\begin{itemize}
    \item Set $T_1$ consists of single dummy candidate, $d_1^{T_1} \in T_1$.
    \item Set $T_2$ consists of $\mu-1$ dummy candidates, $d_i^{T_2} \in T_2$ for all $i \in [1,$ $\mu-1]$. 
    \item Set $T_3$ consists of $\mu-1$ dummy candidates, $d_j^{T_3} \in T_3$ for all $j \in [1,$ $\mu-1]$. 
\end{itemize}

Each candidate in the block has $\mu$ attributes and are grouped as follows:
\begin{itemize}
    \item The dummy candidate $d_1^{T_1} \in T_1$ is in the same group as candidate $c \in A$. It is also in $\mu-1$ groups, individually with each of $\mu-1$ dummy candidates, $d_i^{T_2} \in T_2$. Thus, the dummy candidate $d_1^{T_1} \in T_1$ has $\mu$ attributes and is part of $\mu$ groups.
    \item For each dummy candidate $d_i^{T_2} \in T_2$, it is in the same group as $d_1^{T_1}$ as described in the previous point. It is also in $\mu-1$ groups, individually with each of $\mu-1$ dummy candidates, $d_j^{T_3} \in T_3$. Thus, each dummy candidate $d_i^{T_2} \in T_2$ has $\mu$ attributes and is part of $\mu$ groups.
    \item For each dummy candidate $d_j^{T_3} \in T_3$, it is in $\mu-1$ groups, individually with each of $\mu-1$ dummy candidates, $d_i^{T_2} \in T_2$, as described in the previous point. Next, note that when $\mu$ is an odd number, $\mu-1$ is an even number, which means Set $T_3$ has an even number of candidates. We randomly divide $\mu-1$ candidates into two partitions. Then, we create $\frac{\mu-1}{2}$ groups over one attribute where each group contains two candidates from Set $T_3$ such that one candidate is selected from each of the two partitions without replacement. Thus, each pair of groups is mutually disjoint. Thus, each dummy candidate $d_j^{T_3} \in T_3$ is part of exactly one group that is shared with exactly one another dummy candidate $d_{j'}^{T_3} \in T_3$ where $j \neq j'$. Overall, this construction results in one attribute and one group for each dummy candidate $d_j^{T_3} \in T_3$. Hence, each dummy candidate $d_j^{T_3} \in T_3$ has $\mu$ attributes and is part of $\mu$ groups.
\end{itemize}
As a result of the above described grouping of candidates, each candidate $c \in A$ also has $\mu$ attributes and is part of $\mu$ groups. Note that each candidate $c \in A$ already had three attributes and was part of three groups due to our reduction from vertex cover problem on 3-regular graphs. Additionally, we added $\mu-3$ blocks of dummy candidates and grouped candidate $c \in A$ with candidate $d_1^{T_1} \in T_1$ from each of the $\mu-3$ blocks. Hence, each candidate $c \in A$ has $3+(\mu-3)$ attributes and is part of $\mu$ groups. We set $l^D_G=1$ for all $G \in \mathcal{G}$, which corresponds that each vertex in the vertex cover should be covered by some chosen edge.

Finally, we introduce $m+(m \cdot (2\mu^2-7\mu+3))$ voters. For simplicity, let $c'_i$ denote the $i^{\text{th}}$ candidate in set $C$. The first voter ranks the candidates based on their indices. 
\[c'_1 \succ c'_2 \succ c'_3 \succ \dots \succ c'_{2\mu^2m-7\mu m+4m }\]

The second voter improves the rank of each candidate by one position but places the top-ranked candidate to the last position. 
\[c'_2 \succ c'_3 \succ \dots \succ c'_{2\mu^2m-7\mu m+4m } \succ c'_1\]
Next, the third voter further improves the rank of each candidate by one position but places the top-ranked candidate to the last position. 
\[c'_3 \succ c'_4 \succ \dots \succ c'_{2\mu^2m-7\mu m+4m } \succ c'_1 \succ c'_2\]

Similarly, all the voters rank the candidates based on this method. Hence, the last voter will have the following ranking:

\[c'_{2\mu^2m-7\mu m+4m } \succ c'_1 \succ c'_2 \succ \dots \succ c'_{2\mu^2m-7\mu m+4m-1}\]

Finally, there are no voter attributes, and hence, $\pi=0$ and there are no representation constraints ($l^R_P=\phi)$. This completes our construction for the reduction, which is a polynomial time reduction in the size of $n$ and $m$. Note that we assume that the number of candidate attributes $\mu$ is always less than the number of candidates $|C|$. More specifically, our reduction holds when $3\leq\mu\leq |C|-2$, which is a realistic assumption as we ideally expect $\mu$ to be very small \cite{celis2017multiwinner}.

We first compute the score of the committee and then show the proof of correctness. When $\mathtt{f}$ is a monotone, separable scoring function, we know that 

\[\mathtt{f}(W) = \sum_{c \in W}^{}\mathtt{f}(c)\]

Next, given a scoring vector $\mathbf{s} = (s_1, s_2, \dots, s_{2\mu^2m-7\mu m+4m})$ where $s_1$ is the score associated with candidate $c$ in the ranking of voter $v$ whose $\pos_{v}(c)=1$ and so on, $s_1 \geq s_2 \geq \dots \geq s_{2\mu^2m-7\mu m+4m}$ and $s_1 > s_{2\mu^2m-7\mu m+4m}$, the score of each candidate $c \in C$ is

\[\mathtt{f}(c) = \sum_{v \in V}^{} s_{\pos_{v}(c)} \]
 
but as each candidate occupies each of the $m+(m \cdot (2\mu^2-7\mu+3))$ positions once, $\mathtt{f}(c)$ can be rewritten as

\[\mathtt{f}(c) = \sum_{i =1}^{|\mathbf{s}|} s_i \]

Hence, as all candidates $c \in C$ have the same score, the score of each $k + m \mu^2 - 3m\mu$-sized committee $W \in \mathcal{W}$ will be the highest such that $\mathtt{f}(W)$ is

\[\mathtt{f}(W)= \sum_{c \in W}^{}\mathtt{f}(c) = \sum_{c \in W}^{}\sum_{i =1}^{|\mathbf{s}|} s_i = (k + m \mu^2 - 3m\mu) \cdot \sum_{i =1}^{|\mathbf{s}|} s_i\]

Note that computing any highest scoring committee using a monotone, separable function takes time polynomial in the size of input. 

For clarity w.r.t. to the score of the committee, consider the following example: W.l.o.g., if we assume that $\mathtt{f}$ is $k$-Borda, then $\mathbf{s} = (m+(m \cdot (2\mu^2-7\mu+3))-1,\dots, 1, 0) $. Hence, all candidates $c \in C$ get the same Borda score $\mathtt{f}(c)$ of 

\[(m+(m \cdot (2\mu^2-7\mu+3))-1 + \dots + 1 + 0\]

\[= \frac{(m+(m \cdot (2\mu^2-7\mu+3))-1)\cdot(m+(m \cdot (2\mu^2-7\mu+3)))}{2}\]

\[=  \frac{4\mu^4m^2 - 28\mu^3m^2 + 65\mu^2m^2 - 56\mu m^2 + 16m^2 - 2\mu^2m + 7\mu m -4m}{2}\]

which is the sum of first $m+(m \cdot (2\mu^2-7\mu+3))-1$ natural numbers, all the scores in the scoring vector of Borda rule. 
Therefore, each $k + m \mu^2 - 3m\mu$-sized committee will be the highest scoring committee $W \in \mathcal{W}$ with a $\mathtt{f}(W)$ of 

\[(k + m \mu^2 - 3m\mu) \cdot \frac{4\mu^4m^2 - 28\mu^3m^2 + 65\mu^2m^2 - 56\mu m^2 + 16m^2 - 2\mu^2m + 7\mu m -4m}{2}\]

Hence, the NP-hardness of the problem is due to finding a feasible committee that satisfies for all $G \in \mathcal{G}$, $|G \cap W|\geq l^D_G$ where $l^D_G=1$. Therefore, for the proof of correctness, we show the following:
\begin{claim}

We have a vertex cover $S$ of size at most $k$ that satisfies $e \cap S\neq\phi$ for all $e \in E$ if and only if we have a committee $W$ of size at most $k + m \mu^2 - 3m\mu$ that satisfies all the diversity constraints, which means that for all $G \in \mathcal{G}$, $|G \cap W|\geq l^D_G$ which equals $|G \cap W|\geq 1$ as $l^D_G=1$ for all $G \in \mathcal{G}$. 
\end{claim}

($\Rightarrow$) If the instance of the VC problem is a yes instance, then the corresponding instance of \DiReCWD is a yes instance as each and every candidate group will have at least one of their members in the winning committee $W$, i.e., $|G \cap W|\geq 1$ for all $G \in \mathcal{G}$. Note that we have set $l^D_G=1$ for all $G \in \mathcal{G}$. 

More specifically, for each block of candidates, we select one dummy candidate from Set $T_1$ and all $\mu-1$ dummy candidates from Set $T_3$. This helps to satisfy the condition $|G \cap W|\geq 1$ for all candidate groups that contain at least one dummy candidate $d \in D$. Overall, we select $\mu$ candidates from $\mu-3$ blocks for each of the $m$ candidates that correspond to vertices in the vertex cover. This results in $(\mu \cdot (\mu-3) \cdot m) = m \mu^2 - 3m\mu$ candidates in the committee. Next, for groups that do not contain any dummy candidates, select $k$ candidates $c \in A$ that correspond to $k$ vertices $x \in X$ that form the vertex cover. These candidates satisfy the constraints. Specifically, these $k$ candidates satisfy $|G \cap W|\geq 1$ for all the candidate groups that do not contain any dummy candidates. Hence, we have a committee of size $k + m \mu^2 - 3m\mu$.

($\Leftarrow$) 
The instance of the \DiReCWD is a yes instance when we have $k + m \mu^2 - 3m\mu$ candidates in the committee. This means that each and every group will have at least one of their members in the winning committee $W$, i.e., $|G \cap W|\geq 1$ for all $G \in \mathcal{G}$. Then the corresponding instance of the VC problem is a yes instance as well. This is because the $k$ vertices $x \in X$ that form the vertex cover correspond to the $k$ candidates $c \in A$ that satisfy $|G \cap W|\geq 1$ for all the candidate groups that do not contain any dummy candidates. This completes the proof.
\end{proof}


\begin{theorem}\label{thm:DiReCWDdiv30even}
If $\forall \mu \in \mathbb{Z} : \mu\geq 3$ and $\mu$ is an even number, $\pi=0$, and   $\mathtt{f}$  is a monotone, separable function, then \DiReCWD 
is NP-hard, even when $\forall G \in \mathcal{G}$, $l_G^D=1$.
\end{theorem}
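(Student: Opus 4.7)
The plan is to reuse the reduction of Theorem~\ref{thm:DiReCWDdiv30odd} almost verbatim, altering only the one step where the parity of $\mu-1$ actually intervenes. Concretely, I would again reduce from vertex cover on a 3-regular 2-uniform graph with $m$ vertices and target $k$, instantiate the same candidate set $C = A \cup D$ with $|D| = m(2\mu^2 - 7\mu + 3)$, the same decomposition of $D$ into $\mu-3$ per-vertex blocks with sets $T_1, T_2, T_3$ of sizes $1$, $\mu-1$, $\mu-1$, the same $T_1$--$T_2$ and $T_2$--$T_3$ pairings, the same cyclically shifted voter profile, the same diversity bounds $l^D_G = 1$, and the same target committee size $k + m\mu^2 - 3m\mu$. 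Since $\mathtt{f}$ is monotone and separable and each candidate occupies every rank position exactly once under the cyclic profile, every committee of the target size attains the same maximum score, so NP-hardness again reduces to feasibility of the diversity constraints.

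The single obstruction is the final step of the block construction, where each dummy in $T_3$ must gain exactly one more group membership to reach $\mu$ attributes. When $\mu$ is odd this was done by partitioning the $\mu-1$ candidates of $T_3$ into $\frac{\mu-1}{2}$ disjoint pairs over one fresh attribute; when $\mu$ is even we have $|T_3| = \mu-1$ odd and no such pairing exists. I would instead use one group of three members together with $\frac{\mu-4}{2}$ disjoint pairs, all under one fresh attribute. This is well defined for every even $\mu \geq 4$ because $\mu-4$ is a non-negative even integer, it places each $d_j^{T_3}$ in exactly one new group (hence exactly one new attribute), and its block-size signature (one triple plus pairs) differs from every other attribute used in the reduction, preserving stipulation (ii).

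With that fix, the two-way correctness argument is essentially the same as in Theorem~\ref{thm:DiReCWDdiv30odd}. In the forward direction, a vertex cover $S$ of size $k$ yields a committee selecting the $k$ candidates in $A$ corresponding to $S$, plus the unique candidate in $T_1$ and all $\mu-1$ candidates in $T_3$ of every block; this totals exactly $k + m(\mu-3)\mu = k + m\mu^2 - 3m\mu$ and hits every group: the $T_1$--$T_2$ groups via $d_1^{T_1}$, the $T_2$--$T_3$ and the new triple-or-pair groups via members of $T_3$, the $\{c, d_1^{T_1}\}$ groups via $d_1^{T_1}$, and the edge-groups on $A$ via $S$. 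In the reverse direction, satisfying all block-internal groups forces at least $\mu$ picks per block (the $T_1$--$T_2$ and $T_2$--$T_3$ structure is unchanged by the parity fix), leaving exactly $k$ slots in $A$, which must then hit every edge and hence form a vertex cover.

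The main obstacle, and essentially the only delicate point, is checking that the new triple-and-pairs attribute neither accidentally duplicates an existing attribute's partition nor permits feasibility with strictly fewer than $\mu$ picks per block; once the per-candidate attribute counts are tabulated explicitly (three edge-attributes plus $\mu-3$ block-attributes for $c \in A$; $\mu$ for each candidate in $T_1$, $T_2$, and $T_3$), the polynomial-time bound on the reduction and the score calculation are mechanical and carry over directly from the odd case.
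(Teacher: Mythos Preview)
Your proposal is correct, and it takes a genuinely different route from the paper's proof. The paper does \emph{not} patch the intra-$T_3$ pairing locally. Instead, it \emph{doubles} the entire instance: two candidates $c_i, c_{m+i}$ per vertex, $2m(2\mu^2-7\mu+3)$ dummies, committee size $2k + 2m\mu^2 - 6m\mu$, and $2m(\mu-3)$ blocks. Within each $T_3$ it pairs only $\mu-2$ of the $\mu-1$ candidates, leaving one ``orphan'' per block; because the number of blocks is now even, it pairs each orphan with the orphan from the twin block (the block for $c_{m+i}$), so every group is still of size~2. Your fix instead keeps the odd-case construction verbatim and absorbs the parity by allowing a single size-3 group inside each $T_3$.

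Your route is more economical: it avoids the doubling and the cross-block pairing, and the per-block lower bound you need (at least $\mu$ dummies per block) follows from the same bipartite $T_2$--$T_3$ argument as in the odd case, since that argument already forces either $T_2\subseteq W$ or $T_3\subseteq W$ regardless of how $T_3$ is internally grouped. The one point worth making explicit is the edge case $\mu=4$: there $T_2\cup\{\text{one element of the triple}\}$ also achieves exactly $\mu=4$ dummies but leaves the $\{c,d_1^{T_1}\}$ group uncovered, forcing $c\in W$; this is harmless for the reverse direction because the $\le k$ candidates in $W\cap A$ must still cover every edge-group and hence form a vertex cover, but it is the place where a reader might worry. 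The paper's approach buys uniformity (all groups remain size~2), which slightly simplifies the attribute-partition bookkeeping for stipulation~(ii), at the cost of a bulkier instance and a cross-block dependency in the group structure.
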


\begin{proof}
We reduce an instance of vertex cover (VC) problem to an instance of \DiReCWD. We have two candidate $c_i$ and $c_{m+i}$ for each vertex $x_i \in X$, and $2m \cdot (2\mu^2-7\mu+3)$ dummy candidates $d \in D$ where $m$ corresponds to the number of vertices in the graph $G$ and $\mu$ is a positive, even integer (hint: the number of candidate attributes). Formally, we set $A$ = \{$c_1, \dots, c_m$\} $\cup$ \{$c_{m+1}, \dots, c_{2m}$\}  and the dummy candidate set $D$ = \{$d_1, \dots, d_{2m \cdot (2\mu^2-7\mu+3)}$\}. Hence, the candidate set $C$ = $A \cup D$ is of size $|C|=$ $2m+(2m \cdot (2\mu^2-7\mu+3))$ candidates. We set the target committee size to be $2k + 2m \mu^2 - 6m\mu$.

Next, we have $\mu$ candidate attributes. Each edge $e \in E$ that connects vertices $x_i$ and $x_j$ correspond to two candidate groups $G, G' \in \mathcal{G}$ such that group $G$ contains two candidates $c_i$ and $c_j$ that correspond to vertices $x_i$ and $x_j$ and the group $G'$ contains two candidates $c_{m+i}$ and $c_{m+j}$ that also correspond to vertices $x_i$ and $x_j$. Note that by having $2m$ candidates in $A$, we are in fact duplicating the graph $H$. As our reduction proceeds from a 3-regular graph, each vertex is connected to three edges. This corresponds to each candidate $c \in A$ having three attributes and thus, belonging to three groups. Next, for each candidate $c \in A$, we have $\mu-3$ blocks of dummy candidates, each block containing $2\mu-1$ dummy candidates $d \in D$. Thus, we have a total of $2m \cdot (\mu-3) \cdot (2\mu-1)$ dummy candidates, which equals to $2m \cdot (2\mu^2-7\mu+3)$ dummy candidates. Next, each block of candidates contains 3 sets of candidates: \textbf{Set $T_1$} contains one candidate and \textbf{Sets $T_2$} and $T_3$ contain $\mu-1$ candidates each. Specifically, each of the $\mu-3$ blocks for each candidate $c \in A$ is constructed as follows in line with the construction in the proof for Theorem~\ref{thm:DiReCWDdiv30odd}: 
\begin{itemize}
    \item Set $T_1$ consists of single dummy candidate, $d_1^{T_1} \in T_1$.
   \item Set $T_2$ consists of $\mu-1$ dummy candidates, $d_i^{T_2} \in T_2$ for all $i \in [1,$ $\mu-1]$. 
    \item Set $T_3$ consists of $\mu-1$ dummy candidates, $d_j^{T_3} \in T_3$ for all $j \in [1,$ $\mu-1]$. 
\end{itemize}

Each candidate in the block has $\mu$ attributes and are grouped as follows:
\begin{itemize}
    \item The dummy candidate $d_1^{T_1} \in T_1$ is in the same group as candidate $c \in A$. It is also in $\mu-1$ groups, individually with each of $\mu-1$ dummy candidates, $d_i^{T_2} \in T_2$. Thus, the dummy candidate $d_1^{T_1} \in T_1$ has $\mu$ attributes and is part of $\mu$ groups.
    \item For each dummy candidate $d_i^{T_2} \in T_2$, it is in the same group as $d_1^{T_1}$ as described in the previous point. It is also in $\mu-1$ groups, individually with each of $\mu-1$ dummy candidates, $d_j^{T_3} \in T_3$. Thus, each dummy candidate $d_i^{T_2} \in T_2$ has $\mu$ attributes and is part of $\mu$ groups.
\end{itemize}

Note that the grouping of the candidates in Set $T_3$ differs significantly from the construction in the proof for Theorem~\ref{thm:DiReCWDdiv30odd}: 
\begin{itemize}
    \item For each dummy candidate $d_j^{T_3} \in T_3$, it is in $\mu-1$ groups, individually with each of $\mu-1$ dummy candidates, $d_i^{T_2} \in T_2$, as described in the previous point. Next, note that when $\mu$ is an even number, $\mu-1$ is an odd number, which means Set $T_3$ has an \emph{odd} number of candidates.  
    We randomly divide $\mu-2$ candidates into two partitions. Then, we create $\frac{\mu-2}{2}$ groups over one attribute where each group contains two candidates from Set $T_3$ such that one candidate is selected from each of the two partitions without replacement. Thus, each pair of groups is mutually disjoint. Hence, each dummy candidate $d_j^{T_3} \in T_3$ is part of exactly one group that is shared with exactly one another dummy candidate $d_{j'}^{T_3} \in T_3$ where $j \neq j'$. Overall, this construction results in one attribute and one group for all but one dummy candidate $d_j^{T_3} \in T_3$, which results into a total of $\mu$ attributes and $\mu$ groups for these $\mu-2$ candidates.
    This is because $\frac{\mu-2}{2}$ groups can hold $\mu-2$ candidates. Hence, one candidate still has $\mu-1$ attributes and is part of $\mu-1$ groups. If this block of dummy candidates is for candidate $c_i \in A$, then another corresponding block of dummy candidates for candidate $c_{m+i} \in A$ will also have one candidate $d_{z}^{T'_3} \in T'_3$ who will have $\mu-1$ attributes and is part of $\mu-1$ groups. We group these two candidates from separate blocks. Hence, now that one remaining candidate also has $\mu$ attributes and is part of $\mu$ groups. As there is always an even number of candidates in set $A$ ($|A|=2m$), such cross-block grouping of candidates among a total of $(\mu-3) \cdot 2m$ blocks, also an even number, is always possible.
\end{itemize}
As a result of the above described grouping of candidates, each candidate $c \in A$ also has $\mu$ attributes and is part of $\mu$ groups. Note that each candidate $c \in A$ already had three attributes and was part of three groups due to our reduction from vertex cover problem on 3-regular graphs. Additionally, we added $\mu-3$ blocks of dummy candidates and grouped candidate $c \in A$ with candidate $d_1^{T_1} \in T_1$ from each of the $\mu-3$ blocks. Hence, each candidate $c \in A$ has $3+(\mu-3)$ attributes and is part of $\mu$ groups. We set $l^D_G=1$ for all $G \in \mathcal{G}$, which corresponds that each vertex in the vertex cover should be covered by some chosen edge.

Finally, we introduce $2m+(2m \cdot (2\mu^2-7\mu+3))$ voters, in line with our reduction in proof of Theorem~\ref{thm:DiReCWDdiv30odd}. For simplicity, let $c'_i$ denote the $i^{\text{th}}$ candidate in set $C$. The first voter ranks the candidates based on their indices. 
\[c'_1 \succ c'_2 \succ c'_3 \succ \dots \succ c'_{4\mu^2m-14\mu m+8m }\]

The second voter improves the rank of each candidate by one position but places the top-ranked candidate to the last position. 
\[c'_2 \succ c'_3 \succ \dots \succ c'_{4\mu^2m-14\mu m+8m } \succ c'_1\]

Similarly, all the voters rank the candidates based on this method. Hence, the last voter will have the following ranking:

\[c'_{4\mu^2m-14\mu m+8m} \succ c'_1 \succ c'_2 \succ \dots \succ c'_{4\mu^2m-14\mu m+8m-1}\]

Finally, there are no voter attributes, and hence, $\pi=0$ and there are no representation constraints ($l^R_P=\phi)$. This completes our construction for the reduction, which is a polynomial time reduction in the size of $n$ and $m$. Note that we assume that the number of candidate attributes $\mu$ is always less than the number of candidates $|C|$. More specifically, our reduction holds when $3\leq\mu\leq |C|-2$, which is a realistic assumption as we ideally expect $\mu$ to be very small \cite{celis2017multiwinner}.

We first compute the score of the committee and then show the proof of correctness. When $\mathtt{f}$ is a monotone, separable scoring function, we know that 

\[\mathtt{f}(W) = \sum_{c \in W}^{}\mathtt{f}(c)\]

Next, given a scoring vector $\mathbf{s} = (s_1, s_2, \dots, s_{4\mu^2m-14\mu m+8m})$ where $s_1$ is the score associated with candidate $c$ in the ranking of voter $v$ whose $\pos_{v}(c)=1$ and so on, $s_1 \geq s_2 \geq \dots \geq s_{4\mu^2m-14\mu m+8m}$ and $s_1 > s_{4\mu^2m-14\mu m+8m}$, the score of each candidate $c \in C$ is

\[\mathtt{f}(c) = \sum_{v \in V}^{} s_{\pos_{v}(c)} \]
 
but as each candidate occupies each of the $2m+(2m \cdot (2\mu^2-7\mu+3))$ positions once, $\mathtt{f}(c)$ can be rewritten as

\[\mathtt{f}(c) = \sum_{i =1}^{|\mathbf{s}|} s_i \]

Hence, as all candidates $c \in C$ have the same score, the score of each $2k + 2m \mu^2 - 6m\mu$-sized committee $W \in \mathcal{W}$ will be the highest such that $\mathtt{f}(W)$ is

\[\mathtt{f}(W)= \sum_{c \in W}^{}\mathtt{f}(c) = \sum_{c \in W}^{}\sum_{i =1}^{|\mathbf{s}|} s_i = (2k + 2m \mu^2 - 6m\mu) \cdot \sum_{i =1}^{|\mathbf{s}|} s_i\]

Note that computing any highest scoring committee using a monotone, separable function takes time polynomial in the size of input. 

For clarity w.r.t. to the score of the committee, consider the following example: W.l.o.g., if we assume that $\mathtt{f}$ is $k$-Borda, then $\mathbf{s} = (2m+(2m \cdot (2\mu^2-7\mu+3))-1,\dots, 1, 0) $. Hence, all candidates $c \in C$ get the same Borda score $\mathtt{f}(c)$ of 

\[(2m+(2m \cdot (2\mu^2-7\mu+3))-1 + \dots + 1 + 0\]

\[=\frac{(2m+(2m \cdot (2\mu^2-7\mu+3))-1)\cdot(2m+(2m \cdot (2\mu^2-7\mu+3)))}{2}\]

\[=  8\mu^4m^2 - 56\mu^3m^2 + 130\mu^2m^2 - 112\mu m^2 + 32m^2 - 2\mu^2m + 7\mu m -4m\]

which is the sum of first $2m+(2m \cdot (2\mu^2-7\mu+3))-1$ natural numbers, all the scores in the scoring vector of Borda rule. 
Therefore, each $2k + 2m \mu^2 - 6m\mu$-sized committee will be the highest scoring committee $W \in \mathcal{W}$ with a $\mathtt{f}(W)$ of 

\[(2k + 2m \mu^2 - 6m\mu) \cdot (8\mu^4m^2 - 56\mu^3m^2 + 130\mu^2m^2 - 112\mu m^2 + 32m^2 - 2\mu^2m + 7\mu m -4m)\]

Hence, the NP-hardness of the problem is due to finding a feasible committee that satisfies for all $G \in \mathcal{G}$, $|G \cap W|\geq l^D_G$ where $l^D_G=1$. Therefore, for the proof of correctness, we show the following:

\begin{claim}
We have a vertex cover $S$ of size at most $k$ that satisfies $e \cap S\neq\phi$ for all $e \in E$ if and only if we have a committee $W$ of size at most $2k + 2m \mu^2 - 6m\mu$ that satisfies all the diversity constraints, which means that for all $G \in \mathcal{G}$, $|G \cap W|\geq l^D_G$ which equals $|G \cap W|\geq 1$ as $l^D_G=1$ for all $G \in \mathcal{G}$. 
\end{claim}

($\Rightarrow$) If the instance of the VC problem is a yes instance, then the corresponding instance of \DiReCWD is a yes instance as each and every candidate group will have at least one of their members in the winning committee $W$, i.e., $|G \cap W|\geq 1$ for all $G \in \mathcal{G}$. Note that we have set $l^D_G=1$ for all $G \in \mathcal{G}$. 

More specifically, for each block of candidates, we select one dummy candidate from Set $T_1$ and all $\mu-1$ dummy candidates from Set $T_3$. This helps to satisfy the condition $|G \cap W|\geq 1$ for all candidate groups that contain at least one dummy candidate $d \in D$. Overall, we select $\mu$ candidates from $\mu-3$ blocks for each of the $2m$ candidates that correspond to vertices in the vertex cover. This results in $(\mu \cdot (\mu-3) \cdot 2m) = 2m \mu^2 - 6m\mu$ candidates in the committee. Next, for groups that do not contain any dummy candidates, select $2k$ candidates $c \in A$ that correspond to $k$ vertices $x \in X$ that form the vertex cover. These candidates satisfy the constraints. Specifically, these $2k$ candidates satisfy $|G \cap W|\geq 1$ for all the candidate groups that do not contain any dummy candidates. Hence, we have a committee of size $2k + 2m \mu^2 - 6m\mu$.

($\Leftarrow$) 
The instance of the \DiReCWD is a yes instance when we have $2k + 2m \mu^2 - 6m\mu$ candidates in the committee. This means that each and every group will have at least one of their members in the winning committee $W$, i.e., $|G \cap W|\geq 1$ for all $G \in \mathcal{G}$. Then the corresponding instance of the VC problem is a yes instance as well. This is because the $k$ vertices $x \in X$ that form the vertex cover correspond to the $2k$ candidates $c \in A$ that satisfy $|G \cap W|\geq 1$ for all the candidate groups that do not contain any dummy candidates. We remind that we had constructed $2m$ candidates in the instance of \DiReCWD problem that correspond to $m$ vertices in the VC problem, which means that we need $2k$ candidates instead of $k$ candidates to satisfy diversity constraints for candidate groups that do not contain any dummy candidates. This completes the proof.
\end{proof}

\subsubsection{\texorpdfstring{\DiReCWD}{} w.r.t. representation constraints}
We now study the computational complexity of \DiReCWD due to the presence of voter attributes. Note that the reduction is designed to conform to the real-world stipulations that are analogous to the stipulations for the candidate attributes. 
The following theorem helps us prove the statement in Corollary~\ref{cor:DiReClass}(4).
\begin{theorem}\label{lemma:DiReCWDrep01}
If $\mu=0$, $\forall \pi \in \mathbb{Z} : \pi\geq 1$, and   $\mathtt{f}$ is a monotone, separable function, then \DiReCWD 
is NP-hard, even when $\forall P \in \mathcal{P}$, $l_P^R=1$.

\end{theorem}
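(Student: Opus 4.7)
The plan is to give a polynomial-time reduction from the vertex cover (VC) problem to \DiReCWDp{$0$}{$\pi$} for every integer $\pi \geq 1$, exploiting the same ``cyclic shift'' idea used in Theorems~\ref{thm:DiReCWDdiv30odd} and \ref{thm:DiReCWDdiv30even} so that the scoring rule $\mathtt{f}$ is flattened to assign every candidate the same score, and the NP-hardness rests entirely on satisfying the representation constraints (which in the absence of candidate attributes are the only source of combinatorial difficulty). Because the footnote in the previous subsection points out that the reduction for this theorem does not require 3-regularity, I would reduce from VC on arbitrary graphs.

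Concretely, given a VC instance on a graph $H=(X,E)$ with $|X|=m$, $|E|=n$, and target cover size $k$, I first construct a \DiReCWDp{$0$}{$1$} instance. Introduce one candidate $c_i$ per vertex $x_i$ and pad with dummy candidates so that the candidate set $C$ is large enough to sustain a cyclic-shift voter construction analogous to the one in Theorem~\ref{thm:DiReCWDdiv30odd}, under which every candidate occupies each position exactly once across the voters and therefore receives identical $\mathtt{f}$-score for any monotone, separable rule. The single voter attribute $A'_1$ partitions the voters into $n$ populations $P_1,\dots,P_n$, one per edge, and for each edge $e_i=\{x_s,x_t\}$ the input-specified winning committee is set to $W_{P_i}=\{c_s,c_t\}$ with $l^R_{P_i}=1$. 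The target committee size is $k$. Since every $k$-committee has the same score, a highest-scoring feasible committee exists iff some $k$-subset of $C$ intersects every $W_{P_i}$; restricting such a committee to $\{c_1,\dots,c_m\}$ yields a vertex cover of size at most $k$, and conversely any VC of size at most $k$ can be padded (with dummies or arbitrary vertex candidates) into a feasible $k$-committee. This establishes the statement for $\pi=1$.

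For $\pi\geq 2$, I would augment the construction with $\pi-1$ additional voter attributes $A'_2,\dots,A'_\pi$, each inducing a \emph{distinct} partition of the voter set so as to respect the real-world stipulation (for example, using different pairwise groupings of voters, which is feasible because the voter set is already of polynomial size). For each additional attribute $A'_i$ I introduce one ``anchor'' dummy candidate $d^*_i$, declare every population under $A'_i$ to have the same input-specified winning committee $\{d^*_i\}$, and set $l^R_P=1$; this forces the DiRe committee to contain every $d^*_i$, so the extra representation constraints are trivially satisfied and do not interact with the VC encoding. The target committee size is raised by $\pi-1$ and the cyclic shift is extended to cover the enlarged candidate set, preserving equal scoring of all candidates under any monotone, separable $\mathtt{f}$.

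The main obstacle, I expect, is reconciling three goals simultaneously: (i) forcing every candidate to receive the same $\mathtt{f}$-score so that the reduction applies to every monotone, separable rule rather than only $k$-Borda, which the cyclic-shift framework from the previous theorems handles; (ii) respecting the stipulation that distinct voter attributes induce distinct partitions, even when $\pi$ is large, which requires explicitly exhibiting $\pi-1$ pairwise-different voter partitions and verifying that the accompanying anchor-dummy winning committees do not collapse into each other; and (iii) keeping all additions (dummies, voters, populations) polynomial in $m+n+\pi$. Most of the formal bookkeeping will lie in the verification of (ii) and in checking that the enlarged cyclic shift still balances the scores of the anchor dummies against the vertex candidates.
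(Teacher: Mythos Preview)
Your plan diverges from the paper's in a way that opens a real gap. The paper does \emph{not} reuse the cyclic-shift trick for this theorem. It builds $n^2$ voters ($n$ identical copies per edge), each ranking the two edge-vertices first, then a block of $m$ dummies unique to that edge, then everything else. Populations for every $\pi\geq 1$ are obtained by modular regroupings of these voters so that all voters in a population share one ranking; hence $W_P$ is \emph{computed} by $\mathtt{f}$ as that population's top-$k$ candidates: the two edge-candidates together with $k-2$ edge-specific dummies. Because dummies in one $W_P$ never occur in another, feasibility collapses to vertex cover.

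Your reduction instead declares $W_{P_i}=\{c_s,c_t\}$ (size $2$) and, for the extra attributes, $W_P=\{d^*_i\}$ (size $1$). In the model, however, $W_P$ is the size-$k$ winning committee of population $P$ under the rule $\mathtt{f}$, not a freely chosen subset; the ``Voter Populations'' paragraph and the paper's own proof both make this explicit. Worse, the cyclic shift you propose flattens scores \emph{inside each population as well}, so whatever $W_P$ the rule actually produces would be dictated by tie-breaking and would have no connection to the encoded edge. This is exactly why the paper abandons the cyclic shift here and designs each population's preference list so that the intended $W_P$ is what $\mathtt{f}$ outputs. To make your approach go through you would have to redesign the voters so that (i) every $W_P$ has size $k$ and (ii) it genuinely is the $\mathtt{f}$-winning committee of its population---which is precisely the dummy-block machinery the paper supplies.
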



\begin{proof}
We reduce an instance of vertex cover (VC) problem to an instance of \DiReCWD. We have one candidate $c_i$ for each vertex $x_i \in X$, and $n\cdot m$ dummy candidates $d \in D$ where $n$ corresponds to the number of edges and $m$ corresponds to the number of vertices in the graph $G$. Formally, we set $A$ = \{$c_1, \dots, c_m$\} and the dummy candidate set $D$ = \{$d_1, \dots, d_{n m}$\}. Hence, the candidate set $C$ = $A \cup D$ consists of $m+(n\cdot m)$ candidates. We set the target committee size to be $k$.

We now introduce $n^2$ voters, $n$ voters for each edge $e \in E$. More specifically, an edge $e \in E$ connects vertices $x_i$ and $x_j$. Then, the corresponding $n$ voters $v \in V$ rank the candidates in the following collection of sets $\mathcal{T} = (T_1$, $T_2$, $T_3$, $T_4)$ such that $T_1\succ T_2\succ T_3 \succ T_4$:

\begin{itemize}
    \item \textbf{Set $T_1$}: candidates $c_i$ and $c_j$ that correspond to vertices $x_i$ and $x_j$ are ranked at the top two positions, ordered based on their indices. For $a^{\text{th}}$ voter where $a \in [n]$, we denote the candidates $c_i$ and $c_j$ as $c_{i_a}$ and $c_{j_a}$.
    \item \textbf{Set $T_2$}: $m$ out of ($n \cdot m$) dummy candidates are ranked in the next $m$ positions, again ordered based on their indices. For each voter, these $m$ candidates are distinct as shown below. Hence, for all pairs of voters $v, v' \in V : v\neq v'$, we know that $T_2^v \cap T_2^{v'}=\phi$.
    \item \textbf{Set $T_3$}: the next $m-2$ positions are occupied by the remaining $m-2$ candidates in $A \setminus \text{Set }T_1$ that correspond to the vertices in graph $G$, ordered based on their indices.
    \item \textbf{Set $T_4$}: the last $(n-1)\cdot m$ positions are occupied by the remaining $(n-1)\cdot m$ dummy candidates in $D \setminus \text{Set }T_2$, ordered based on their indices.
\end{itemize}

More specifically, the voters rank the candidates as shown below:
\begin{table}[H]
\centering
\resizebox{\textwidth}{!}{\begin{tabular}{c|lllllllllllll}
\textbf{Voters}&\multicolumn{1}{c}{\textbf{Set $T_1$}} & \multicolumn{1}{c}{$\succ$} & \multicolumn{7}{c}{\textbf{Set $T_2$}} & \multicolumn{1}{c}{$\succ$} & \multicolumn{1}{c}{\textbf{Set $T_3$}}& \multicolumn{1}{c}{$\succ$} & \multicolumn{1}{c}{\textbf{Set $T_4$}}\\
\hline\\
$v_1^{1}$, \dots, $v_1^{n}$&$ c_{i_1} \succ c_{j_1} $ & $\succ$ & $ d_1 $&$\succ$&$ d_2 $&$\succ $&$\dots$&$ \succ$&$ d_{m} $ & $\succ$ & $A \setminus \{c_{i_1}, c_{j_1}\} $ &  $\succ$ & $ D \setminus \{d_1, \dots, d_m\}$  \\[5pt]

$v_2^{1}$, \dots, $v_2^{n}$&$ c_{i_2} \succ c_{j_2}$ & $\succ$ & $ d_{m+1} $&$\succ$&$ d_{m+2} $&$\succ$&$ \dots $&$\succ $&$d_{2m} $ & $\succ$ & $A \setminus \{c_{i_2}, c_{j_2}\} $&  $\succ$ & $D \setminus \{d_{m+1}, \dots, d_{2m}\}$ \\[5pt]

$v_3^{1}$, \dots, $v_3^{n}$&$ c_{i_3} \succ c_{j_3}$ & $\succ$ & $ d_{2m+1} $&$\succ$&$ d_{2m+2} $&$\succ$&$ \dots $&$\succ$&$ d_{3m} $ & $\succ$ & $A \setminus \{c_{i_3}, c_{j_3}\} $ &  $\succ$ & $D \setminus \{d_{2m+1}, \dots, d_{3m}\}$  \\[5pt]
\vdots&&&&&&&&&&&&\\[5pt]

$v_n^{1}$, \dots, $v_n^{n}$&$ c_{i_n} \succ c_{j_n}$ & $\succ$ & $ d_{(n-1)m+1} $&$\succ$&$ d_{(n-1)m+2} $&$\succ$&$ \dots $&$\succ$&$ d_{nm} $ & $\succ$ & $A \setminus \{c_{i_n}, c_{j_n}\} $ &  $\succ$ & $D \setminus \{d_{(n-1)m+1}, \dots, d_{nm}\}$  \\[5pt]
\end{tabular}}
\label{tab:voterRankingsRepCons}
\end{table}

Next, there are no candidate attributes, and hence, $\mu=0$ and there are no diversity constraints ($l^D_G=\phi)$. The voters are divided into disjoint population over one or more attributes when $\forall \pi \in \mathbb{Z}, \pi\geq1$. Specifically, the voters are divided into populations as follows: $\forall x \in [\pi]$, $\forall y \in [n]$, $\forall z \in [n]$, voter $v_y^z \in V$ is part of a population $P\in\mathcal{P}$ such that $P$ contains all voters with the same $z \mod x$ and $y$. 
Each voter is part of $\pi$ populations.
We set the representation constraint to 1. Hence, $l^R_P=1$ for all $P \in \mathcal{P}$. The winning committee $W_P$ for each population $P \in \mathcal{P}$ will always consist of the top $k$-ranked candidates in the ranking of the voters in population $P$, which means that  $W_P$, $\forall P \in \mathcal{P}$, can not contain candidates from Set $T_3$ and Set $T_4$. This is because, by construction, (a) the ranking of all voters within a population $v \in P$, for all $P \in \mathcal{P}$, is the same and (b) the first $k$ candidates of each population will only get selected because either (i) they will indeed be the highest scoring candidates for the population or (ii) in case of a tie, they get precedence because we break ties based on the indices of candidates such that $c_i$ gets precedence over $c_j$ for all $i<j$.



This completes our reduction, which is a polynomial time reduction in the size of $n$ and $m$. For the proof of correctness, we show the following:
\begin{claim}
We have a vertex cover $S$ of size at most $k$ that satisfies $e \cap S\neq\phi$ for all $e \in E$ if and only if we have at least one committee $W$ of size at most $k$ that satisfies all the representation constraints, which means that for all $P \in \mathcal{P}$, $|W_P \cap W|\geq l^R_P$ which equals $|W_P \cap W|\geq 1$ as $l^R_P=1$ for all $P \in \mathcal{P}$. 
\end{claim}

($\Rightarrow$) If the instance of the VC problem is a yes instance, then the corresponding instance of \DiReCWD is a yes instance as each and every population's winning committee, $W_P$ for all $P \in \mathcal{P}$, will have at least one of their members in the winning committee $W$, i.e., $|W_P \cap W|\geq 1$ for all $P \in \mathcal{P}$. Indeed, even had the winning committee of each population been of size 2 instead of $k$, the instance of \DiReCWD will be a yes instance as the vertex cover corresponds to the winning committee representing each and every population as $|W_P \cap W| \geq 1$ for all $P \in \mathcal{P}$. 

($\Leftarrow$) The instance of the \DiReCWD is a yes instance when each and every population's winning committee, $W_P$ for all $P \in \mathcal{P}$, will have at least one of their members in the winning committee $W$, i.e., $|W_P \cap W|\geq 1$ for all $P \in \mathcal{P}$. Then the corresponding instance of the VC problem is a yes instance as well. More specifically, there are two cases when the instance of the \DiReCWD can be a yes instance:
\begin{itemize}
    \item \textbf{Case 1 - When only the candidates from Set $T_1$ are in the committee $W$:} An instance of \DiReCWD when $\mu=0$ and $\pi=1$ is a yes instance when each and every population has at least one representative in the committee, i.e., $|W_P \cap W| \geq 1$ for all $P \in \mathcal{P}$. We note that for all $P \in \mathcal{P}$, each population's winning committee $W_P$ consists of two candidates from Set $T_1$ and top $k-2$ candidates from Set $T_2$. Hence, when the winning committee $W$ consists of only the candidates from Set $T_1$ of the ranking of each and every voter $v \in V$, it implies that it will be a yes instance, which in turn, implies that there is a vertex cover of size at most $k$ that covers all the edges $e \in E$ because the vertices in vertex cover $x \in S$ correspond to the candidates in the winning committee $c\in W$.
    \item \textbf{Case 2 - When candidates from Set $T_1$ and Set $T_2$ are in the committee $W$:} In Case 1, we showed that if a candidate $c$ in the winning committee $W$ is from Set $T_1$, then it corresponds to a vertex in the vertex cover. Additionally, as the population's winning committee $W_P$ for all $P \in \mathcal{P}$ is of size $k$, an instance of \DiReCWD can be a yes instance even if a dummy candidate from Set $T_2$ is in the winning committee $W$. More specifically, there are two sub-cases:
    \begin{itemize}
        \item \textbf{for some population $P\in\mathcal{P}$, dummy candidate $d$ from Set $T_2$ AND candidate $c$ from from Set $T_1$ are in the committee $W$:} if a population's candidate $c$ from Set $T_1$, who is also in $W_P$, is in $W$, then this sub-case is equivalent to Case 1, and hence, a corresponding vertex in the vertex cover $v \in S$ exists. We note that this sub-case does not allow for any of population to have a representative from $W_P$ in $W$ \emph{only} from Set $T_2$, which is our next sub-case.
        \item \textbf{for some population $P\in\mathcal{P}$, only dummy candidate $d$ from Set $T_2$ is in the committee $W$:} if for a given population $P \in \mathcal{P}$, a committee $W$ represents the population via only the dummy candidate $d$ who is in a population's winning committee $d \in W_P$, then the representation constraint $l^R_P=|W_P \cap W|= 1$ is satisfied as $W_P \cap W = \{d\}$. However, for all pairs of voters $v, v' \in V : v\neq v'$, we know that $T_2^v \cap T_2^{v'}=\phi$. 
        Hence, we can replace any such dummy candidate $d \in W_P$ with a candidate $c \in W_P$ as that candidate $d$ can not be representing any other population $P' \in \mathcal{P}\setminus P$. Formally, a winning committee $W$ is always tied\footnote{W.l.o.g., we make a subtle assumption that all $m + (n\cdot m)$ candidates bring the same utility to the committee $W$. The aim to make this assumption is to show that even under this assumption, the problem remains hard, which is to say that even finding a \emph{feasible} committee that simply satisfies the constraints is NP-hard even when we have $\pi=1$. The assumption does not change the composition of each population's winning committee $W_P$ for all $P\in \mathcal{P}$.} to another winning committee $W'$ where $W'$=$(W\setminus \{d\}) \cup \{c\}$ where $\{c,d\} \in W_P$ for some $P \in \mathcal{P}$. This is equivalent to saying that we are replacing candidate $d$ from Set $T_2$ with a candidate $c$ from Set $T_1$ of the population $P$. Thus, a yes instance of \DiReCWD due to $W$, or due to the equivalent committee $W'$, in this sub-case corresponds to a vertex cover $S$ that covers all the edges $e \in E$.
        
    \end{itemize}
\end{itemize}
These cases complete the other direction of the proof of correctness.
    
Finally, we note that for this reduction and the proof of correctness, we assume the ties are broken using a predecided order of candidates. We also note that as we are using a separable \csr, computing scores of candidates takes polynomial time. This completes the overall proof.
\end{proof}

\begin{corollary}\label{lemma:DiReCWDrep}
If $\forall \mu \in \mathbb{Z} : \mu\geq0$, $\forall \pi \in \mathbb{Z} : \pi\geq1$, and   $\mathtt{f}$ is a monotone, separable function, then \DiReCWD 
is NP-hard, even when $\forall G \in \mathcal{G}$, $l_G^D=1$ and $\forall P \in \mathcal{P}$, $l_P^R=1$.
\end{corollary}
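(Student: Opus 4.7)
The corollary generalizes Theorem~\ref{lemma:DiReCWDrep01} from the single setting $\mu=0,\pi\geq 1$ to every pair $(\mu,\pi)$ with $\mu\geq 0$ and $\pi\geq 1$. The case $\mu=0$ is exactly Theorem~\ref{lemma:DiReCWDrep01}, so all that remains is to handle each fixed $\mu\geq 1$. The plan is to \emph{pad} the reduction in the proof of Theorem~\ref{lemma:DiReCWDrep01} with $\mu$ candidate attributes whose diversity lower bounds are trivially satisfied, so that the representation-driven hardness carries through unchanged.

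Concretely, I would start from the instance produced by the reduction in Theorem~\ref{lemma:DiReCWDrep01} (from Vertex Cover), with candidate set $C = A\cup D$, voter set of size $n^2$, and target committee size $k$. Introduce $\mu$ fresh ``anchor'' candidates $a_1,\ldots,a_\mu$, append them in a fixed order to the very bottom of every voter's ranking, and raise the target committee size to $k+\mu$. Define $\mu$ candidate attributes where attribute $i\in[\mu]$ induces the two-group partition $G^i_1 = \{a_i\}$ and $G^i_2 = (C\cup\{a_1,\ldots,a_\mu\})\setminus\{a_i\}$, with $l^D_{G^i_1} = l^D_{G^i_2} = 1$. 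These partitions are pairwise distinct (each singleton group contains a different anchor), meeting the stipulation that no two attributes induce the same partition. The voter attributes, populations $\mathcal{P}$, representation constraints $l^R_P = 1$, and computation of each population's winning committee $W_P$ are inherited from Theorem~\ref{lemma:DiReCWDrep01}; because the anchors sit at the very bottom of every ranking, no $a_i$ enters any $W_P$, and the relative top of each $W_P$ is exactly as in the original reduction (the Set~$T_1$ pair for the corresponding edge, followed by Set~$T_2$ entries).

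For correctness, any feasible committee $W$ of size $k+\mu$ in the new instance must include every anchor, forced by $|G^i_1 \cap W|\geq 1$, while $|G^i_2 \cap W|\geq 1$ is automatically met since $|W|-1\geq 1$. This leaves exactly $k$ slots to be filled from $A\cup D$ subject to $|W_P \cap W|\geq 1$ for every $P\in\mathcal{P}$; by the argument of Theorem~\ref{lemma:DiReCWDrep01} these $k$ candidates must encode a vertex cover of the underlying graph. Conversely, any vertex cover of size at most $k$ together with the $\mu$ anchors forms a DiRe committee of size $k+\mu$. Combined with the base case, this yields NP-hardness for every $\mu\geq 0,\pi\geq 1$ even when $l_G^D = l_P^R = 1$.

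The main obstacle is ensuring the padding neither relaxes the representation-driven hardness nor imposes any genuine new restriction, and that the monotone-separable score accounting remains consistent. Placing all anchors uniformly at the bottom of every voter's ranking handles both concerns: each $W_P$ is unaffected (anchors cannot displace any top-$(k+\mu)$ entry), so the one-to-one correspondence between uncovered edges and violated representation constraints from Theorem~\ref{lemma:DiReCWDrep01} is preserved; and the WLOG uniform-utility assumption used there extends verbatim to the $\mu$ added anchors, keeping the feasibility-equals-optimization argument intact. The resulting reduction is polynomial in $m$, $n$, and $\mu$.
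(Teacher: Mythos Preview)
Your padding idea is a genuinely different route from the paper's. The paper simply attaches $\mu$ \emph{trivial} candidate attributes, each with a single group equal to the whole candidate set and $l_G^D=1$, explicitly relaxing the stipulation that every attribute partition $C$ into at least two groups; this makes the diversity constraints vacuous and lets Theorem~\ref{lemma:DiReCWDrep01} carry over verbatim. Your anchor construction instead preserves the two-group stipulation, which is a nice strengthening in spirit.

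There is, however, a real gap in your argument. When you raise the committee size to $k+\mu$, each population's winning committee $W_P$ also grows to size $k+\mu$. You assert that ``the relative top of each $W_P$ is exactly as in the original reduction (the Set~$T_1$ pair \ldots\ followed by Set~$T_2$ entries),'' but this requires $k+\mu \le |T_1|+|T_2| = 2+m$. In the Theorem~\ref{lemma:DiReCWDrep01} reduction $k$ is the vertex-cover parameter of an arbitrary instance, so for any fixed $\mu\ge 3$ there are instances where $k+\mu > m+2$ and $W_P$ spills into Set~$T_3$. Once $W_P$ contains Set~$T_3$ candidates (which are other vertex candidates $c_\ell\in A$), a committee can satisfy $|W_P\cap W|\ge 1$ via some $c_\ell$ that is \emph{not} an endpoint of the edge corresponding to $P$, and the backward direction of the correspondence with vertex cover collapses. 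The fix is easy---e.g., enlarge each Set~$T_2$ from $m$ to $m+\mu$ fresh dummies per voter block (and the global dummy pool accordingly), so that $|T_1|+|T_2|=2+m+\mu\ge k+\mu$ always holds---but it needs to be said, since without it your claim that ``the argument of Theorem~\ref{lemma:DiReCWDrep01}'' applies to the remaining $k$ slots is not justified.
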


The reduction in the proof of Theorem~\ref{lemma:DiReCWDrep01} holds $\forall \pi \in \mathbb{Z} : \pi\geq1$ as each voter in the reduction can belong to more than one population. Next, as focus of this section was to understand the computational complexity with respect to representation constraints, we ease the stipulation that required each candidate attribute to partition all candidates into \emph{more than two} groups. Hence, for each candidate attribute $A_i$, $\forall i \in [\mu]$, we simply create one group that consists of \emph{all} the candidates and set $l_G^D=1$ for all $G\in\mathcal{G}$ and the problem still remains NP-hard. 

\subsubsection{\texorpdfstring{\DiReCWD}{} w.r.t. submodular scoring function}

Chamberlin-Courant (CC) rule is a well-known monotone, submodular scoring function \cite{celis2017multiwinner}, which we use for our proof. The novelty of our reduction is that it
holds for determining the winning committee using CC rule that uses \emph{any} positional scoring rule with scoring vector $\mathbf{s} = \{s_1,\dots,s_m\}$ such that $s_1 = s_2$, $s_m\geq0$, and $ \forall i \in [3, m-1], s_i \in \mathbb{Z}$ : $s_i\geq s_{i+1}$ and $s_2 > s_i$.


The following theorem and corollary proves the statement in Corollary~\ref{cor:DiReClass}(1). 
\begin{theorem}\label{lemma:DiReCWDsubmod}
If   $\mathtt{f}$ is a monotone, \textbf{submodular} function, then \DiReCWD 
is NP-hard even when $\mu=0$ and $\pi=0$.
\end{theorem}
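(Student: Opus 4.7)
The plan is to reduce from the Vertex Cover (VC) problem on arbitrary graphs, as announced earlier in the section, and to exploit the structural constraint on the scoring vector ($s_1 = s_2$ and $s_2 > s_i$ for $i \geq 3$) that the theorem allows. Given an instance $(H = (X,E), k)$ of VC with $|X| = m$ and $|E| = n$, I construct a \DiReCWD instance as follows. Candidates are $C = \{c_1, \ldots, c_m\}$, one per vertex. Voters are $V = \{v_e : e \in E\}$, one per edge. For each edge $e = \{x_i, x_j\}$ with $i < j$, voter $v_e$'s preference list is $c_i \succ c_j \succ \sigma_e$, where $\sigma_e$ is any fixed ordering of $C \setminus \{c_i, c_j\}$ (say, by index). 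The committee size is $k$, the voting rule is Chamberlin--Courant with the specified positional scoring vector $\mathbf{s}$, and $\mu = \pi = 0$ (so diversity and representation constraints are vacuous, which is exactly the regime in the theorem statement). The target CC score is $n \cdot s_1$. Clearly the reduction is polynomial in $m + n$.

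The forward direction is straightforward. Given a vertex cover $S$ with $|S| \leq k$, extend $S$ arbitrarily to size exactly $k$ and let $W$ be the corresponding candidates. For every edge $e = \{x_i, x_j\}$, the cover property ensures $c_i \in W$ or $c_j \in W$, so $v_e$'s CC-representative sits at position $1$ or $2$ in $v_e$'s preference and contributes $s_1 = s_2 = s_1$. Summing over the $n$ voters yields CC score exactly $n \cdot s_1$.

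The reverse direction is where the hypothesis $s_1 = s_2 > s_3$ does the real work. Suppose $W$ has size $k$ and CC score at least $n \cdot s_1$. Since each voter contributes at most $s_1$ (the maximum entry of $\mathbf{s}$), and the sum is at least $n \cdot s_1$, every voter contributes exactly $s_1$. Because $s_1 > s_i$ strictly for all $i \geq 3$, the only candidates that can yield score $s_1$ for voter $v_e$ are those at positions $1$ and $2$, namely $c_i$ and $c_j$. Hence $W \cap \{c_i, c_j\} \neq \emptyset$ for every edge $e$, so $S := \{x_i : c_i \in W\}$ is a vertex cover of size at most $k$.

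The main obstacle I anticipate is not algorithmic but careful bookkeeping: one must verify that no ``cheating'' committee can reach the target score without corresponding to a vertex cover, and this is precisely what the strict inequality $s_2 > s_3$ rules out (a committee containing only candidates ranked $\geq 3$ by some voter $v_e$ would yield strictly less than $n \cdot s_1$). It is worth noting that no dummy candidates are required, since the voter rankings are total and well-defined over $C$ alone, and since $k \leq m$ can be assumed without loss of generality. The corollary Corollary~\ref{cor:DiReCWDsubmod}, extending the result to all $\mu, \pi \geq 0$, will then follow by trivially padding with arbitrary candidate attributes whose groups equal all of $C$ (with $l_G^D = 1$) and voter attributes whose populations equal all of $V$ (with $l_P^R = 1$), which impose no real constraint.
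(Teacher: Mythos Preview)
Your proposal is correct and follows essentially the same approach as the paper: the paper also reduces from Vertex Cover with one candidate per vertex and one voter per edge, places the two endpoint candidates in positions $1$ and $2$ of that voter's ranking, and uses the fact that $s_1 = s_2 > s_i$ for $i \geq 3$ to equate ``misrepresentation zero'' (i.e., CC score $n \cdot s_1$) with every edge being covered. Your write-up is in fact slightly more explicit than the paper's about the target score and the strictness argument in the reverse direction, and your remark that arbitrary (not necessarily $3$-regular) graphs suffice matches the paper's own footnote to that effect.
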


\begin{proof}
We reduce an instance of vertex cover (VC) problem to an instance of \DiReCWD. Each candidate $c_i \in C$ corresponds to a vertex $x_i \in X$. For each edge $e \in E$, we have a voter $v \in V$ whose complete linear order is as follows: the top two most preferred candidates correspond to the two vertices connected by an edge $e$. These two candidates are ranked based on their indices. The remaining $m-2$ candidates are ranked in the bottom $m-2$ positions, again based on their indices. We set the committee size to $k$. This is a polynomial time reduction in the size of $n$ and $m$.

For the proof of correctness, we note that there are no candidate and voter attributes, and thus, no diversity and representation constraints. Hence, we show the following:
\begin{claim}
We have a vertex cover $S$ of size at most $k$ that satisfies $e \cap S\neq\phi$ for all $e \in E$ if and only if we have a committee $W$ of size at most $k$ with total misrepresentation of zero, which means that at least one of the top 2 ranked candidates of each voter is in the committee $W$. 
\end{claim}

($\Rightarrow)$ If the instance of the VC problem is a yes instance, then the corresponding instance of \DiReCWD is a yes instance as each and every voter will have at least one of their top two candidates in the committee and this will result in a misrepresentation score of zero as $s_1 = s_2$ and $\forall i \in [3,m], s_i < s_2$. 

($\Leftarrow$) If the instance of \DiReCWD is a yes instance, then the VC is also a yes instance. When a committee $W$ does not represent a voter's one of the top-2 candidates, it implies that the dissatisfaction is greater than zero. Hence, for each voter $v \in V$ that is not represented, there exists an edge $e \in E$ that is not covered. Hence, we can say that \DiReCWD is NP-hard with respect to $\mu=0$, $\pi=0$ and $\mathtt{f}=$ submodular function.
\end{proof}

\begin{corollary}\label{cor:DiReCWDsubmod}
If $\forall \mu \in \mathbb{Z} : \mu\geq0$, $\forall \pi \in \mathbb{Z} : \pi\geq0$, and   $\mathtt{f}$ is a monotone, \textbf{submodular} function, then \DiReCWD 
is NP-hard, even when $\forall G \in \mathcal{G}$, $l_G^D=1$ and $\forall P \in \mathcal{P}$, $l_P^R=1$.
\end{corollary}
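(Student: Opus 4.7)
The plan is to reduce the corollary to the two already-established hardness results and handle the remainder by a trivial padding argument. Since the corollary quantifies over every $\mu \geq 0$ and every $\pi \geq 0$, I must exhibit NP-hardness for each such pair.

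First, I would invoke the observation from Section~\ref{sec:prelim} that every monotone separable scoring function is also monotone submodular. Consequently, any reduction establishing NP-hardness with a separable \csr automatically produces an instance whose selection rule is submodular (the specific hard instance lies in the more general class). Applying this to Corollary~\ref{lemma:DiReCWDrep} immediately settles every case with $\pi \geq 1$ and arbitrary $\mu \geq 0$, since that corollary already enforces $l_G^D = 1$ and $l_P^R = 1$.

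For the remaining region $\pi = 0$, I would use Theorem~\ref{lemma:DiReCWDsubmod} directly when $\mu = 0$. When $\mu \geq 1$, I pad the instance built in that theorem with $\mu$ additional candidate attributes, each consisting of a single group containing \emph{all} candidates with lower bound $l_G^D = 1$. Because every candidate lies in that sole group and every size-$k$ committee is nonempty, these diversity constraints are vacuously met and the vertex cover reduction inside Theorem~\ref{lemma:DiReCWDsubmod} goes through verbatim: a committee of size $k$ with zero CC misrepresentation exists if and only if the underlying graph admits a vertex cover of size $k$.

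The main (and essentially only) obstacle is to verify that padding with single-group attributes does not alter the feasibility structure of the base reduction. For diversity constraints this is immediate, since $|G \cap W| \geq 1$ holds whenever $G = C$ and $W \neq \emptyset$. For representation constraints, rather than attempting to graft an extra population onto the Theorem~\ref{lemma:DiReCWDsubmod} reduction (which would force us to specify a consistent $W_P$ without perturbing the CC gadget that encodes the edge structure), I lean on Corollary~\ref{lemma:DiReCWDrep} where this book-keeping has already been carried out. Together the two cases cover every admissible $(\mu, \pi)$, finishing the argument.
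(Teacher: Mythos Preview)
Your proof is correct, but it takes a somewhat different route from the paper's. The paper argues uniformly from Theorem~\ref{lemma:DiReCWDsubmod}: starting from the $(\mu,\pi)=(0,0)$ CC-based reduction, it pads \emph{both} sides by adding, for each of the $\mu$ candidate attributes, a single group $G=C$, and for each of the $\pi$ voter attributes, a single population $P=V$, declaring all these constraints vacuous. You instead split on $\pi$: for $\pi=0$ you pad exactly as the paper does, while for $\pi\geq 1$ you invoke Corollary~\ref{lemma:DiReCWDrep} together with the containment separable $\subseteq$ submodular, so that the hard separable instances already lie in the submodular class.

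What each approach buys: the paper's argument is shorter and uniform, but when it adds the trivial population $P=V$ it must also supply $W_P$ as part of the instance, and under the CC rule this is the globally optimal committee --- something the reduction cannot compute in polynomial time unless one takes the liberal reading that $W_P$ is arbitrary input data not required to be the actual winning committee. The paper glosses over this. Your detour through Corollary~\ref{lemma:DiReCWDrep} sidesteps the issue entirely, since that reduction uses a separable rule under which every $W_P$ is computable in polynomial time; the concern you flag about ``specifying a consistent $W_P$ without perturbing the CC gadget'' is therefore well placed, and your workaround is the cleaner of the two on that point.
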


The proof of Theorem~\ref{lemma:DiReCWDsubmod} shows that when we use a submodular but not separable \csr $\mathtt{f}$, \DiReCWD is NP-hard even when $\mu=0$ and $\pi=0$. Next, as focus of this section was to understand the computational complexity with respect to monotone submodular scoring rule, we ease the stipulation that required each candidate attribute to partition all candidates into \emph{more than two} groups and required each voter attribute to partition all voters into \emph{more than two} population. The problem remains hard even when we have candidate attributes and the diversity constraints are set to one and have voter population and the representation constraints set to one. Specifically, for each candidate attribute, create one group that contains \emph{all} the candidates and for each voter attribute, create one population that contains \emph{all} the voters. This is analogous to \emph{not} having any candidate or voter attributes. Hence, even when $l^D_G=1$ for all $G\in \mathcal{G}$ and $l^R_P=1$ for all $P\in\mathcal{P}$, \DiReCWD 
is NP-hard if  $\mathtt{f}$ is submodular \csr.

\section{Inapproximability and Parameterized Complexity}
\label{sec:apxTract}

\begin{savenotes}
\begin{table*}[t!]
\centering

\resizebox{\textwidth}{!}{\begin{tabular}{|l|l||c|c|c|}
\hline
Result & Parameter & ($\geq3$,0) & (0,$\geq1$) & ($\geq1$,$\geq1$)\\
\hline\hline
\multirow{1}{*}{inapproximability} & - & $(1-\varepsilon)\cdot$ $(\ln \mu - \mathcal{O}(\ln \ln \mu))$ (Thm.~\ref{thm:apxtract/30apx}) & $k-\varepsilon$ (Thm.~\ref{thm:apxtract/01apx})\footnote{For Theorem~\ref{thm:apxtract/01apx}, we assume that the Unique Games Conjecture (UGC) \cite{khot2002power} holds, specifically as the result that showed pseudorandom sets in the Grassmann graph have near-perfect expansion completed the proof of 2-to-2 Games Conjecture \cite{subhash2018pseudorandom}, which is considered to be a significant evidence towards proving the UGC. Moreover, GapUG($\frac{1}{2}$, $\varepsilon$) is found to be NP-hard, i.e., a weaker version of the UGC holds with completeness $\frac{1}{2}$ (See \cite{dinur2018towards} and ``Evidence towards the Unique Games Conjecture'' in \cite{subhash2018pseudorandom} for more details). Without the assumption on UGC, the result for our problem when $\mu=0$ and $\pi\geq1$ will change and for arbitrarily small constant $\varepsilon > 0$, the problem is inapproximable within a factor of $k-1-\varepsilon$ for every integer $k\geq3$ \cite{dinur2005new} and within a factor of $\sqrt{2}-\varepsilon$ when $k=2$ \cite{subhash2018pseudorandom,khot2017independent}.} & $(1-\varepsilon)\cdot$ $\ln$ ($|\mathcal{G}|+|\mathcal{P}|$) (Thm.~\ref{thm:apxtract/11apx})\\
\hline\hline
\multirow{1}{*}{parameterized} & $k$ is constant &\multicolumn{3}{|c|}{$\mathcal{O}(m^k\cdot (|\mathcal{G}|+|\mathcal{P}|))$ (Obs.~\ref{thm:apxtract/kconstant})}\\
\cline{2-5}
complexity & $k \ll  m$  & W[2]-hard (Cor.~\ref{cor:DiReCWDFPT30}) &$\mathcal{O}(c^k+m)$ (Thm.~\ref{thm:RCWDFPT})& W[2]-hard (Cor.~\ref{cor:DiReCWDFPT11})\\
\hline
\end{tabular}}
\caption{A summary of inapproximability and parameterized complexity of \DiReCF. The value in brackets of the header row represent the values of $\mu$ and $\pi$, respectively, such that results hold for all $\mu \in \mathbb{Z}$ and all $\pi \in \mathbb{Z}$ that satisfy the condition stated in the brackets. The results are under the assumption P $\neq$ NP. `Thm.' denotes Theorem. `Obs.' denotes Observation. `Cor.' denotes Corollary. $\varepsilon$ denotes an arbitrarily small constant such that $\varepsilon>0$ and the results are meant to hold for every such $\varepsilon>0$.}
\label{tab:apxTractResults}
\end{table*}
\end{savenotes}

The hardness of \DiReCWD is mainly due to the hardness of \DiReCF, which is to say that satisfying the diversity and representation constraints is computationally hard, even when all constraints are set to 1. Formally, the hardness remains even when $l^D_G=1$ for all $G \in\mathcal{G}$ and $l^R_P=1$ for all $P\in\mathcal{P}$. Hence, in this section, we focus on the hardness of approximation to understand the limits of how well we can approximate \DiReCF and focus on parameterized complexity of \DiReCF. 

It is natural to try to reformulate representation constraints as diversity constraints. However, in our model, it is not possible to do so as each candidate attribute partitions \emph{all} $m$ candidates into groups and the lower bound is set such that $l_G^D \in [1, \min(k, |G|)]$ for all $G \in \mathcal{G}$. However, for representation constraints, $W_P$, for all $P \in \mathcal{P}$, contains only $k$ candidates and the remainder $m-k$ candidates consisting of $C \setminus W_P$, for all $P \in \mathcal{P}$, may never be selected. Hence, representation constraints can not be easily reformulated to diversity constraints. Moreover, even if we relax the lower bound of the diversity constraint to $l_G^D \in [0, \min(k, |G|)]$ instead of $l_G^D \in [1, \min(k, |G|)]$, for all $G \in \mathcal{G}$, to allow for such a reformulation, the following settings of \DiReCF and \DiReCWD are technically different and we \emph{may not} carry out any reformulations amongst each other:

\begin{itemize}
    \item Using only diversity constraints 
    \item Using only representation constraints
    \item Using both, diversity and representation, constraints 
\end{itemize}

The above listed settings are technically different from each other as the sizes of candidate groups and the size of the winning committees of populations have implications on our approach to solve a problem. For instance, using both, diversity and representation, constraints and using only representation constraints are mathematically as different as the vertex cover problem on hypergraphs and the vertex cover problem on $k$-uniform hypergraphs, respectively. The differences between the hardness of approximation for the latter two problems is well-known. Overall, while reformulations such as converting representation constraints to diversity constraints do not impact the computational complexity of the problem, it affects the approximation and parameterized complexity results. Hence, we study the hardness of approximation and the parameterized complexity of the above listed settings of \DiReCF in detail without carrying out any reformulations between the different settings of the constraints. 

\begin{observation}
\label{obs:direcfequi}$\forall \mu \in \mathbb{Z}$ and $\forall \pi \in \mathbb{Z}$, the following settings of the \DiReCF problem are not equivalent: (i) $\mu$=0 and $\pi\geq1$, (ii) $\mu\geq3$ and $\pi=0$, and (iii) $\mu\geq1$ and $\pi\geq1$.
\end{observation}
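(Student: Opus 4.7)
\textbf{Proof Plan for Observation~\ref{obs:direcfequi}.} The plan is to establish pairwise non-equivalence by pointing to the structural asymmetries baked into the definitions of the two constraint families, and then exhibiting small witness instances that cannot be faithfully re-expressed inside the other settings. Here ``equivalent'' is to be read in the structure-preserving sense already hinted at in the paragraphs preceding the observation, namely a mapping between instances that preserves the sizes and incidence patterns of the constraint sets; this is the right notion because it is exactly these combinatorial features that drive the distinct inapproximability and parameterized-complexity profiles summarized in Table~\ref{tab:apxTractResults}, not mere polynomial-time Turing reducibility (which would be trivial under mutual \nphard-ness).

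First, I would collect the defining structural differences directly from Definition~\ref{def-DiReCF}. In setting (ii), each candidate attribute $A_i$ induces a \emph{partition} of the whole candidate set $C$ into groups $A_{(i,1)},\dots,A_{(i,g_i)}$; hence every candidate is covered by exactly $\mu \geq 3$ groups, groups associated with the same attribute are pairwise disjoint, and the union of those groups is all of $C$. In setting (i), by contrast, each constraint set is a committee $W_P$ of size exactly $k$, and these sets (a) need not partition $C$, (b) may overlap arbitrarily across populations, and (c) leave every candidate in $C\setminus \bigcup_{P\in\mathcal{P}} W_P$ incapable of contributing to any representation constraint.

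Next, I would exhibit three witness instances. For (ii) $\not\Rightarrow$ (i): take any $\mu=3$ instance in which some group $G\in\mathcal{G}$ has $|G|>k$ (which is admissible since $l_G^D\in[1,\min(k,|G|)]$); this group cannot be encoded as a representation set because every $W_P$ in setting (i) has size exactly $k$. For (i) $\not\Rightarrow$ (ii): take a $\pi\geq 1$ instance in which two populations $P,P'$ have overlapping winning committees $W_P \cap W_{P'} \neq \emptyset$; since groups induced by a single attribute are disjoint, no attribute in setting (ii) can host both $W_P$ and $W_{P'}$ as groups, and encoding them under different attributes breaks the partition-of-$C$ requirement because their union is not $C$. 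For (iii) $\not\Rightarrow$ (i) and (iii) $\not\Rightarrow$ (ii): any instance that simultaneously includes a diversity constraint of the first kind and a representation constraint of the second kind forces both structural features, which neither setting alone can host.

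The main obstacle will be pinning down ``equivalent'' precisely enough to exclude vacuous reformulations (e.g., padding $W_P$ with artificial candidates, or introducing ghost attributes that mimic overlaps), while keeping the definition broad enough to match the paper's intent. I would therefore fix the notion of equivalence as a bijection on candidates that carries the constraint-set family of one instance to the constraint-set family of another while preserving set sizes, incidences, and the ambient partition-of-$C$ requirement mandated by candidate attributes; the three witnesses above then rule out such a bijection in every missing direction, which completes the argument.
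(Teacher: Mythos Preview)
Your proposal is correct in spirit and identifies exactly the structural asymmetries the paper highlights, but it takes a genuinely different route from the paper. The paper does not give a standalone proof of this observation; instead, it treats the statement as a summary of the informal discussion immediately preceding it (that diversity groups partition \emph{all} of $C$ while each $W_P$ has size exactly $k$, so the two constraint families are structurally different in the same way that vertex cover on arbitrary hypergraphs differs from vertex cover on $k$-uniform hypergraphs), and then backs the claim \emph{a posteriori} with the distinct inapproximability ratios and parameterized-complexity statuses recorded in Table~\ref{tab:apxTractResults} (in particular, setting (i) is FPT in $k$ while (ii) and (iii) are W[2]-hard, and the three settings carry pairwise different inapproximability bounds).

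What you do differently is attempt a direct, instance-level separation: you fix a structure-preserving notion of equivalence and exhibit explicit witness instances that cannot be mapped across settings. This buys you a self-contained argument that does not lean on the later theorems, at the cost of having to pin down ``equivalence'' carefully enough to defeat padding tricks---a difficulty you correctly flag. The paper's approach buys simplicity and, arguably, a stronger conclusion: if (i) admits an FPT algorithm in $k$ (Theorem~\ref{thm:RCWDFPT}) while (ii) and (iii) are W[2]-hard (Corollaries~\ref{cor:DiReCWDFPT30} and~\ref{cor:DiReCWDFPT11}), then no \emph{parameter-preserving} polynomial-time reformulation can exist between (i) and the other two, regardless of how liberally one interprets ``equivalent.'' Your structural witnesses and the paper's complexity-theoretic separation are thus complementary; either suffices for the observation, but the paper's is shorter because it recycles results it proves anyway.
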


\subsection{Inapproximability} 
\label{sec:apxTract/apx}

In this subsection, we focus on allowing size violation as deciding on which constraints to violate is not straightforward, especially as constraints are linked to human groups. Hence, we define the size optimization version of \DiReCF and study its inapproximability:

\vspace{0.1cm}
\begin{definition}\textbf{\DiReCF-size-optimization:}
In the \DiReCF-size-optimization problem, 
given a set $C$ of $m$ candidates, a set $V$ of $n$ voters such that each voter $v_i$ has a preference list $\succ_{v_i}$ over $m$ candidates, a committee size $k \in [m]$, a set of candidate groups $\mathcal{G}$ and the corresponding diversity constraints $l^D_G$ for all $G \in \mathcal{G}$, and a set of voter populations $\mathcal{P}$ and the corresponding representation constraints $l^R_P$ and the winning committees $W_P$ for all $P \in \mathcal{P}$,
find a minimum-size committee $W \subseteq C$ such that $W$ satisfies all the diversity and representation constraints, i.e., $|G\cap W|\geq l^D_G$ for all $G\in \mathcal{G}$ and $|W_P\cap W|\geq l^R_P$ for all $P\in \mathcal{P}$, respectively.
\end{definition}
\vspace{0.1cm}
\begin{theorem}\label{thm:apxtract/30apx}
For   $\varepsilon > 0$, $\forall \mu \in \mathbb{Z}$ : $\mu\geq3$, and  $\pi=0$, \DiReCF-size-optimization
problem 
is 
inapproximable within $(1-\varepsilon)\cdot$ $(\ln \mu - \mathcal{O}(\ln \ln \mu))$, even when $l^D_G$ = 1  $\forall$ $G \in \mathcal{G}$.
\end{theorem}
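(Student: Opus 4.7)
The plan is to give a gap-preserving reduction from Minimum Set Cover restricted to instances in which every set has size at most $\mu$. By the known inapproximability for this variant, its optimum cannot be approximated within $(1-\varepsilon)(\ln \mu - \Theta(\ln \ln \mu))$ unless $\mathrm{P}=\mathrm{NP}$; transferring this lower bound through the reduction would then yield the claim for \DiReCF-size-optimization in the regime $\mu \geq 3$, $\pi = 0$. The choice of bounded-size Set Cover is natural here because in \DiReCF each candidate belongs to exactly one group per attribute, so a single candidate ``covers'' exactly $\mu$ groups, mirroring the role of a size-$\mu$ set in the covering problem.

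First, given a Set Cover instance $(U,\mathcal{S})$ with $|S|\leq \mu$ for every $S\in\mathcal{S}$, I would pad each $S$ with private dummy elements $d_{S,1},\dots,d_{S,\mu-|S|}$ so that $|S|$ becomes exactly $\mu$; since each dummy appears in only one set, it contributes trivially to either optimum. I would create one candidate $c_S$ per padded set, introduce $\mu$ candidate attributes $A_1,\dots,A_\mu$, and fix a bijection $\phi_S : S \to [\mu]$ for every $S$. Under attribute $A_j$ the group $G_{j,u}$ contains exactly those $c_S$ with $\phi_S(u)=j$; by construction every attribute partitions $C$ into disjoint groups, and each $c_S$ lies in precisely one group per attribute. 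Finally I would set $l^D_G = 1$ for every non-empty group $G$, so that a committee $W$ is feasible exactly when, for every occupied pair $(j,u)$, some $c_S \in W$ satisfies $\phi_S(u)=j$.

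The main obstacle is aligning ``covering an element $u$'' in Set Cover with ``covering a group $G_{j,u}$'' in \DiReCF across sets, since the same $u \in U$ may land at different positions $j$ in different sets and this would spuriously inflate the \DiReCF optimum by up to a factor of $\mu$. I would remove this slack by choosing the maps $\phi_S$ consistently with a \emph{rainbow coloring} $\chi : U \to [\mu]$ so that $\phi_S(u) = \chi(u)$ for every original $u \in S$, forcing each original element to live in a single canonical group $G_{\chi(u), u}$ across all attributes. For instances that do not natively admit such a $\chi$, I would introduce $O(1)$ auxiliary copies $c_{S,\sigma}$ indexed by valid position assignments and argue that projecting any \DiReCF solution back onto its underlying sets loses only a constant factor, which is absorbed by the $\ln \mu - O(\ln \ln \mu)$ gap. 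Reconciling the rigid attribute-partition constraint with the free hitting-set structure of Set Cover without sacrificing more than a $1+o(1)$ factor is where the technical weight of the proof concentrates.

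Lastly, I would verify gap preservation directly. Completeness: a set cover $\mathcal{S}^\ast$ of size $k$ yields the committee $\{c_S : S \in \mathcal{S}^\ast\}$, which meets every non-empty $G_{j,u}$ because $\chi$ routes each original $u$ into $G_{\chi(u),u}$ and some $S \in \mathcal{S}^\ast$ contains $u$. Soundness: any feasible committee $W$ of size $k$ gives the set family $\{S : c_S \in W\}$ of size $\leq k$ covering $U$, because each $u \in U$ generates the non-empty group $G_{\chi(u),u}$ whose covering candidate corresponds to a set containing $u$. Combined with the bounded-size Set Cover lower bound, this gap-preserving reduction yields the claimed $(1-\varepsilon)(\ln \mu - O(\ln \ln \mu))$ inapproximability for \DiReCF-size-optimization.
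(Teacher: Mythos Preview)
Your core strategy---reduce from Set Cover with sets of size at most $\mu$ and invoke Trevisan's $(1-\varepsilon)(\ln \mu - \mathcal{O}(\ln\ln\mu))$ lower bound---is exactly what the paper does. The paper's reduction is the obvious one: one candidate $c_S$ per set $S$, one group $G_u=\{c_S : u\in S\}$ per universe element, $l^D_G=1$ everywhere; feasibility of a committee $W$ is then literally the set-cover condition, and the reduction is trivially approximation-preserving.

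Where you diverge is in trying to organize the groups into $\mu$ bona fide \emph{attributes}, each partitioning $C$. The paper simply does not do this: it observes that every candidate lies in at most $\mu$ groups and stops there, never assigning groups to attributes. You are right that this is a modeling looseness in the paper, but by taking it seriously you have manufactured a real obstacle that you do not resolve. A rainbow coloring $\chi:U\to[\mu]$ making every size-$\mu$ set polychromatic need not exist: with $\mu=3$ and sets $\{a,b,c\},\{a,b,d\},\{a,c,d\}$, fixing $\chi(a),\chi(b),\chi(c)$ to $1,2,3$ forces $\chi(d)=3$ from the second set and $\chi(d)=2$ from the third. Your fallback of ``$O(1)$ auxiliary copies $c_{S,\sigma}$'' is too vague to carry the argument---you would need to bound how many position-assignments $\sigma$ are required and show the projection loses only a constant factor, and neither is obvious.

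So either follow the paper and drop the attribute-partition requirement (treating $\mu$ purely as a per-candidate group bound, which makes the reduction a two-line affair), or commit to a concrete padding scheme and prove it preserves the optimum up to $1+o(1)$; as written, the ``technical weight'' you flag is genuinely unsupported.
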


\begin{proof}
We reduce from the set multi-cover problem with sets of bounded size, a known NP-hard problem \cite{garey1979computers}, to \DiReCFp{$\mu$}{$0$}-size-optimization
problem.  

More specifically, given a set $X$ = $\{v_1, ..., v_{|\mathcal{G}|}\}$, and a collection of $m$ sets $S_i$ $\subseteq$ $X$ such that $|S_i|\leq\mu$, the goal is to choose some sets of minimum cardinality covering each element $v_i$.


Then, we construct a \DiReCFp{$\mu$}{$0$}-size-optimization instance. To do so, we have a corresponding candidate $c_i$ for each set $S_i$, and a corresponding group $G \in \mathcal{G}$ which is equal to $\{c_j : v_i \in S_j\}$ for each element $v_i$. Hence there are $m$ candidates and $|\mathcal{G}|$ candidate groups such that each candidate belongs to at most $\mu$ groups. The diversity constraints $l^D_G$ are set to be equal to 1, which corresponds to the requirement that each element is covered. 

This is an approximation-preserving reduction for all $\mu\geq3$ and $\pi=0$. Hence, the minimum cardinality of the constrained set cover problem is at most $k$ if and only if an at most $k$-sized feasible committee exists. Given that set multi-cover problem is inapproximable within $(1-\varepsilon)\cdot$ $(\ln \mu - \mathcal{O}(\ln \ln \mu))$ \cite{trevisan2001non}, so is our \DiReCFp{$3$}{$0$}-size-optimization  problem. We note that this result holds for \DiReCF-size-optimization problem for all $\mu\in\mathbb{Z}:\mu\geq3$ and $\pi=0$. 
\end{proof}

While the above proof is similar in flavor to the one given in Theorem 7 (Hardness of feasibility with committee violations) of Celis \etal \cite{celis2017multiwinner}, we note that our inapproximability ratio differs from their inapproximability ratio of $(1-\varepsilon)\cdot$ \emph{$\ln$} ($|\mathcal{G}|$). This is because our ratio exploits the candidate structure where each candidate is bounded by the number of attributes $\mu$, which bounds the number of groups they can be a part of. Hence, our reduction is from set cover problem where each set is of bounded size.

\vspace{0.1cm}

For our next result, we first give a reduction from regular hitting set (HS) to \DiReCFp{$1$}{$1$}. Next, as the regular HS problem is equivalent to the minimum set cover problem \cite{niedermeier2003efficient}, 
the latter's inapproximability \cite{dinur2014analytical} holds for our problem.

\begin{theorem}\label{thm:apxtract/11apx}
For   $\varepsilon > 0$,  $\forall \mu \in \mathbb{Z}$ : $\mu\geq 1$, and $\forall \pi \in \mathbb{Z}$ : $\pi\geq1$, \DiReCF-size-optimization problem 
is 
inapproximable within a factor of $(1-\varepsilon)\cdot$ \emph{$\ln$} ($|\mathcal{G}|+|\mathcal{P}|$), even when $l^D_G$ = 1  $\forall$ $G \in \mathcal{G}$ and $l^R_P$ = 1  $\forall$ $P \in \mathcal{P}$.
\end{theorem}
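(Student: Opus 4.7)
My plan is to give an approximation-preserving reduction from minimum hitting set to \DiReCF-size-optimization at $\mu = 1$, $\pi = 1$. By the equivalence between minimum hitting set and minimum set cover \cite{niedermeier2003efficient} and the $(1-\varepsilon)\ln n$ inapproximability of set cover due to Dinur and Steurer \cite{dinur2014analytical}, the reduction will yield the claimed bound.

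Given a hitting set instance with universe $U = \{u_1, \ldots, u_N\}$ and sets $S_1, \ldots, S_t \subseteq U$, I would set $k = \max_i |S_i|$ and introduce candidates $C = \{c_1, \ldots, c_N\} \cup D$, where $c_j$ corresponds to $u_j$ and $D$ is a pool of dummy candidates. For the unique candidate attribute, declare the sole group to be $G_1 = C$ with $l^D_{G_1} = 1$, trivially met by any non-empty committee. For the unique voter attribute, introduce $t$ populations $P_1, \ldots, P_t$ with pairwise-disjoint dummy sets $D_i \subseteq D$ satisfying $|D_i| = k - |S_i|$, and declare the input winning committees to be $W_{P_i} = \{c_j : u_j \in S_i\} \cup D_i$ with $l^R_{P_i} = 1$. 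Since \DiReCF takes $W_{P_i}$ as part of its input, voter preferences inside each $P_i$ may be set arbitrarily (or fabricated so that the top-$k$ positions of voters in $P_i$ match $W_{P_i}$ under any standard positional rule).

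For correctness, any hitting set $H$ of size $h$ induces the feasible committee $W_H = \{c_j : u_j \in H\}$ of size $h$, because every $H \cap S_i \neq \emptyset$. Conversely, for any feasible committee $W$ I would apply a \emph{dummy substitution argument}: whenever $d \in W \cap D_i$, replace $d$ with an arbitrary $c_j$ such that $u_j \in S_i$. Since the $D_i$'s are pairwise disjoint, each dummy uniquely identifies the population it is servicing, so the substitution preserves feasibility without increasing $|W|$. The resulting committee $W'$ lies entirely in $\{c_1, \ldots, c_N\}$ and must still intersect every $W_{P_i}$, so $\{u_j : c_j \in W'\}$ is a hitting set of size at most $|W|$. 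Thus the two optima coincide, and any $\alpha$-approximation for the DRCF-size-optimization instance yields an $\alpha$-approximation for hitting set. Since $|\mathcal{G}|+|\mathcal{P}| = 1 + t$, the $(1-\varepsilon)\ln t$ hardness of hitting set transfers (after absorbing the $o(1)$ gap between $\ln t$ and $\ln(1+t)$ into $\varepsilon$) to the claimed $(1-\varepsilon)\ln(|\mathcal{G}|+|\mathcal{P}|)$ bound for every $\varepsilon > 0$.

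To lift the result to arbitrary $\mu \geq 1$ and $\pi \geq 1$, I would pad the instance with $\mu - 1$ additional candidate attributes, each consisting of a single group $C$ with lower bound $1$, and $\pi - 1$ additional voter attributes replicating the population structure with identical winning committees and representation lower bounds of $1$. This padding keeps the optimum unchanged. The main obstacle I expect is verifying that the dummy-substitution step remains valid in the presence of padded constraints; this will be handled by ensuring that the padded winning committees are supersets of (or share structure with) the universe candidates, so that any committee comprised of universe candidates automatically satisfies them.
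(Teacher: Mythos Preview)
Your proposal is correct and follows the same high-level route as the paper: an approximation-preserving reduction from hitting set (equivalently set cover), with elements mapped to candidates and sets mapped to constraints, then invoking the Dinur--Steurer $(1-\varepsilon)\ln n$ bound. The one structural difference is that the paper distributes the hitting-set subsets between \emph{both} candidate groups and population winning committees (so $|\mathcal{G}|+|\mathcal{P}|=|\mathcal{Z}|$ exactly), whereas you encode every subset as a population and make the single candidate group trivial ($G_1=C$), giving $|\mathcal{G}|+|\mathcal{P}|=t+1$. Your version is arguably cleaner in two respects: it respects the requirement that each $W_P$ have size exactly $k$ by padding with pairwise-disjoint dummies (the paper simply sets $W_P=T$ without addressing this), and your dummy-substitution argument makes the converse direction explicit. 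The paper's split, on the other hand, makes the diversity constraints genuinely non-trivial, which is closer to the spirit of ``$\mu\geq 1$ and $\pi\geq 1$''; your padding step for larger $\mu$ recovers this only via vacuous extra attributes.
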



\begin{proof}
We reduce from regular hitting set (HS), a known NP-hard problem \cite{garey1979computers}, to \DiReCFp{1}{1}-size-optimization problem. 

An instance of HS consists of a universe $U$ = $\{x_1,x_2,\dots,x_{m}\}$ and a collection $\mathcal{Z}$ of subsets of $U$, each of size $\in$ $[1,m]$. The objective is to find a subset $S \subseteq U$ of size at most $k$ that ensures for all $T \in \mathcal{Z}$, $|S \cap T|\geq1$. 

We construct the \DiReCFp{$1$}{$1$}-size-optimization instance as follows. For each element $x$ in the universe $U$, we have the candidate $c$ in the candidate set $C$. For each subset $T$ in collection $\mathcal{Z}$, we either have candidate group $G \in \mathcal{G}$ or  winning committee $W_P$ of population $P\in \mathcal{P}$. Note that we have $|\mathcal{G}|+|\mathcal{P}|=|\mathcal{Z}|$. We set $l^D_G=1$ for all $G \in \mathcal{G}$ and $l^R_P=1$ for all $P \in \mathcal{P}$, which means $|W \cap G|\geq$1 and $|W \cap W_P|\geq$1, respectively. This corresponds to the requirement that $|S \cap T|\geq1$. 

Hence, we have a subset $S$ of size at most $k$ that satisfies $|S \cap T|\geq1$ if and only if we have a committee $W$ of size at most $k$ that satisfies $|W \cap G|\geq$1 for all $G \in \mathcal{G}$ and $|W \cap W_P|\geq$1 for all $P \in \mathcal{P}$. 

We note that this is also an \emph{approximation-preserving} reduction for all $\mu\geq 1$ and $\pi\geq 1$. Given that minimum set cover problem, which is equivalent to the hitting set problem, is inapproximable within $(1-\varepsilon)\cdot$ \emph{$\ln$} ($|\mathcal{G}|+|\mathcal{P}|$) \cite{dinur2014analytical}, so is our \DiReCFp{$1$}{$1$}-size-optimization problem. We note that this result holds for \DiReCF-size-optimization problem  $\forall \mu\in\mathbb{Z}:\mu\geq1$ and $\forall \pi\in\mathbb{Z}:\pi\geq1$. 
\end{proof}

Assuming the Unique Games Conjecture \cite{khot2002power}, Bansal and Khot~\cite{bansal2010inapproximability} showed that vertex cover problem on $k$-uniform hypergraphs, for any integer $k \geq 2$, is inapproximable within $k - \varepsilon$, even when the $k$-uniform hypergraph is almost $k$-partite. We use this result for our next theorem.

\begin{theorem}\label{thm:apxtract/01apx}
For   $\varepsilon > 0$, $\mu=0$, and $\forall \pi \in \mathbb{Z}$ : $\pi\geq 1$, \DiReCF-size-optimization problem, 
assuming the Unique Games Conjecture \cite{khot2002power}, is inapproximable within $k-\varepsilon$, even when $l^R_P=1$  $\forall$ $P \in \mathcal{P}$.
\end{theorem}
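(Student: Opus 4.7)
The plan is to reduce from Vertex Cover on $k$-uniform hypergraphs, which Bansal and Khot showed is inapproximable within factor $k-\varepsilon$ assuming UGC, to the \DiReCF-size-optimization problem with $\mu=0$ and $\pi=1$. Given a $k$-uniform hypergraph $H=(X,E)$, I construct an instance with committee-size parameter equal to the edge size $k$, one candidate $c_x\in C$ per vertex $x\in X$, and one voter population $P_e\in\mathcal{P}$ per hyperedge $e\in E$. For each $P_e$, I set the winning committee $W_{P_e}=\{c_x:x\in e\}$ (which has exactly $k$ candidates, matching the required size) and the representation constraint $l^R_{P_e}=1$. To keep the instance consistent with the model in which $W_{P}$ is realized by a \csr on voter preferences, every voter in $P_e$ ranks the $k$ candidates $\{c_x:x\in e\}$ in the top $k$ positions, followed by the remaining candidates in the fixed tie-breaking order, as in the construction of the proof of Theorem~\ref{lemma:DiReCWDrep01}; under any monotone positional rule this forces $W_{P_e}$ to be precisely $\{c_x:x\in e\}$.

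Correctness is a straightforward size-preserving equivalence. Given a vertex cover $S\subseteq X$ of size $s$, the committee $W=\{c_x:x\in S\}$ has size $s$, and since $S\cap e\neq\emptyset$ for every $e\in E$, we get $|W\cap W_{P_e}|\geq 1=l^R_{P_e}$ for every $P_e$. Conversely, since $\mu=0$ and no dummy candidates were introduced, any feasible committee $W$ of size $s$ yields the set $S=\{x:c_x\in W\}$ of the same size, and the condition $|W\cap W_{P_e}|\geq 1$ immediately gives $S\cap e\neq\emptyset$ for every hyperedge $e$. Thus the minimum committee size exactly equals the minimum vertex cover size, and the reduction is \emph{approximation-preserving}: any $\alpha$-approximation algorithm for \DiReCF-size-optimization with $\mu=0,\pi=1$ would give an $\alpha$-approximation for VC on $k$-uniform hypergraphs, which by \cite{bansal2010inapproximability} is impossible for $\alpha<k-\varepsilon$ under UGC.

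To cover $\pi\geq 2$, I would add $\pi-1$ further voter attributes, each inducing a (possibly trivial) partition of $V$ together with one population $P$ whose winning committee is $W_P=C$ and whose constraint is $l^R_P=1$; such a constraint is satisfied by any non-empty committee and therefore the minimum feasible size is unchanged. Alternatively, the indexed partition technique from the proof of Theorem~\ref{lemma:DiReCWDrep01} can be used if one insists that each attribute induce a non-trivial partition. The main obstacle is ensuring that the $k-\varepsilon$ factor transfers \emph{without any degradation}; this is handled by the exact match between edge size and committee size, so that each $W_{P_e}$ has size exactly $k$ and each representation constraint mirrors the covering requirement $|S\cap e|\geq 1$ with no slack, dummy candidates, or padding to erode the hardness factor.
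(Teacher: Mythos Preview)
Your proposal is correct and follows essentially the same approach as the paper: both give an approximation-preserving reduction from vertex cover on $k$-uniform hypergraphs (invoking Bansal--Khot under UGC), mapping vertices to candidates and hyperedges to populations whose winning committees are the $k$ candidates corresponding to the hyperedge, with $l^R_P=1$. If anything, you supply more detail than the paper does---you explicitly realize each $W_{P_e}$ via voter preferences and spell out the extension to $\pi\geq 2$, whereas the paper simply asserts the existence of the $W_P$'s and that the result carries over to larger $\pi$.
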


\begin{proof}
We give a reduction from vertex cover problem on $k$-uniform hypergraphs to \DiReCFp{0}{$1$}. 

An instance of vertex cover problem on $k$-uniform hypergraphs consists of a set of vertices $X$ = $\{x_1,x_2,\dots,x_{m}\}$ and a set of $n$ hyperedges $S$, each connecting exactly $k$ vertices from $X$. A vertex cover $X' \subseteq X$ is a subset of vertices such that each edge contains at least one vertex from $X$ (i.e. $s\cap X'\neq\phi$ for each edge $s\in S$). The vertex cover problem on $k$-uniform hypergraphs is to find a vertex cover $X'$ of size at most $d$. 

We construct the \DiReCFp{$0$}{$1$} instance as follows. For each vertex $x \in X$, we have the candidate $c \in C$. For each edge $s \in S$, we have a population's winning committee $W_P$ of size $k$ for all $P \in \mathcal{P}$. Note that we have $|\mathcal{P}|=|S|$. We set $l^R_P=1$ for all $P \in \mathcal{P}$, which means $|W \cap W_P|\geq$1. This corresponds to the requirement that $s\cap X'\neq\phi$. 

Hence, we have a vertex cover $X'$ of size at most $d$  if and only if we have a committee $W$ of size at most $d$ that satisfies $|W \cap W_P|\geq$1 for all $P \in \mathcal{P}$. 

This is an \emph{approximation-preserving} reduction for $\mu=0$ and for all $\pi\geq1$. Given that the vertex cover problem on $k$-uniform hypergraphs is inapproximable within $k-\varepsilon$ \cite{bansal2010inapproximability} assuming the Unique Games Conjecture, so is our \DiReCFp{$0$}{$1$}-size-optimization problem. We note that this result holds for \DiReCF-size-optimization problem for all $\mu=0$ and $\pi\in\mathbb{Z}:\pi\geq1$. 
\end{proof}

In addition to this general inapproximability result, we informally conjecture that improve upon the ratio of $k-\varepsilon$.

\begin{conjecture}
\label{con:DiReCF}
[Informal] If $\mu=0$ and $\forall \pi \in \mathbb{Z}, \pi \geq 1$, then \DiReCF-size-optimization problem can be approximated to at most $k-(1-o(1))\frac{k(k-1)\ln\ln g(\phi)}{\ln g(\phi)}$ using a polynomial time algorithm.
\end{conjecture}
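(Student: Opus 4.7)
The plan is to prove the conjecture via a direct reduction from \DiReCF-size-optimization to the vertex cover problem on $k$-uniform hypergraphs, and then invoke Halperin's SDP-based approximation algorithm~\cite{halperin2002improved} (or the analogous result to which $g(\phi)$ alludes), whose approximation ratio exactly matches the form $k-(1-o(1))\frac{k(k-1)\ln\ln\Delta}{\ln\Delta}$, with $\Delta$ interpreted as $g(\phi)$, the maximum number of populations' winning committees in which any single candidate appears.

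First, I would set up the reduction, which is essentially the inverse of the one used in Theorem~\ref{thm:apxtract/01apx}. Given an instance of \DiReCF-size-optimization with $\mu=0$, $\pi\geq 1$, and $l_P^R=1$ for all $P\in\mathcal{P}$, I construct a hypergraph $H=(X,S)$ where the vertex set $X$ is the candidate set $C$, and for each population $P\in\mathcal{P}$ the winning committee $W_P$ (of size exactly $k$) is a hyperedge $s_P\in S$. Since every $W_P$ has cardinality $k$, the resulting hypergraph is $k$-uniform. A committee $W\subseteq C$ satisfies $|W\cap W_P|\geq 1$ for all $P$ if and only if the corresponding vertex set hits every hyperedge of $H$, so minimum DiRe committees correspond bijectively to minimum vertex covers of $H$. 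The reduction is approximation-preserving with the identity factor.

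Second, I would apply Halperin's approximation algorithm for vertex cover on $k$-uniform hypergraphs, which achieves an approximation ratio of $k-(1-o(1))\tfrac{k(k-1)\ln\ln\Delta}{\ln\Delta}$ in polynomial time, where $\Delta$ is the maximum vertex degree in $H$. Under the reduction, the degree of a vertex $c\in X$ equals the number of winning committees $W_P$ containing candidate $c$; identifying this quantity with $g(\phi)$ yields the claimed bound. Combining the size-preserving reduction with Halperin's guarantee produces a polynomial time algorithm for \DiReCF-size-optimization achieving approximation ratio $k-(1-o(1))\tfrac{k(k-1)\ln\ln g(\phi)}{\ln g(\phi)}$, which strictly improves on the $k-\varepsilon$ inapproximability threshold from Theorem~\ref{thm:apxtract/01apx} whenever $g(\phi)$ is superconstant.

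The main obstacle I anticipate is twofold. First, pinning down $g(\phi)$ precisely: the conjecture's statement does not define it within the excerpt, and for the ratio to be meaningful (and for the reduction to Halperin's result to go through cleanly) one must argue that $g(\phi)$ is a legitimate parameter of the problem instance, most naturally the maximum number of winning committees any candidate belongs to. Second, handling the possibility that some populations share identical winning committees or that the induced hypergraph has parallel edges or low-degree vertices; these edge cases need to be pre-processed (e.g., collapsing duplicate hyperedges and handling candidates with degree below the threshold where the Halperin bound is informative) so that the ratio bound applies uniformly. Because these are routine cleanups rather than genuine complications, I expect the reduction-plus-Halperin strategy to deliver the conjectured bound without significant additional technical machinery.
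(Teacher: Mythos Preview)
The statement you are attempting to prove is explicitly labeled a \emph{conjecture} in the paper, and the paper does \emph{not} supply a proof. What the paper does provide is a one-sentence remark immediately following the conjecture: ``if such a $g(\phi)$ exists and if $\pi=1$, then the stated approximation ratio exists directly due to Halperin~\cite{halperin2002improved}.'' Your proposal is therefore not so much an alternative proof as an elaboration of precisely the sketch the paper already gives---the reduction to $k$-uniform hypergraph vertex cover (the reverse of Theorem~\ref{thm:apxtract/01apx}) followed by Halperin's SDP bound is exactly the intended mechanism.

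The genuine gap, which is the reason the paper states this as a conjecture rather than a theorem, lies in the meaning of $g(\phi)$. The paper defines $g(\phi)$ as ``a function that maps the cohesiveness of the preferences $\phi$ to the maximum number of winning committees $W_P$ that a candidate can belong to.'' In other words, the conjecture is not merely that Halperin's ratio applies with $\Delta$ equal to the instance-specific maximum degree---that much is immediate from the reduction, as you correctly observe---but that this maximum degree can be \emph{bounded a priori} as a function of the cohesiveness parameter $\phi$ alone. Your proposal sidesteps this by redefining $g(\phi)$ to be the max degree of the particular instance, which proves a weaker (and essentially known) statement. Establishing that such a function $g$ of $\phi$ exists, uniformly over instances with cohesiveness $\phi$, is the actual open content of the conjecture, and your proposal does not address it. The ``routine cleanups'' you anticipate are not the obstacle; the obstacle is the combinatorial/probabilistic argument tying voter cohesiveness to candidate-degree bounds in the induced hypergraph, for which neither you nor the paper offers a strategy.
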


A proof of above conjecture implies that there exists a polynomial time approximation algorithm for the \DiReCF-size-optimization problem ($\mu=0$ and $\pi \geq 1$) with approximation ratio at most 
$k-(1-o(1))\frac{k(k-1)\ln\ln g(\phi)}{\ln g(\phi)}$ where $g(\phi)$ is a function that maps the cohesiveness of the preferences $\phi$ to the maximum number of winning committees $W_P$ that a candidate can belong to. Specifically, if such a $g(\phi)$ exists and if $\pi=1$, then the stated approximation ratio exists directly due to Halperin \cite{halperin2002improved}.

\subsection{Parameterized Complexity} 
\label{sec:apxTract/FPT}
In most real-world elections, the committee size $k$ is constant. Hence, our first result here is inspired by the parameterized complexity results in this field \cite{procaccia2008complexity,yang2018parameterized}.

\begin{observation}\label{thm:apxtract/kconstant}
The \DiReCF problem can be solved in $\mathcal{O}(m^k\cdot (|\mathcal{G}|+|\mathcal{P}|))$. If $k$ is a constant, then it is a polynomial time algorithm.
\end{observation}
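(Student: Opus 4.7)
The plan is to establish the observation by a direct brute-force enumeration algorithm. The key idea is that the \DiReCF problem only asks whether there exists a size-$k$ committee $W \subseteq C$ satisfying $|G \cap W| \geq l^D_G$ for all $G \in \mathcal{G}$ and $|W_P \cap W| \geq l^R_P$ for all $P \in \mathcal{P}$. Since the number of size-$k$ subsets of $C$ is $\binom{m}{k} = \mathcal{O}(m^k)$, one can simply enumerate all of them and test each for feasibility.

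First, I would describe the enumeration step: iterate over every size-$k$ subset $W$ of the $m$ candidates in $C$, which by a standard lexicographic enumeration takes $\mathcal{O}(m^k)$ time and produces each candidate committee in $\mathcal{O}(k)$ time (which we absorb into the bound since $k \leq m$). Second, I would describe the feasibility check for a fixed $W$: for each $G \in \mathcal{G}$, compute $|G \cap W|$ and verify $|G \cap W| \geq l^D_G$; analogously for each $P \in \mathcal{P}$, compute $|W_P \cap W|$ and verify $|W_P \cap W| \geq l^R_P$. Each intersection can be computed in $\mathcal{O}(k)$ time using a precomputed hash representation of $W$, so the whole feasibility check for a single $W$ takes $\mathcal{O}(|\mathcal{G}| + |\mathcal{P}|)$ time (treating $k$ as bounded by a constant factor inside the big-$\mathcal{O}$, or more precisely $\mathcal{O}(k(|\mathcal{G}| + |\mathcal{P}|))$ which can be absorbed into the stated bound by folding $k$ into the constant). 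Multiplying the two steps yields the overall running time of $\mathcal{O}(m^k \cdot (|\mathcal{G}| + |\mathcal{P}|))$.

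Finally, for the polynomial-time claim, I would note that when $k$ is a constant, $m^k$ is polynomial in $m$, and $|\mathcal{G}|$ and $|\mathcal{P}|$ are at most polynomial in the size of the input (since $|\mathcal{G}| \leq \sum_{i=1}^{\mu} g_i \leq \mu m$ and $|\mathcal{P}| \leq \sum_{i=1}^{\pi} p_i \leq \pi n$), so the entire algorithm runs in polynomial time in the input size. The observation is essentially a one-line consequence of brute-force enumeration, so there is no substantive obstacle; the only care needed is to confirm that the feasibility check per committee is indeed $\mathcal{O}(|\mathcal{G}| + |\mathcal{P}|)$ under reasonable data-structure assumptions (e.g., representing each $G$ and each $W_P$ as a set with $\mathcal{O}(1)$ membership queries after $\mathcal{O}(|\mathcal{G}| + |\mathcal{P}|)$ preprocessing).
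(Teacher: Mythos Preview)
Your proposal is correct and matches the paper's own argument essentially line for line: enumerate all $\binom{m}{k} \leq m^k$ size-$k$ committees and check each against all $|\mathcal{G}|+|\mathcal{P}|$ constraints. The paper's justification is even terser than yours and does not spell out the hashing or the polynomial bounds on $|\mathcal{G}|$ and $|\mathcal{P}|$, so your write-up is, if anything, more thorough.
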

We select a set of committees $\mathcal{W}$, each of size $k$, and then check for the satisfiability of the constraints for each committee $W \in \mathcal{W}$. It is easy to see that $\mathcal{W}$ has $  \binom{m}{k}$ committees, that is, $|\mathcal{W}|\leq m^k$. 
Checking whether a committee $W \in \mathcal{W}$ satisfies all the constraints takes $\mathcal{O}(\mathcal{|G|+|P|})$, which is the total number of constraints to be checked. Hence, we can solve \DiReCF in time polynomial in $m$ and $n$, given $k$ is constant.

Next, when the committee size ($k$) is not a constant, the rate of growth of the number of candidates to be elected may be much slower than the number of candidates ($k\ll m$).

\begin{theorem}\label{thm:HSW2hard}\cite{niedermeier2003efficient, downey2012parameterized}
The regular hitting set problem with unbounded subset size is W[2]-hard w.r.t. $k$.
\end{theorem}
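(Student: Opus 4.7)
The statement is a classical result in parameterized complexity, so the plan is not to reprove it from scratch but to outline the standard reduction that witnesses the W[2]-hardness and explain why it transfers cleanly to the unbounded-subset-size regime. The natural approach is a parameterized reduction from \textsc{Dominating Set}, which is the canonical W[2]-complete problem (with $k$ as the parameter), to \textsc{Hitting Set} with parameter $k$.

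First, I would recall the reduction: given a \textsc{Dominating Set} instance $(G=(V,E), k)$, construct a \textsc{Hitting Set} instance whose universe is $V$ and whose collection of subsets is $\{N[v] : v\in V\}$, where $N[v]$ denotes the closed neighborhood of $v$; the parameter stays at $k$. Correctness follows from a one-line observation: a set $D\subseteq V$ hits every $N[v]$ if and only if every vertex $v$ has some neighbor (or itself) in $D$, which is exactly the definition of a dominating set. The construction takes time polynomial in $|V|+|E|$ and does not alter the parameter, so it is a genuine parameterized (indeed polynomial-time) reduction.

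Second, I would verify that the reduction produces an instance of the \emph{unbounded subset size} variant: the subsets $N[v]$ can have size up to $\Delta(G)+1$, which is not bounded by any function of $k$ (since $\Delta(G)$ can be $\Theta(|V|)$). Hence the hardness survives the restriction that subset sizes are unbounded, which is precisely the version appearing in the theorem. Combined with the fact that \textsc{Dominating Set} is W[2]-hard \cite{downey2012parameterized}, this gives W[2]-hardness of \textsc{Hitting Set} parameterized by $k$.

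The only subtlety---and the mildly nontrivial step---is being precise that W[2]-hardness requires a parameterized reduction (not merely a polynomial-time reduction), and confirming that the map $k \mapsto k$ above trivially satisfies that requirement. Beyond this, no genuine obstacle arises because the result is folklore; my write-up would simply consolidate the above observations and cite \cite{niedermeier2003efficient,downey2012parameterized} for the underlying machinery and the W[2]-completeness of \textsc{Dominating Set}.
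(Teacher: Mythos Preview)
Your proposal is correct and gives the standard textbook argument. However, the paper does not actually prove this theorem at all: it is stated with citations to \cite{niedermeier2003efficient, downey2012parameterized} and invoked as a known result, with no accompanying proof. So there is nothing to compare against---you have supplied a (correct) proof sketch where the paper simply defers to the literature.
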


\begin{corollary}\label{cor:DiReCWDFPT30}
If  $\forall \mu\in\mathbb{Z}$ : $\mu\geq3$ and $\pi=0$, then \DiReCF problem 
is W[2]-hard w.r.t. $k$ and the hardness holds even when $l^D_G$ = 1  $\forall$ $G \in \mathcal{G}$.
\end{corollary}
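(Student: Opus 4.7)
The plan is to establish W[2]-hardness of \DiReCF parameterized by $k$ (in the setting $\mu \geq 3$, $\pi = 0$) by exhibiting a parameterized reduction from the regular hitting set problem with unbounded subset size, which is W[2]-hard with respect to its solution-size parameter by Theorem~\ref{thm:HSW2hard}.

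Given a hitting set instance consisting of a universe $U$, a collection $\mathcal{Z} = \{T_1,\dots,T_n\}$ of subsets of $U$, and parameter $k$, I construct a \DiReCF instance as follows. Introduce a candidate $c_x$ for every element $x \in U$ together with a single dummy candidate $d^*$, yielding $|U|+1$ candidates in total. Add one attribute $A_0$ whose two groups are the singleton $\{d^*\}$ (lower bound $1$) and $\{c_x : x \in U\}$ (lower bound $1$); the singleton constraint forces every feasible committee to contain $d^*$. For each non-trivial subset $T_j \in \mathcal{Z}$ (discarding $T_j = \emptyset$ and $T_j = U$, which are easy to preprocess away), introduce an attribute $A_j$ whose two groups partition the candidates as $G_j = \{c_x : x \in T_j\}$ and $\bar{G}_j = \{c_x : x \notin T_j\} \cup \{d^*\}$, both with lower bound $1$. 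This yields $\mu = n + 1$ attributes in total (padding with trivial auxiliary attributes if $n$ is too small to guarantee $\mu \geq 3$). Finally, set the target committee size to $k' = k+1$.

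For correctness, since $d^*$ must appear in every feasible committee $W$ (because of the singleton group under $A_0$), the constraint $|W \cap \bar{G}_j| \geq 1$ is automatically met by $d^* \in \bar{G}_j$ for each $j$. Hence the only non-trivial diversity constraints reduce to $|W \cap G_j| \geq 1$ for every $j \in [n]$. Writing $S = \{x : c_x \in W\}$, these constraints are exactly $S \cap T_j \neq \emptyset$ for all $j$, so $S$ is a hitting set of size $k' - 1 = k$. The converse is immediate: any hitting set $S$ of size $k$ yields the feasible committee $\{d^*\} \cup \{c_x : x \in S\}$ of size $k+1$. Since $k'$ depends only on $k$ and the construction is polynomial-time, this is a valid parameterized reduction, and the W[2]-hardness of hitting set transfers directly to \DiReCF, even with all diversity lower bounds equal to $1$, as claimed.

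The main obstacle is reconciling the model's requirement that every attribute partition the candidates into disjoint groups with the fact that an element of $U$ may appear in arbitrarily many subsets of $\mathcal{Z}$; a naive encoding that tries to place all subset-groups inside a single partition fails because the subsets generally overlap. The fix is to dedicate a separate attribute to each subset $T_j$ and to use the dummy candidate $d^*$ to absorb the otherwise problematic lower-bound requirement on the complement group $\bar{G}_j$, which preserves the partition property of each attribute while leaving the hitting-set constraints as the only binding ones.
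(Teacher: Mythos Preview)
Your reduction is correct and achieves the claimed result, but it follows a somewhat different route than the paper. The paper simply invokes the set-cover reduction already given for Theorem~\ref{thm:apxtract/30apx}, notes the standard duality between set cover and regular hitting set, and concludes W[2]-hardness from Theorem~\ref{thm:HSW2hard}. In that reduction, elements become groups and sets become candidates; the attribute/partition structure is left implicit.

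Your construction, by contrast, reduces directly from hitting set and is explicit about how groups are organised into attributes: each subset $T_j$ becomes its own two-cell partition $\{G_j,\bar G_j\}$, and the dummy candidate $d^*$ (forced into the committee via the singleton group under $A_0$) neutralises every complement group $\bar G_j$. This buys you conformity with the model's requirement that every attribute partition the full candidate set and that every group carry a lower bound of at least~$1$---a point the paper's short argument glosses over. The cost is that you use $\mu=n+1$ attributes and shift the parameter to $k'=k+1$, whereas the paper's reduction keeps the parameter at $k$; both are perfectly valid parameterized reductions and both produce instances in which $\mu$ grows with the input, so both establish W[2]-hardness for the class of \DiReCF instances with $\mu\ge 3$ rather than for any single fixed~$\mu$. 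Minor points: you should state that the voter set $V$ (required by the \DiReCF definition) can be taken arbitrarily since $\pi=0$, and that duplicate subsets are removed so that no two attributes induce the same partition.
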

\begin{proof}
In the proof for Theorem~\ref{thm:apxtract/30apx}, we gave a reduction from minimum set cover problem to \DiReCF problem w.r.t. $\mu$ and $\pi$ for all $\mu\in\mathbb{Z}$ : $\mu\geq3$ and $\pi=0$. Additionally, we know that the minimum set cover problem has a well-known one to one relationship with the hitting set problem with no restriction on the subset size \cite{niedermeier2003efficient, ausiello1980structure, crescenzi1995compendium}. Hence, as regular HS with unbounded size of subsets is W[2]-hard \cite{niedermeier2003efficient}, our results here follow due to the one to one relationship between the regular HS problem and the minimum set cover problem.
\end{proof}

\begin{corollary}\label{cor:DiReCWDFPT11}
If $\forall \mu \in\mathbb{Z}$ : $\mu\geq1$ and $\forall \pi \in\mathbb{Z}$ : $\pi\geq1$, then \DiReCF problem 
is W[2]-hard w.r.t. $k$ and the hardness holds even when $l^D_G$ = 1  $\forall$ $G \in \mathcal{G}$ and $l^R_P$ = 1  $\forall$ $P \in \mathcal{P}$.
\end{corollary}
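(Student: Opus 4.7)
The plan is to mirror the template used in Corollary~\ref{cor:DiReCWDFPT30}, invoking the reduction already established in the proof of Theorem~\ref{thm:apxtract/11apx}. Recall that this reduction maps an instance $(U, \mathcal{Z}, k)$ of the regular hitting set problem (with unbounded subset size) to an instance of \DiReCFp{$\mu$}{$\pi$} in which each element $x \in U$ becomes a candidate, each subset $T \in \mathcal{Z}$ becomes either a candidate group $G \in \mathcal{G}$ or a population's winning committee $W_P$, the committee size is set equal to the hitting set target $k$, and all lower bounds $l^D_G$ and $l^R_P$ are set to 1. As stated in that proof, the construction is valid for all $\mu \geq 1$ and $\pi \geq 1$.

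The essential step is to verify that this is a parameterized (FPT) reduction with respect to $k$. The construction runs in time polynomial in $|U|$ and $|\mathcal{Z}|$, and—more importantly—the parameter value is preserved exactly, since the target size of the hitting set is copied verbatim to the committee size of the \DiReCF instance. Because FPT reductions transfer W[2]-hardness, combining the reduction with Theorem~\ref{thm:HSW2hard} (W[2]-hardness of regular hitting set with unbounded subset size w.r.t.\ $k$) yields W[2]-hardness of \DiReCF w.r.t.\ $k$ for all $\mu \geq 1$ and $\pi \geq 1$. The additional assertion that the hardness persists when $l^D_G = 1$ for all $G$ and $l^R_P = 1$ for all $P$ is immediate from the construction itself, since those bounds are exactly what the reduction outputs.

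I do not expect any substantive obstacle. The only subtlety is the bookkeeping for $\mu \geq 2$ or $\pi \geq 2$: the $|\mathcal{G}| + |\mathcal{P}| = |\mathcal{Z}|$ groups and populations must be distributed across $\mu$ candidate attributes and $\pi$ voter attributes while nominally respecting the real-world stipulation that each attribute partitions its domain into two or more parts. As already noted in the remark following Theorem~\ref{lemma:DiReCWDrep01} and in the proof of Corollary~\ref{lemma:DiReCWDrep}, this stipulation can be eased (e.g., allowing an attribute to induce a single all-encompassing group), and the hardness is unaffected. This completes the reduction and the corollary.
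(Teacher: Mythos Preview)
Your proposal is correct and follows essentially the same approach as the paper, which simply notes that when $\mu\geq 1$ and $\pi\geq 1$ the problem is equivalent to regular hitting set via the reduction of Theorem~\ref{thm:apxtract/11apx}, and then invokes Theorem~\ref{thm:HSW2hard}. You are more explicit than the paper in checking that the reduction is parameter-preserving (and hence an FPT reduction), which is a welcome addition since the paper leaves this implicit.
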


When $\mu\geq1, \pi\geq1$, our problem is equivalent to regular HS (Theorem~\ref{thm:apxtract/11apx}).


\begin{theorem}\label{thm:RCWDFPT}
If $\mu=0$, $\forall \pi \in \mathbb{Z}$: $\pi\geq1$, and $l^R_P=1, \forall P \in \mathcal{P}$, then \DiReCF problem can be solved using an $\mathcal{O}(c^k+m)$ time algorithm where $c=d-1+\mathcal{O}(d^{-1})$ and $d = k$. If $k\ll m$,  then it is a polynomial time algorithm.
\end{theorem}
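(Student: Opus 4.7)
The plan is to reduce this special case of \DiReCF to the $d$-Hitting Set problem with $d = k$, and then invoke a known fixed-parameter tractable branching algorithm for $d$-Hitting Set. Since $\mu = 0$, no diversity constraints are present, so feasibility reduces to finding a committee $W \subseteq C$ of size at most $k$ such that $|W \cap W_P| \geq 1$ for every population $P \in \mathcal{P}$. Because each winning committee $W_P$ is itself a size-$k$ committee of $C$, this is exactly an instance of hitting set on the hypergraph whose vertex set is $C$ (of size $m$) and whose hyperedge set is $\{W_P : P \in \mathcal{P}\}$, with every hyperedge of cardinality exactly $k$. We seek a hitting set of size at most $k$.

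I would then apply the Niedermeier--Rossmanith algorithm for $d$-Hitting Set, which runs in time $\mathcal{O}(c_d^k + m)$ where $c_d = d - 1 + \mathcal{O}(d^{-1})$. The algorithm repeatedly picks a yet-uncovered hyperedge $S$, branches on which of the $d$ vertices of $S$ to place into the hitting set, and recurses on the residual instance with the budget decreased by one and all hyperedges incident to the chosen vertex removed. Naive branching gives a factor of $d$ per level, but kernelization and domination-based reduction rules (together with a sunflower-style argument) lower the effective branching factor to $c_d = d - 1 + \mathcal{O}(d^{-1})$. Substituting $d = k$ yields the claimed $\mathcal{O}(c^k + m)$ bound with $c = k - 1 + \mathcal{O}(k^{-1})$, where the additive $m$ term captures the linear-time scan required to read the input and set up the hyperedges.

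The main subtlety, and where I would be most careful, is that standard statements of $d$-Hitting Set treat $d$ as a \emph{fixed} constant while parameterizing by the solution size; here $d$ equals the parameter $k$ itself. The branching recurrence $T(k) \leq c_d \cdot T(k-1) + \mathrm{poly}(m)$ still telescopes to $c^k \cdot \mathrm{poly}(m)$, which is absorbed into the form $\mathcal{O}(c^k + m)$ by standard interpolation between the exponential and polynomial terms. I would verify that the reduction rules used to achieve the factor $c_d$ do not introduce any hidden dependence on $m$ in the branching analysis beyond this polynomial overhead. Finally, in the regime $k \ll m$ the quantity $c^k = (k-1)^{k(1+o(1))}$ is dominated by any polynomial in $m$ of degree $k$, so the algorithm runs in polynomial time in that asymptotic sense, matching the statement of the theorem.
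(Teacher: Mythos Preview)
Your proposal is correct and follows essentially the same route as the paper: reduce the $\mu=0$, $l^R_P=1$ case to $k$-uniform Hitting Set (since each $W_P$ has size $k$) and invoke the Niedermeier--Rossmanith bounded-search-tree algorithm for $d$-Hitting Set with $d=k$ to obtain the $\mathcal{O}(c^k+m)$ bound. If anything, you are more explicit than the paper in flagging the subtlety that $d$ coincides with the parameter $k$ and in unpacking why the branching recurrence still yields the stated bound; the paper simply cites the reference and restates the key branching step.
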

\begin{proof}
Proof of Theorem~\ref{thm:apxtract/01apx} shows that our problem is equivalent to $k$-HS when $\mu=0$ and $ \pi\geq1$. Hence, our algorithm here is motivated from bounded tree search algorithm in Section 6 of \cite{niedermeier2003efficient} where they showed that when $k$ is small, a $d$-hitting set problem, which upper bounds the cardinality of every element in the subsets to be hit to $d$, can be solved using an $\mathcal{O}(c^k+m)$ time algorithm with $c=d-1+\mathcal{O}(d^{-1})$. In our case, $d=k$. We have  modified our algorithm from \cite{niedermeier2003efficient} to return \emph{all} committees that satisfy the representation constraints.

\begin{algorithm}[H]
\caption{Parameterized Polynomial-time Algorithm for Representation Constraints}
\label{alg:algorithmFPT}
\textbf{Input}: $C$, $k$, and $W_P$ and $l^R_P$ for all $P\in\mathcal{P}$ \\
\textbf{Output}: $W$ : $|W\cap W_P|\geq1$ for all $P\in\mathcal{P}$ \\

\begin{algorithmic}[1] 
\STATE $C=C\setminus\{y\}$ : $\forall x,y \in C$, if $y\in W_P$, then $x \in W_P$, for all $ P \in \mathcal{P}$\\
\FOR{\textbf{each} $W_P$}
\STATE Create $k$ branches, one each for each $c_i \in W_P$ 
\STATE Choose $c_1$ for the hitting set and choose that $c_1$ is not in the hitting set, but $c_i$ is for all $i \in [2,k]$
\ENDFOR




\end{algorithmic}
\end{algorithm}

In the above algorithm, steps 3 and 4 creates $k$ branches in total. Hence, if the number of leaves in a branching tree is $b_k$, then the first branch has at most $b_{k-1}$ leaves. Next, let $b'_k$ be the number of leaves in a branching tree where there is at least one set of size $k-1$ or smaller. For each $i \in [2,k]$, there is some committee $W_P$ in the given collection such that $c_1 \in W_P$, but $c_i \notin W_P$. Therefore, the size of $W_P$ is at most $k-1$ after excluding $c_1$ from and including $c_i$ in the committee $W$. Altogether we get $b_k\leq b_{k-1}+(k-1)b'_{k-1}$.

If there is already a set with at most $k-1$ elements, we can repeat the above steps and get $b'_k\leq b_{k-1}+(k-2)b'_{k-1}$. The branching number of this recursion is $c$ from above, and note that it is always smaller than $k-1+\mathcal{O}(k^{-1})$.
\end{proof}

As a conclusion of our theoretical analyses, we make an interesting observation: When $\pi=0$, \DiReCF becomes NP-hard when $\mu = 3$. On the other hand, when $\mu=0$, \DiReCF becomes NP-hard even when $\pi = 1$. This means that introducing representation constraints makes the problem hard ``faster'' than introducing diversity constraints. In contrast, with respect to the parameter $k$, the former case is W[2]-hard and the latter is fixed parameter tractable for all $\pi \in \mathbb{Z} : \pi \geq 1$. This reinforces our claim that even if it may seem natural to try and reformulate representation constraints as diversity constraints, we should not do so as the size of candidate groups and the size of winning committee of voter populations has implications on how one may try to solve the problem efficiently.

\section{Heuristic Algorithm}
\label{sec:HeurAlgo}

In the previous sections, we saw that our model, which is useful from the social choice theory perspective to have more ``fairer'' elections, is computationally hard and it is hard even when we parameterize the problem on the size of the committee. Hence, we take a pragmatic approach to evaluate if our model is efficient in practice. We do so by developing a two-stage heuristic-based algorithm,  
in part motivated from the literature on distributed constraint satisfaction \cite{russell2002artificial}, which allow us to efficiently compute \emph{DiRe} committees in practice. 

We develop a heuristic-based algorithm as the use of integer linear program formulation in multiwinner elections is not efficient \cite{skowron2015achieving}, especially when using the Monroe rule. Moreover, in addition to the known temporal efficiency of using a heuristic approach as compared to a linear programming approach, our empirical evaluation shows that the algorithm returns an optimal solution (discussed later in Section~\ref{subsec:effOfHA}), thus overcoming one of the biggest disadvantages of using a heuristic approach.

\subsection{DiReGraphs}
We represent an instance of the \DiReCWD problem from Figure~\ref{fig:example} as a DiReGraph (Figure~\ref{fig:DiReGraph}). The constraints are represented by quadrilaterals and candidates by ellipses. More specifically, there are candidates (Level B) and the DiRe committee (Level D). Next, there is a global committee size constraint (Level A) and unary constraints that lower bound the number of candidates required from each candidate group or voter population (Level C). Edges connecting candidates (Level B) to unary constraints (Level C) depends on the candidate's membership in a candidate group or a population's winning committee. The idea behind DiReGraph is to have a ``network flow'' from A to D such that \emph{all} nodes on level C are visited. More specifically, the aim is to select $k$ candidates (Level A) from $m$ candidates (Level B) such that the in-flow at the unary constraint nodes (Level C) is equal to the specified diversity or representation constraint. A node is said to have an  in-flow of $\tau$ when $\tau$ candidates in the committee $W$ are part of the group/winning population. Formally, $\tau=|W\cap G|$ for each candidate group $G \in \mathcal{G}$ and $\tau=|W\cap W_P|$ for each population $P \in \mathcal{P}$. When the last condition is fulfilled, there will be a \emph{DiRe} committee (Level D). 

\begin{example}\textbf{Creating DiReGraph:}
Consider the election setup shown in Figure~\ref{fig:example}. The candidate $c_2$ (Figure~\ref{fig:example}) is a male who is in winning committees of both the states, namely California and Illinois. Hence, $c_2$ in DiReGraph (Figure~\ref{fig:DiReGraph}) is connected with the three sets of constraints, one each for male and the two states, namely CA (California) and IL (Illinois).
\end{example}

\begin{figure}[t!]
\centering
\begin{subfigure}{.4\textwidth}
  \centering
  \includegraphics[width=0.975\linewidth]{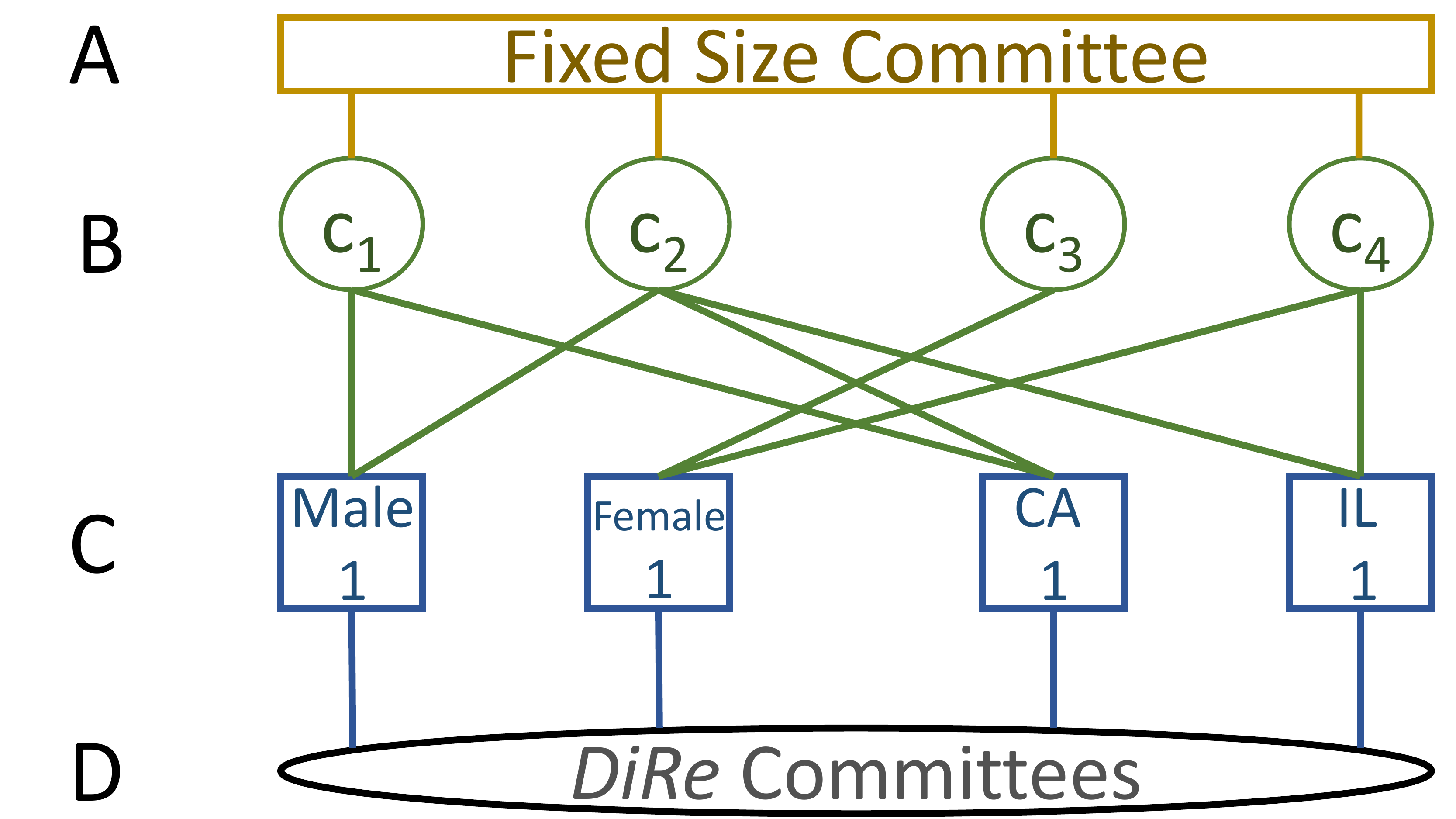}
\end{subfigure}%
\caption{\DiReCF as DiReGraph. (A) Global committee size constraint and (C) the diversity/representation constraints connected by edges with (B) the candidates and (D) the DiRe committee.}
\label{fig:DiReGraph}
\end{figure}

\begin{algorithm}[t!]
\caption{DiRe Committee Feasibility Algorithm}
\label{alg:algorithmDiReCF}
\textbf{Input}:\\
variables $X$ = \{$X_1,\dots, X_{\mathcal{|G|}+\mathcal{|P|}}$\}\\
domain $D$ = ($D_1,\dots, D_{\mathcal{|G|}+\mathcal{|P|}}$) : each $D_i$ is  $G \in$  $\mathcal{G}$ or $W_P : P \in$  $\mathcal{P}$  \\
unary constraints $S$ = \{$S_1,\dots, S_{\mathcal{|G|}+\mathcal{|P|}}$\} : each $S_i$ is $l^D_G$ for each $G \in \mathcal{G}$ or $l^R_P$ for each $P \in\mathcal{P}$ \\
\textbf{Output}:\\ set $\mathcal{W}$ of committees : $\forall W \in \mathcal{W}$, $|W\cap D_i|\geq S_i$\\

\begin{algorithmic}[1] 
\STATE Create DiReGraph $DiReG$
\STATE $SG$ = subgraph of nodes on levels  B \& C of $DiReG$
\STATE $SCC$ = strongly connected components of $SG$

\FOR{\textbf{each} $comp_i$, $comp_j$ $\in$ $SCC$}
\FOR{\textbf{each} $X_{u}$ = \{$X_i \cup$ $X_j$\} : $X_i \in$  $comp_i$ and $X_j \in$ $comp_j$}


\STATE \textbf{if} !$\mathtt{pairwise\_feasible}$($X_{u}$, $D$, $S$) \textbf{return} \emph{false}
\ENDFOR
\ENDFOR

\FOR{\textbf{each} $comp$ in $SCC$}
\STATE $X_{SCC}$ = list of $X_i$ for each $S_i$ at level C of $comp$
\STATE \textbf{if} !$\mathtt{pairwise\_feasible}$($X_{SCC}$, $D$, $S$) \textbf{return} \emph{false}
\ENDFOR
\STATE Recreate DiReGraph $DiReG$ using reduced domain
\STATE \textbf{return} $\mathtt{heuristic\_backtrack}$(\{\}, $DiReG$, $X$, $D$, $S$)
\end{algorithmic}
\end{algorithm}

\subsection{DiRe Committee Feasibility Algorithm}
Algorithm~\ref{alg:algorithmDiReCF} has two stages: (i) \emph{preprocessing} reduces the search space used to satisfy the constraints and efficiently finds infeasible instances, and (ii) \emph{heuristic-based search} of candidates decreases the number of steps needed either to find a feasible committee, or return infeasibility. 

\subsubsection{Create DiReGraph}
The first step of the algorithm is to create the DiReGraph based on the variables that are given as the input. We have the following input: variables $X$ = \{$X_1,\dots, X_{\mathcal{|G|}+\mathcal{|P|}}$\} are represented by the nodes on Level C. The domain $D$ = ($D_1,\dots, D_{\mathcal{|G|}+\mathcal{|P|}}$) of these variables are represented by edges that connect the node on Level C to the nodes (candidates) on Level B. Formally, for each $D_i \in D$ where $D_i$ is  $G \in$  $\mathcal{G}$ or $W_P : P \in$  $\mathcal{P}$, we have an edge $e$ that connect node on Level B with node on Level C.  The constraints $S$ = \{$S_1,\dots, S_{\mathcal{|G|}+\mathcal{|P|}}$\} correspond to the diversity and representation constraints. Formally, for each $S_i \in S$, $S_i$ is $l^D_G$ for each $G \in \mathcal{G}$ or $l^R_P$ for each $P \in\mathcal{P}$.

\begin{example}\textbf{Input Variables:}
$X$ = \{Male, Female, CA, IL\}

$D$ = (\{$c_1, c_2$\}, \{$c_3 c_4$\}, \{$c_1, c_2$\}, \{$c_2, c_4$\})

$S$ = \{1, 1, 1, 1\}
\end{example}

\begin{algorithm}[t!]
\caption{Pairwise Feasibility Algorithm}
\label{alg:PFA}

\textbf{function} $\mathtt{pairwise\_feasible}$($X$, $D$, $S$) \textbf{returns} false if an inconsistency is found, or true
\begin{algorithmic}[1] 
\STATE $queue$ = ($X_i$, $X_j$) : $X_i$, $X_j$ in $X$ and $X_i \neq X_j$
\WHILE{$queue$ is not empty}
\STATE ($X_i$, $X_j$) = $\mathtt{remove\_first}$($queue$)
\STATE \textbf{if} $|D_i\cap D_j|$ $< S_i + S_j - k$ 
\textbf{return} \emph{false}
\IF{$\mathtt{domain\_reduce}$($X$, $D$, $C$, $X_i$, $X_j$)}
\STATE \textbf{if} $|D_i|=0$ \textbf{return} \emph{false}
\FOR{ \textbf{each} $X_x \in$ $X$ $\setminus$ $\{X_i, X_j$\}}
\STATE add ($X_x$, $X_i$) to $queue$
\ENDFOR
\ENDIF
\ENDWHILE
\STATE \textbf{return} \emph{true}
\end{algorithmic}

\textbf{function} $\mathtt{domain\_reduce}$($X$, $D$, $S$, $X_i$, $X_j$) \textbf{returns} true iff the domain $D_i$ of $X_i$ is reduced

\begin{algorithmic}[1] 
\STATE $domain\_reduced$ = $false$
\FOR{ \textbf{each} $d \in D_i$}
\IF{all $S_i$ sized combinations from $D_i$ containing $d$ does not satisfy pairwise constraints with all $S_j$ sized combinations from $D_j$} 
\STATE $D_i$ = $D_i \setminus \{d\}$
\STATE $domain\_reduced$ = $true$
\ENDIF
\ENDFOR
\STATE \textbf{return} \emph{$domain\_reduced$}
\end{algorithmic}
\end{algorithm}

\begin{algorithm}[t!]
\caption{Heuristic Backtracking Algorithm}
\label{alg:HBA}
\textbf{function} $\mathtt{heuristic\_backtrack}$($solution$, $DiReG$, $X$, $D$, $S$) \textbf{returns} a solution or infeasibility
\begin{algorithmic}[1] 
\STATE $X_i$.inFlow = 0 
\STATE \textbf{if} $|solution| \leq k$ \textbf{AND} each $X_i$.inFlow $\geq$ $S_i$ \textbf{return} $solution$
\STATE $local\_X$ = 
$\mathtt{select\_unsatisfied\_variable}$($DiReG$, $X$, $D$, $S$)
\FOR{\textbf{each} $local\_cand$ in $\mathtt{sort\_candidates}$($local\_X$, $solution$, $DiReG$, $X$, $D$, $S$)}
\IF{$local\_cand$ is consistent with $solution$}
\STATE $tuple$ = $\{local\_X$ = $local\_X$.append($local\_cand$)\}
\STATE $solution$ = $solution$ $\cup$ $tuple$
\STATE $local\_X$.inFlow = $local\_X$.inFlow + 1
\STATE $result$ = $\mathtt{heuristic\_backtrack}$($solution$, $DiReG$, $X$, $D$, $S$)
\STATE \textbf{if} {$result$ $\neq$ \emph{infeasibility}} 
\textbf{return} $result$
\ENDIF
\STATE $tuple$ = \{$local\_X$ = $local\_X$.remove($local\_cand$)\}
\STATE $solution$ = $solution$ $\setminus$ $tuple$
\STATE $local\_X$.inFlow = $local\_X$.inFlow - 1
\ENDFOR
\STATE \textbf{return} \emph{infeasibility}
\end{algorithmic}

\textbf{function} $\mathtt{select\_unsatisfied\_variable}$($DiReG$, $X$, $D$, $S$) 
\begin{algorithmic}[1] 
\STATE \textbf{return} $X_i$ with the lowest $\sfrac{|D_i|}{\max((S_i - X_i\text{.inFlow}), 1)}$ ratio
\end{algorithmic}

\textbf{function} $\mathtt{sort\_candidates}$($local\_X$, $solution$, $DiReG$, $X$, $D$, $S$) 
\begin{algorithmic}[1] 
\STATE \textbf{return} candidates $D_i$ sorted in decreasing order of their out degree in $DiReG$
\end{algorithmic}

\end{algorithm}

\subsubsection{Preprocessing} Find the strongly connected components $SCC$ of a graph in time linear in the size of $m$ and $|\mathcal{G}|+|\mathcal{P}|$, equivalent to $m$ and $n$ in real-world settings. The next step is find inter- and intra-component pairwise feasibility. We note that we only do a pairwise feasibility test as previous work has shown that doing a three-way, a four-way or greater feasibility tests increase the computational time significantly without improving the scope of finding a group of variables whose combination guarantees an infeasible instance \cite{russell2002artificial}.

\paragraph{{Inter-component pairwise feasibility:}} 

Select two variables $X_i$, $X_j$ corresponding to constraints $S_i$, $S_j$ on level C of DiReGraph, one each from different components of $SCC$. Do a pairwise feasibility check for each pair and return infeasibility if any one pair of variables can not return a valid committee. The correctness and completeness of this step is easy. If there are more constraints than the available candidates, it is impossible to find a feasible solution. Also, if a pair of constraints are pairwise infeasible, then it is clear that they will remain infeasible overall.

\vspace{0.1cm}
\paragraph{{Intra-component pairwise feasibility:}} Repeat the above procedure but now, within a component. This step also helps in returning infeasibililty efficiently.

\vspace{0.1cm}
\paragraph{{Reducing domain:}} 

Based on empirical evidence of the previous work that used a setting similar to ours, pairwise infeasibility causes a majority of overall infeasible instances \cite{russell2002artificial}.  Hence, if a committee did exist, the domain of each variable is reduced by removing candidates that explicitly do not help to find feasible committees.

Now do a restricted version of intra-component pairwise feasibility. If algorithm reaches this stage, we know that all of the constraints are pairwise feasible due to presence of at least one solution. Hence, reduce the domain by removing a candidate who, when included in the solution, always returns pairwise infeasible solution with another constraint. Specifically, fix a candidate $c$ from the domain of a variable $X_i$ and do a pairwise feasibility check with other domain $X_j$ across all possible solutions that contains the candidate $c$. If all solutions that contain $c$ result in infeasibility, then remove candidate $c$ from the domain of $X_i$.

\subsubsection{Heuristic Backtracking.}

Use depth-first search for backtracking. Specifically, choose one variable $X_i$ at a time, and backtrack when $X_i$ has no legal values left to satisfy the constraint. This technique repeatedly chooses an unassigned variable, and then tries all values in its domain, trying to find a solution. If an infeasibility is returned, traverse back by one step and move forward by trying another value.

 \paragraph{Select unsatisfied variable:} 
Use the ``minimum-remaining-values (MRV)'' heuristic to choose the variable having the fewest legal values. This heuristic picks a variable that is most likely to cause a failure
soon, thereby pruning the search tree. For example, if some variable $X_i$ has no legal values left, the MRV heuristic will select $X_i$ and infeasibility will be returned, in turn, avoiding additional searches.

\paragraph{{Sort most favorite candidates:}} 

Use the ``most-favorite-candidates (MFC)'' heuristic to sort the candidates in domain $D_i$ such that a candidate on level B of DiReGraph who is most connected to level C (out-degree) is ranked the highest. This heuristic tries to reduce the branching factor on future choices by selecting the candidate that is involved in the largest number of constraints. 

Overall, the aim is to select the most favorite candidates into the committee as they help satisfy the highest proportion of constraints. For completeness and to get multiple DiRe committees, after sorting step, use a ``shift-left'' approach where the second candidate becomes the first, the first becomes the last, and so on. This allows us to get multiple DiRe committees.

\begin{example}
\textbf{Sorting candidates:}
In Figure~\ref{fig:DiReGraph}, the ordering of candidates will be $c_2$, $c_1$, $c_4$, and $c_3$ as $c_2$ has out-degree of 3, $c_1$ and $C_4$ has 2, and $c_3$ has 1. Ties are broken randomly. 
\end{example}

We now give an example to explain the entire algorithm.

\begin{example}\textbf{Implementation of Algorithm:}
Consider the election setup shown in Figure~\ref{fig:example}, which consists of four candidates and four voters, each having one attribute. 

The input to the algorithm is (i) a set of variables (candidate group names and voter population names) = \{Male, Female, CA, IL\}. (ii) a collection of sets of domain for each variable (candidates part of the candidate groups and winning committee of each population) = (\{$c_1, c_2$\}, \{$c_3, c_4$\}, \{$c_1, c_2$\}, \{$c_2, c_4$\}). (iii) a collection of constraints for each variable = (1, 1, 1, 1). 

The first step of the algorithm is to create a DiReGraph as shown in Figure~\ref{fig:DiReGraph}. Level A is set to 2 as $k=2$. Level B consists of four nodes, each representing one candidate. Level C consists of four nodes, which is equal to $|\mathcal{G}|+|\mathcal{P}|$, the number of candidate groups and voter populations in the election. Level D consists of the final output. Each node on Level B is connected with Level A and each node on Level C is connected to Level D. The candidate $c_1$ is a male who is in winning committees of California. Hence, $c_1$ in DiReGraph is connected with the two sets of constraints, one each for male and CA (California).

Next, a subgraph $SG$ consisting of eight nodes from Levels B and C and the corresponding edges that connect these eight nodes is created.

As there is only one strongly connected components in the SG, we directly check for the intra-component pairwise feasibility. For each pair of domains, there is always a feasible committee that exists. Hence, the algorithm continues to execute. Moreover, none of the domains get reduced as the constraints are set to one. 

Before reaching the final step of the algorithm, the algorithm would have terminated if no feasible committee existed that satisfied all the pairwise constraints. 

In the final step, the $\mathtt{select\_unsatisfied\_variable}$ function selects a candidate at random as all the variables have the same ratio of 2 for $\sfrac{|D_i|}{(S_i - X_i\text{.inFlow})}$ as $|D_i|$=2 and $S_i$=1 for all $D_i \in D$ and $S_i \in S$. Next, for each variable that remains, we check whether adding that variable violates the global constraint (committee size on Level A of DiReGraph) or not. We keep on backtracking till we either find a committee or exhaustively navigate through our pruned search space. To get more than one committee, rerun the  $\mathtt{heuristic\_backtrack}$ function by applying an additional ``left-shift'' operation on the result of the $\mathtt{sort\_candidates}$ function each time the $\mathtt{heuristic\_backtrack}$ function is implemented. We note that this increases the time complexity of the algorithm linearly in the size of $\mathcal{G}$ and $\mathcal{P}$.
\end{example}

\section{Empirical Analysis}
\label{sec:empres}

We now empirically assess the efficiency of our heuristic-based algorithm using real and synthetic datasets. We also assess the effect of enforcing diversity and representation constraints on the feasibility and utility of the winning committee selected using different scoring rules.

\subsection{Datasets}
\label{sec:empres/data}

\subsubsection{Real Datasets}

\paragraph{RealData 1:} The \emph{\underline{Eurovision}} dataset \cite{Kaggle2019Eurovision} consists of 26 countries ranking the songs performed by each of the 10 finalist countries. 
We aim to select a 5-sized DiRe committee. Each candidate, a song performed by a country, 
has two attributes, the European region and the language of the song performed. Each voter has one attribute, the voter's European region. Specifically for the European region attribute, Australia and Israel were labeled as ``Others'' as they are not a part of Europe.

\paragraph{RealData 2:} The \emph{\underline{United Nations Resolutions}} dataset \cite{Voeten2014UN} consists of 193 UN member countries 
voting for 81 resolutions presented in the UN General Assembly in 2014. We aim to select a 12-sized DiRe committee. Each candidate has two attributes, the topic of the resolution and whether a resolution was a significant vote or not. Each voter has one attribute, the continent.

\subsubsection{Synthetic Datasets} 

\paragraph{SynData 1:} We set committee size ($k$) to 6 for 100 voters and 50 candidates. We generate complete preferences using RSM by setting selection probability $\Pi_{i, j}$ to replicate Mallows' \cite{Mallows1957} model ($\phi=0.5$, randomly chosen reference ranking $\sigma$ of size $m$) (Theorem 3, \cite{chakraborty2020algorithmic}) and preference probability $p(i)=1$, $\forall i \in [m]$.


\underline{Dividing Candidates into Groups and Voters into Populations:}
To assess the impact of enforcing constraints, 
we generate datasets with varying number of candidate and voter attributes by iteratively choose a combination of $(\mu$, $\pi)$ such that $\mu$ and $\pi \in \{0,1,2,3,4\}$. For each candidate attribute, we choose a number of non-empty partitions $q \in [2$, $k]$, uniformly at random. Then to partition $C$, we randomly sort the candidates $C$ and select $q-1$ positions from $[2$, $m]$, uniformly at random without replacement, with each position corresponding to the start of a new partition. The partition a candidate is in is the attribute group it belongs to. For each voter attribute, we repeat the above procedure, replacing $C$ with $V$, and choosing $q-1$ positions from the set $[2$, $n]$. For each combination of $(\mu,\pi)$, we generate five datasets.
We limit the number of candidate groups and number of voter populations per attribute to $k$ to simulate a real-world division of candidates and voters. 

\paragraph{SynData 2:} We use the same setting as SynData 1, except we fix $\mu$ and $\pi$ each to 2 and vary the cohesiveness of voters by setting selection probability $\Pi_{i, j}$ to replicate Mallows' \cite{Mallows1957} model's $\phi$ $\in$ $[0.1$, $1]$, with increments of 0.1.  We divide the candidates into groups and voters into populations in line with \textbf{SynData 1}.

\subsection{Setup}
\label{sec:empres/setup}

\paragraph{System.}
We note that all our experiments were run on a personal machine without requiring the use of any commercial or paid tools. 
More specifically, we used a controlled virtual environment using Docker(R) on a 2.2 GHz 6-Core Intel(R) Core i7 Macbook Pro(R) @ 2.2GHz with 16 GB of RAM running MacOS Big Sur (v11.1). 
We used Python 3.7.

\paragraph{Constraints.}
For each $G \in \mathcal{G}$, 
we choose $l_G^D$ $\in$ $[1,\min(k$, $|G|)]$ uniformly at random. For each $P \in \mathcal{P}$, 
we choose $l_P^R$ $\in$ $[1$, $k]$ uniformly at random.

\paragraph{Voting Rules.} We use previously defined $k$-Borda, $\beta$-CC, and Monroe rules. More specifically, we have two rules from from submodular, monotone class of \csr due to inherent difference in their method of computing committees and one rule from separable, monotone class of functions. We deem these to be sufficient due to our focus on the study of \DiReCF  as discussed later. 

\begin{figure}[t!]
\centering
\begin{subfigure}{.3\textwidth}
  \centering
  \includegraphics[trim=80 15 0 35,clip,width=0.975\linewidth]{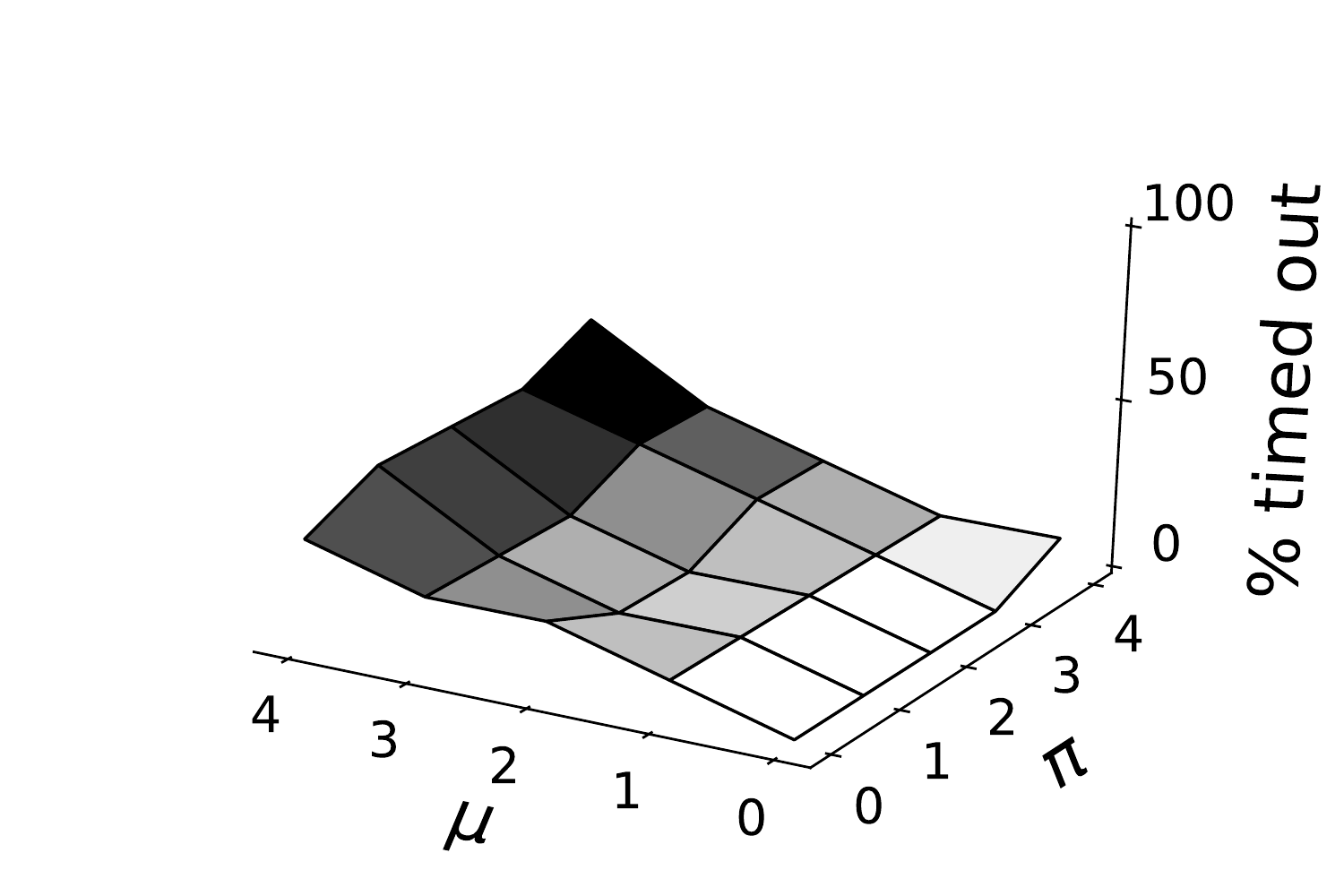}
  \caption{\textbf{timed out instances}}
  \label{fig:runningtimes/timedout}
\end{subfigure}%
\begin{subfigure}{.3\textwidth}
  \centering
  \includegraphics[trim=80 15 0 35,clip,width=0.975\linewidth]{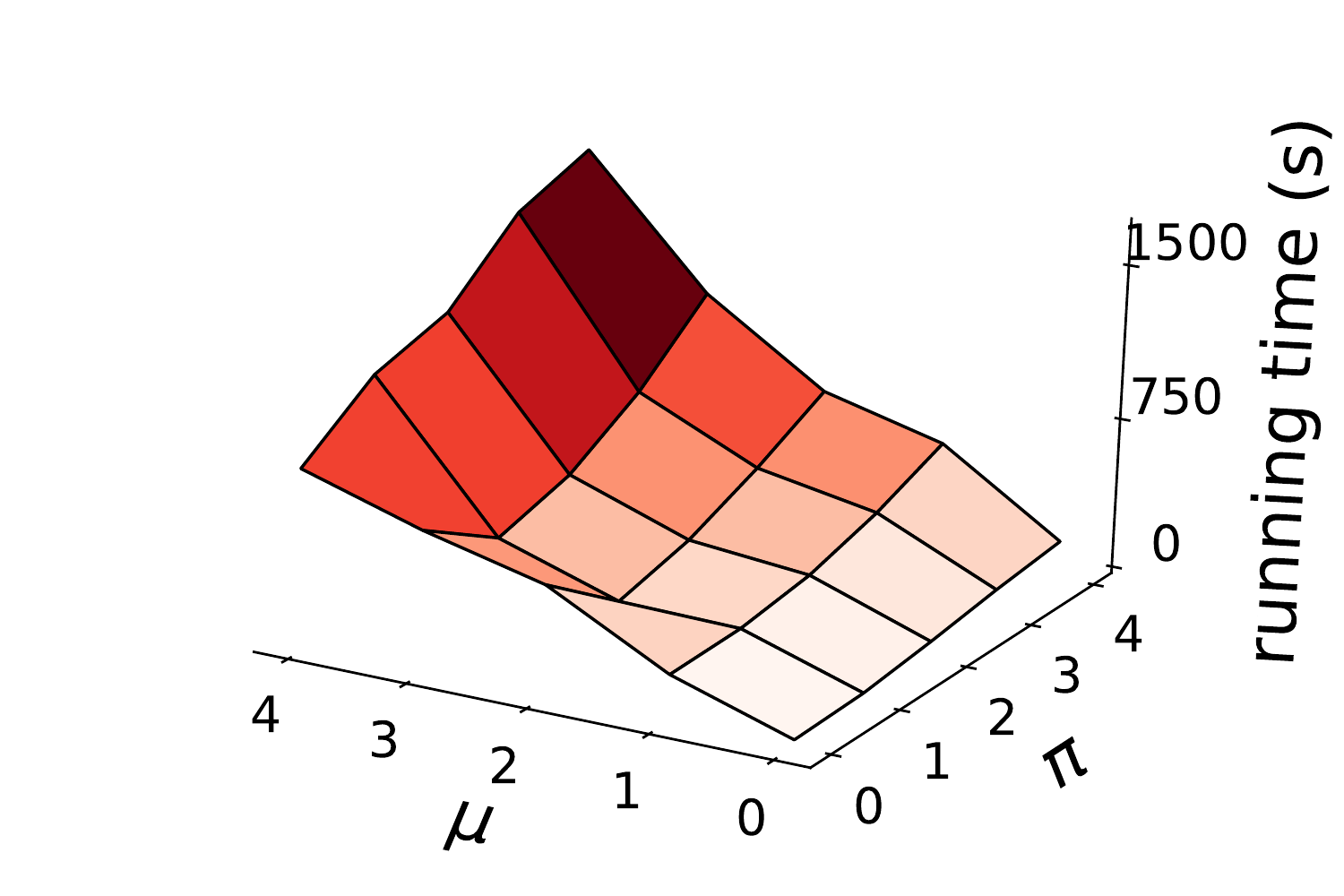}
  \caption{\textbf{mean running time}}
  \label{fig:runningtimes/avgtime}
\end{subfigure}
\caption{Using \textbf{SynData 1}, (a) Proportion (in \%) of instances that timed out at 2000 seconds and (b) mean running time of non-timed out instances. Each combination of $\mu$ and $\pi$ has 10 instances, 5 each for $k$-Borda and $\beta$-CC rule. 
}
\label{fig:runningtimes}
\end{figure}
\begin{figure}[t]
\centering
\begin{subfigure}{.3\textwidth}
  \centering
  \includegraphics[trim=80 15 0 35,clip,width=0.975\linewidth]{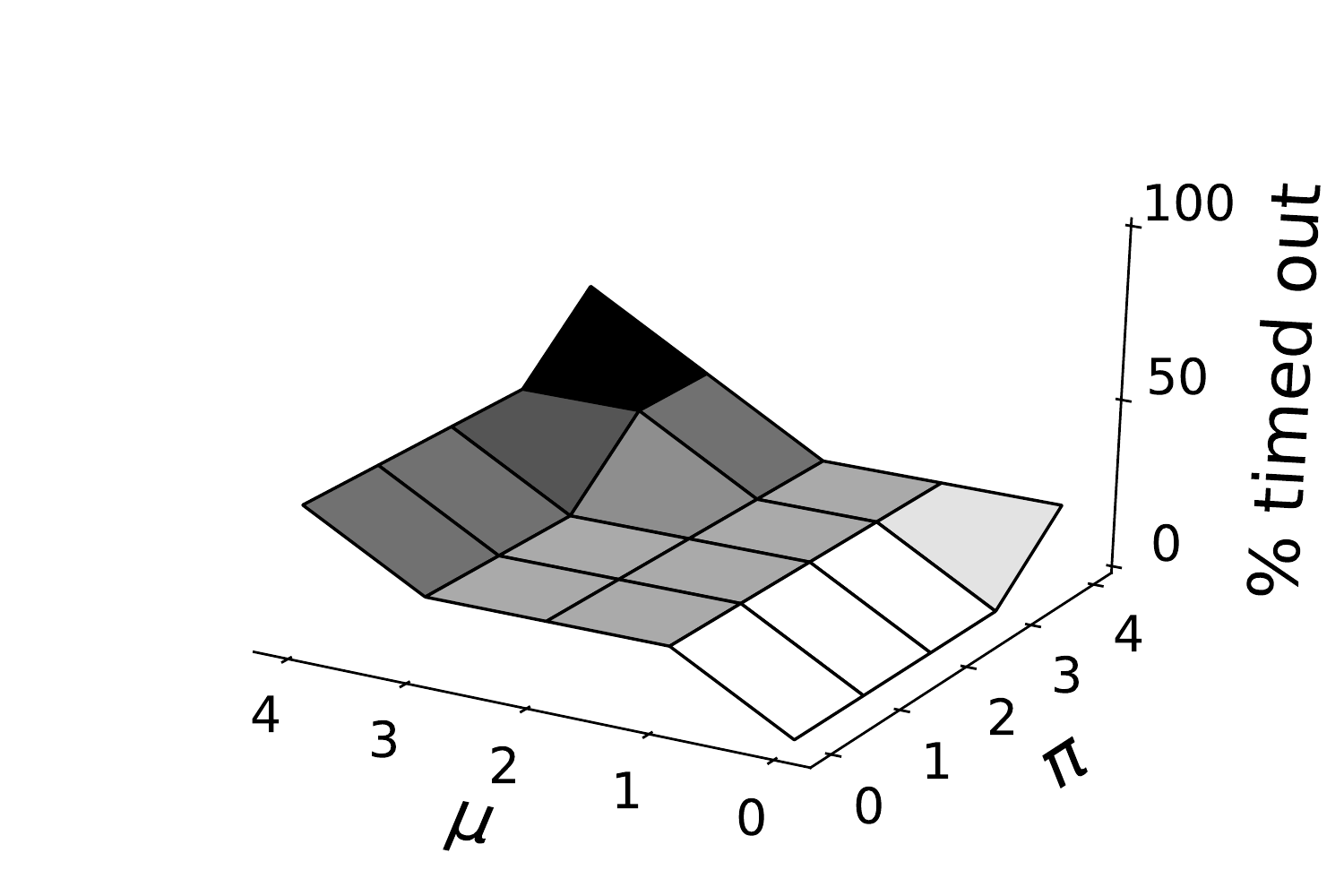}
  \caption{\textbf{timed out instances}}
  \label{fig:runningtimesMon/timedout}
\end{subfigure}%
\begin{subfigure}{.3\textwidth}
  \centering
  \includegraphics[trim=80 15 0 35,clip,width=0.975\linewidth]{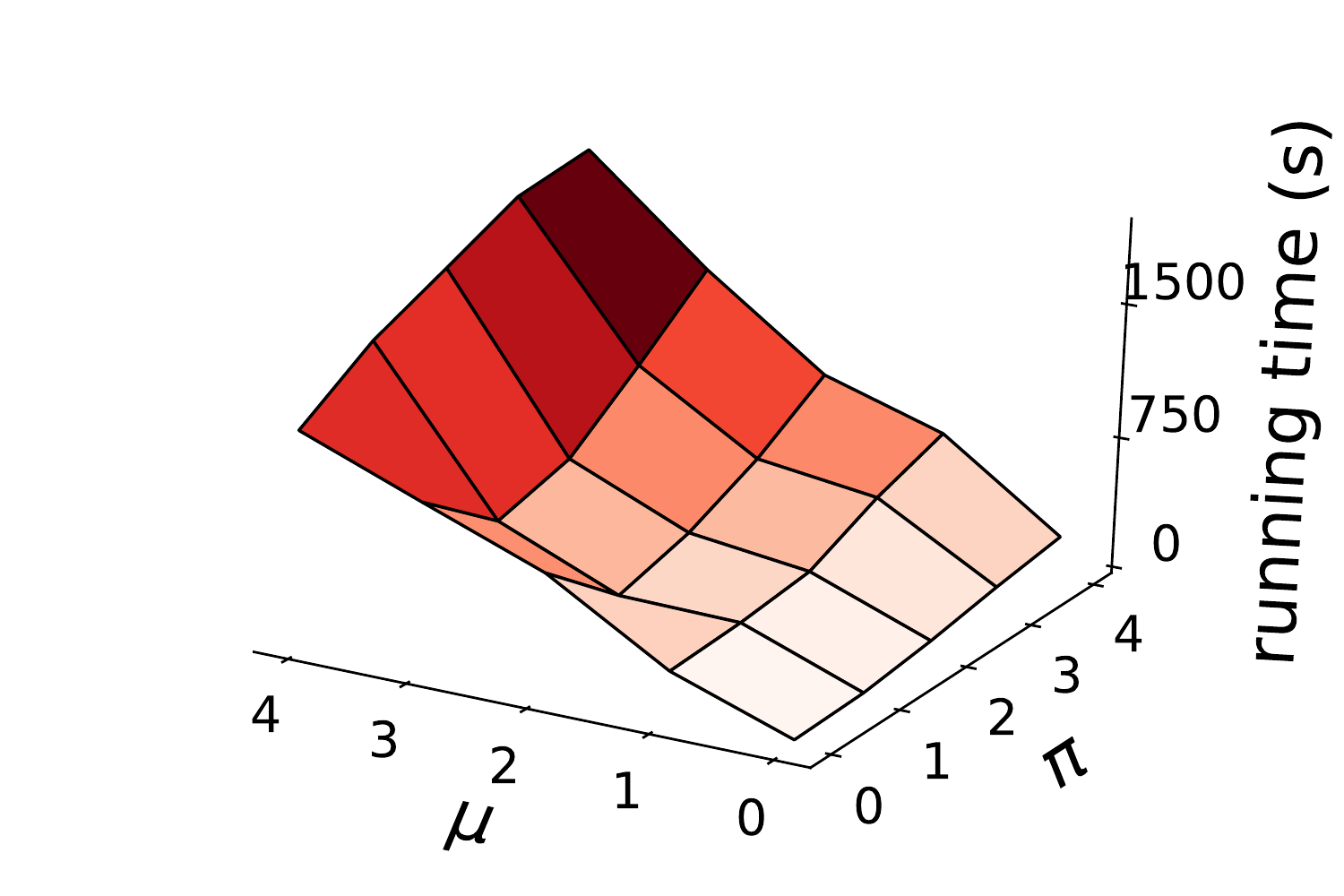}
  \caption{\textbf{mean running time}}
  \label{fig:runningtimesMon/avgtime}
\end{subfigure}
\caption{Using \textbf{SynData 1}, (a) Proportion (in \%) of instances that timed out at 2000 seconds and (b) mean running time of non-timed out instances. Each combination of $\mu$ and $\pi$ has 5 instances for Monroe Rule. 
}
\label{fig:runningtimesMon}
\end{figure}

\subsection{Results}
We now present results of empirical analyses of the efficiency of the heuristic algorithm and of the feasibility of DiRe committees and the cost of fairness.

\subsubsection{Efficiency of Heuristic Algorithm}
\label{subsec:effOfHA}
All experiments in this section combine instances of $k$-Borda and $\beta$-CC as there was no pairwise significant difference in the running time between the sets of instances of these two scoring rules (Student's t-test, $p>$ 0.05). We present the result for Monroe rule separately.

\paragraph{{Algorithm is efficient:}}
Our heuristic-based algorithm is efficient on tested data sets (Figure~\ref{fig:runningtimes}, Figure~\ref{fig:runningtimesMon}, and Figure~\ref{fig:phifigures2}). Only 18.90\% of 525 instances timed out at 2000 sec. Among the instances that did not time out, the average running time was 566.48 sec (standard deviation (sd) = 466.66) for $k$-Borda and $\beta$-CC, and 724.39 sec (sd = 575.31) for Monroe. In contrast, 93.71\% of 525 instances timed out at 2000 sec when using brute-force algorithm.

Promisingly, using DiReGraph made the algorithm more efficient on instances that were sparsely connected as the average running time for all $\mu$ when $\pi$ $\leq$ $2$ was 281.47 sec (sd = 208.65) for $k$-Borda and $\beta$-CC, and 358.87 sec (sd = 265.82) for Monroe. Higher $\pi$ led to denser DiReGraphs.

\paragraph{Performance when compared to ILP:}
The real-world application of an ILP-based algorithm is very
limited when using Chamberlin-Courant and Monroe rules
\cite{skowron2015achieving}. More specifically, some instances of the ILP-algorithm that implemented the Monroe rule for $k$=9, $m$=30, and $n$=100 timed out after one hour. The running time increased exponentially with increase in the number of voters as \emph{all} instances of the ILP-algorithm that implemented the Monroe rule for $k$=9, $m$=30, and $n$=200 did not terminate even after one day \cite{skowron2015achieving}. Hence, our algorithm, which (i) handles constraints and any committee selection rule and (ii) terminated in (avg) 724 sec, has a clear edge. Promisingly, the first committee returned by the algorithm (in $<$ 120 sec) was the winning DiRe committee among 63\% of all instances. Moreover, our algorithm scales linearly with an increase in the number of voters.

\paragraph{{Efficiency and cohesiveness:}} Our algorithm was the most efficient when the voters were either less cohesive ($\phi \leq$  0.3) or more cohesive ($\phi\geq$  0.8) (Figure~\ref{fig:phifigures2}). 
Among these two efficient sets of instances, the time taken (mean = 105.40 sec (sd = 4.16) for $k$-Borda and $\beta$-CC, and mean = 141.06 sec (sd = 8.08) for Monroe) by the preprocessing stage to return infeasibility for low $\phi$ was less and the time taken (mean = 156.80 sec (sd = 2.86) for $k$-Borda and $\beta$-CC, and mean = 203.98 sec (sd = 12.66) for Monroe) by the heuristic-based search stage to return a DiRe committee for higher $\phi$ was less. This shows the efficiency of our algorithm in opposing scenarios: the preprocessing step was efficient when $\phi$ was low as it was easy to find a pair of constraints that are pairwise infeasible, and the heuristic-based backtracking was efficient when $\phi$ was high as it was easy to find a DiRe committee.

\begin{figure}
\centering
\begin{subfigure}{.33\textwidth}
  \centering
  \includegraphics[trim=0 15 0 12,clip,width=0.995\linewidth]{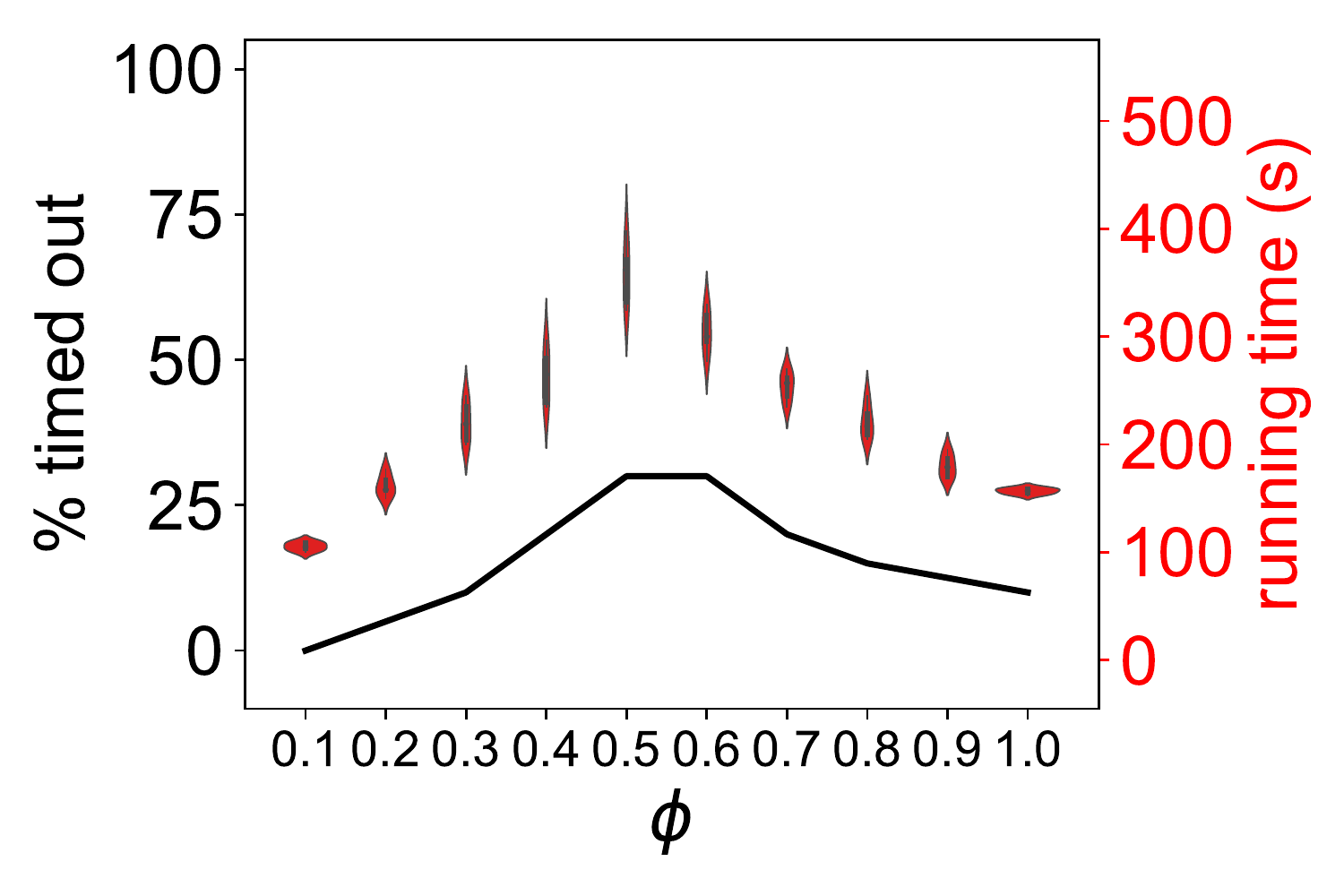}
  \caption{\textbf{$k$-Borda and $\beta$-CC}}
  \label{fig:phifigures/timebordaBcc}
\end{subfigure}
\begin{subfigure}{.33\textwidth}
  \centering
  \includegraphics[trim=0 15 0 12,clip,width=0.995\linewidth]{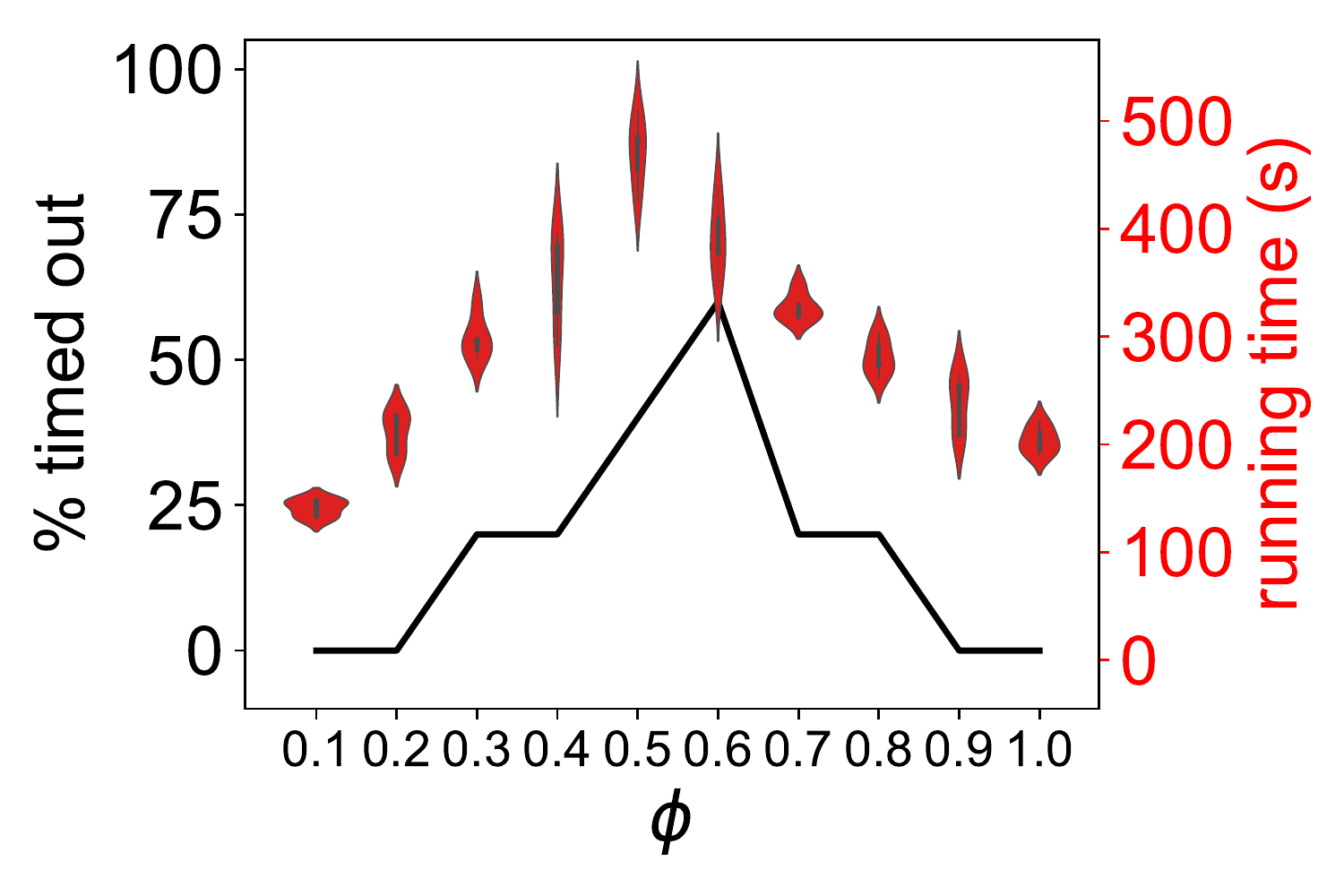}
  \caption{\textbf{Monroe}}
  \label{fig:phifigures/timemonroe}
\end{subfigure}
\caption{Using \textbf{SynData 2}, proportion (in \%) of instances that timed out at 2000 seconds and mean running time of non-timed out instances. (a) Each $\phi$ has 10 instances (5 for $k$-Borda and 5 for $\beta$-CC). (b) Each $\phi$ has 5 instances for Monroe rule.}
\label{fig:phifigures2}
\end{figure}

\begin{figure}[t]
\centering
\begin{subfigure}{.3\textwidth}
  \centering
  \includegraphics[trim=0 15 0 12,clip,width=0.995\linewidth]{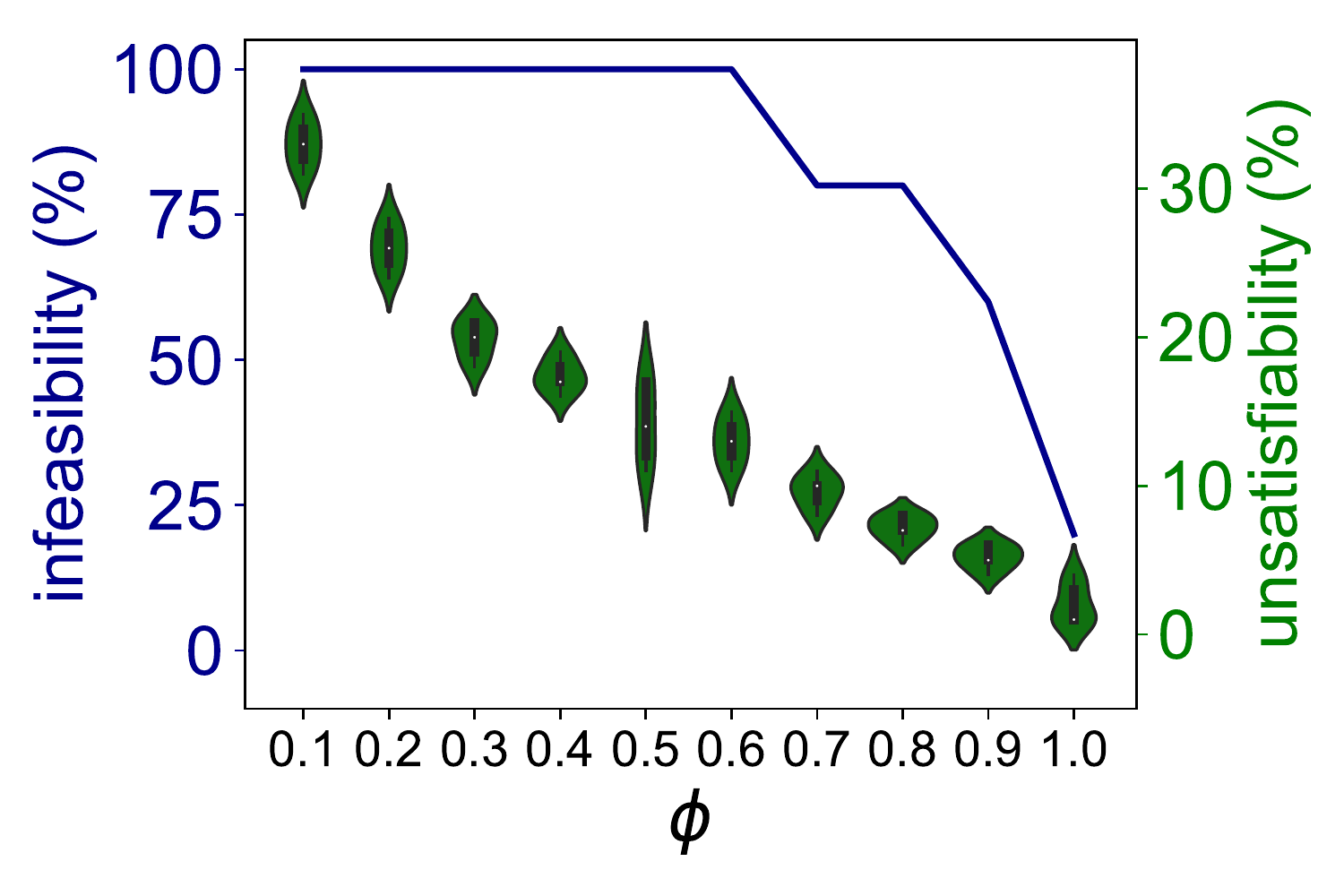}
  \caption{\textbf{$k$-Borda}}
  \label{fig:phifigures/kborda}
\end{subfigure}%
\begin{subfigure}{.3\textwidth}
  \centering
  \includegraphics[trim=0 15 0 12,clip,width=0.995\linewidth]{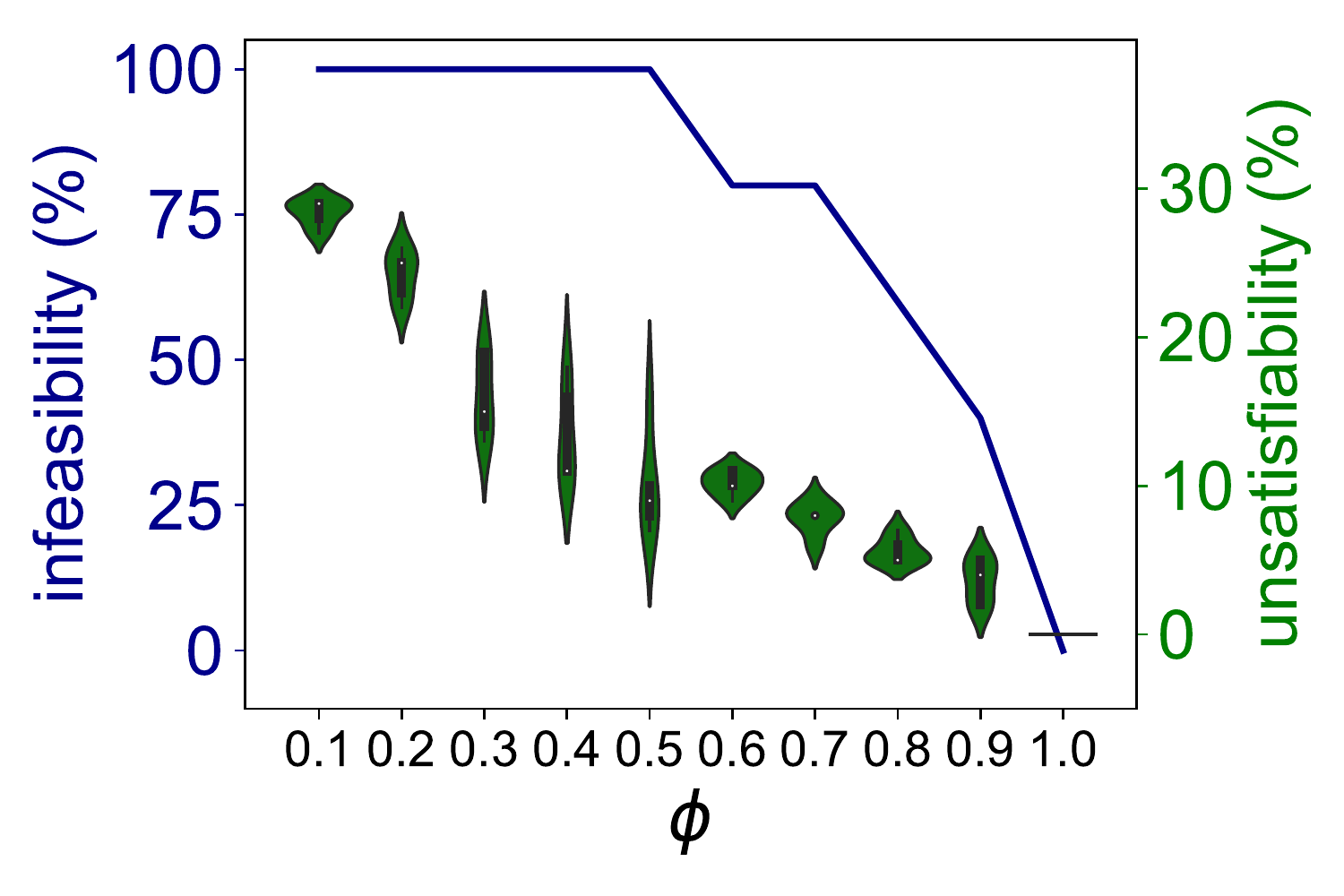}
  \caption{\textbf{$\beta$-CC}}
  \label{fig:phifigures/bcc}
\end{subfigure}
\begin{subfigure}{.3\textwidth}
  \centering
  \includegraphics[trim=0 15 0 12,clip,width=0.995\linewidth]{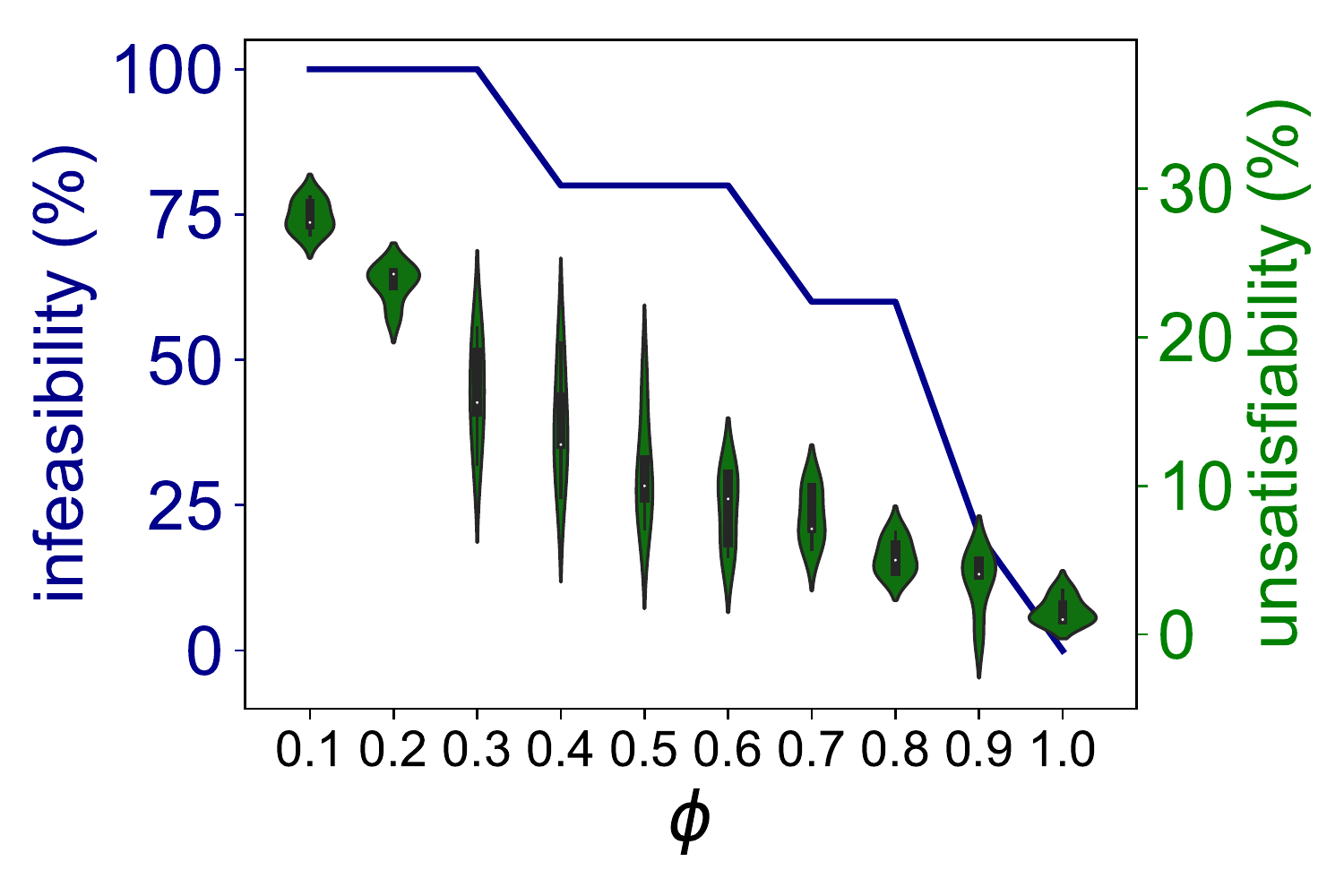}
  \caption{\textbf{Monroe}}
  \label{fig:phifigures/monroe}
\end{subfigure}
\caption{Using \textbf{SynData 2}, proportion (in \%) of instances that have an infeasible committee and the maximum proportion (in \%) of constraints that are unsatisfiable per instance; each $\phi$ has 5 instances.
}
\label{fig:phifigures}
\end{figure}

\subsubsection{Feasibility and Cost of Fairness}
All experiments in this section consider instances of $k$-Borda and $\beta$-CC separately as there was a difference in the unsatisfiability between the two (Student's t-test, $p<$ 0.05). We continue to an analyze Monroe separately.

\paragraph{{Higher number of attributes result in infeasible committee:}}
Figure~\ref{fig:infeasibility} shows the proportion of feasible instances for each combination of $\mu$ and $\pi$. 
As the number of attributes increases, the proportion of feasible instances decreases. However, Figure~\ref{fig:unsatisfiable} shows that 
the mean proportion of constraints satisfied for each instance is $\geq 90\%$ (sd $\in$ [0, 5]). Hence, from the computational perspective, these results show the real-world utility of breaking down \DiReCWD problem into two-steps: (i) \DiReCF problem solved using our algorithm followed by (ii) utility maximization problem. 
As we expect a constant number of committees to be feasible in real-world, we can overcome the intractability of using submodular scoring function, notwithstanding the worst case when all committees are feasible.

On the other hand, more promisingly: (i) When the sum of all the constraints was less than ($\mu \cdot k$), then, indeed, a feasible committee did exist on 85\% of instances. (ii) More specifically,  when the sum of the constraints was less than $k$ for all groups under each candidate attribute individually, then, indeed, a feasible committee did exist on all but one instance.



\paragraph{{Infeasibility and unsatisfiability is dependent on cohesiveness:} }
There was a negative correlation between the maximum proportion of unsatisfied constraints and $\phi$, for all the three scoring rules (mean Pearson's $\rho$ = -0.95, $p<$0.05)
It was to easier to  satisfy the constraints when the cohesiveness ($\phi$) was high, which led to lower infeasibility for higher $\phi$ (Figure~\ref{fig:phifigures}). 

Note that the correlation is stated keeping the candidate groups and voter populations constant. Only the preferences vary and hence, do the winning committee $W_P$ for each population $P \in \mathcal{P}$.  This is to say that higher cohesiveness of voters leads to higher cohesiveness among all $W_P$s and in turn easier to satisfy the constraints and in turn higher proportion of feasible committees.

\paragraph{$\beta$-CC and Monroe satisfies higher proportion of representation constraints:} $\beta$-CC and Monroe
rules are better at satisfying representation constraints as compared to $k$-Borda (Figure~\ref{fig:infeasibility}) as they are designed to maximize the voter representation, and in turn, the population satisfaction. However, we note that even when we use a \csr that guarantees proportional representation, our analysis found that it was indeed the smaller population whose the representation constraints were violated disproportionately more than that of the larger population. Hence, the price of diversity was paid more by smaller population as compared to larger population, which quantitatively reaffirms the need for DiRe committees.

\begin{figure*}[t!]
\centering
\begin{subfigure}{.3\textwidth}
  \centering
  \includegraphics[trim=80 15 0 25,clip,width=0.975\linewidth]{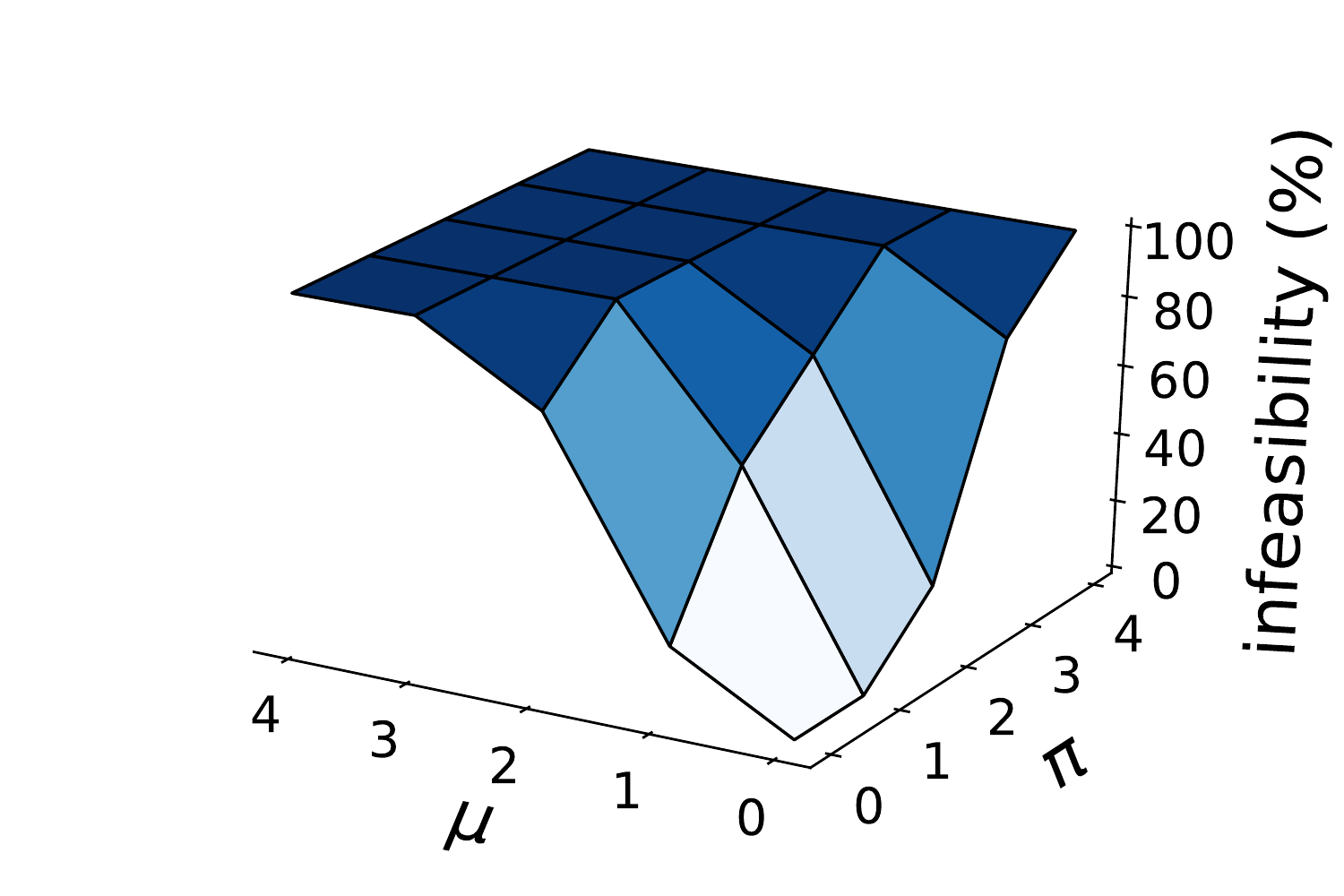}
  \caption{\textbf{$k$-Borda}}
  \label{fig:infeasibility/instances/kborda}
\end{subfigure}%
\begin{subfigure}{.3\textwidth}
  \centering
  \includegraphics[trim=80 15 0 25,clip,width=0.975\linewidth]{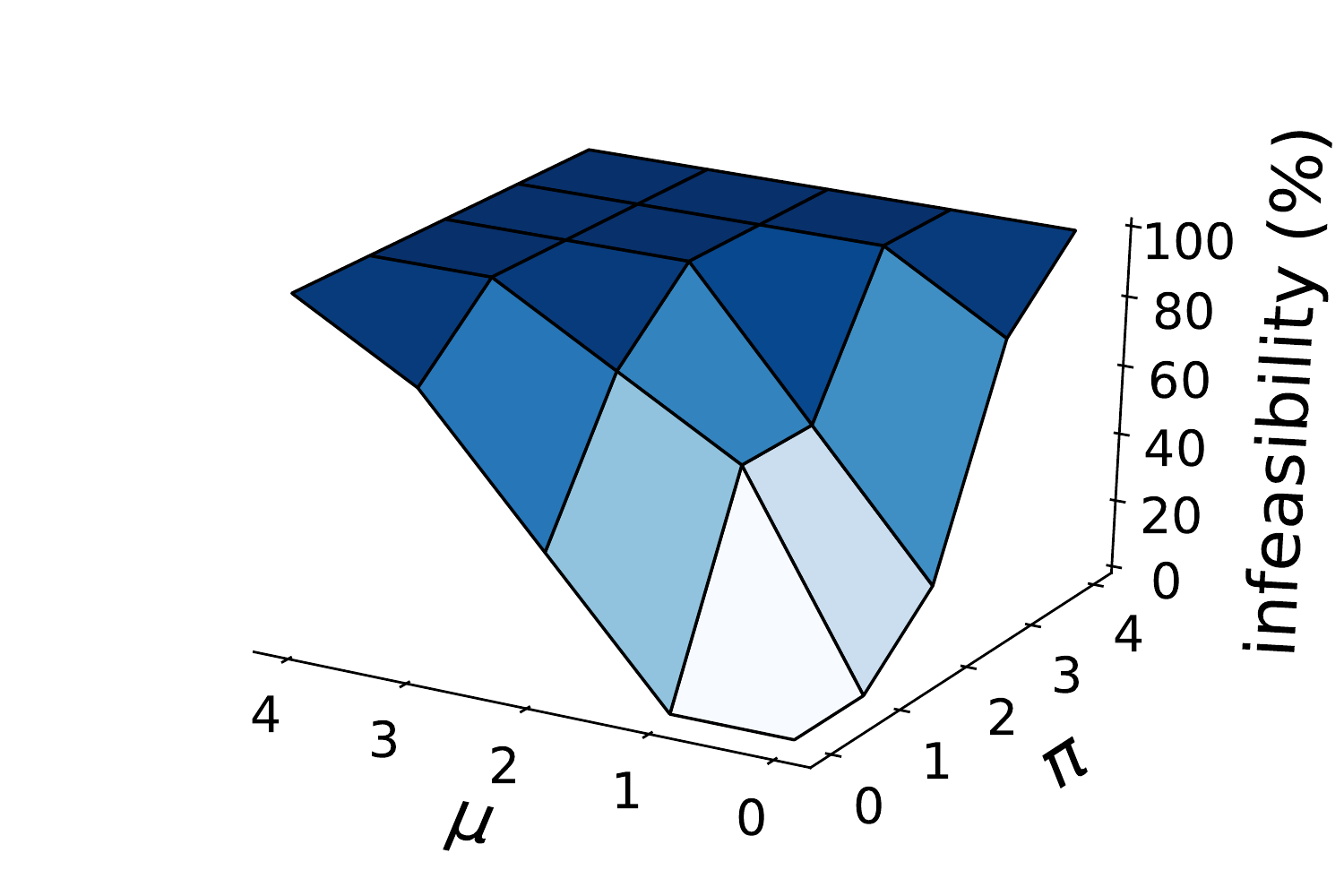}
  \caption{\textbf{$\beta$-CC}}
  \label{fig:infeasibility/instances/bcc}
\end{subfigure}
\begin{subfigure}{.3\textwidth}
  \centering
  \includegraphics[trim=80 15 0 25,clip,width=0.975\linewidth]{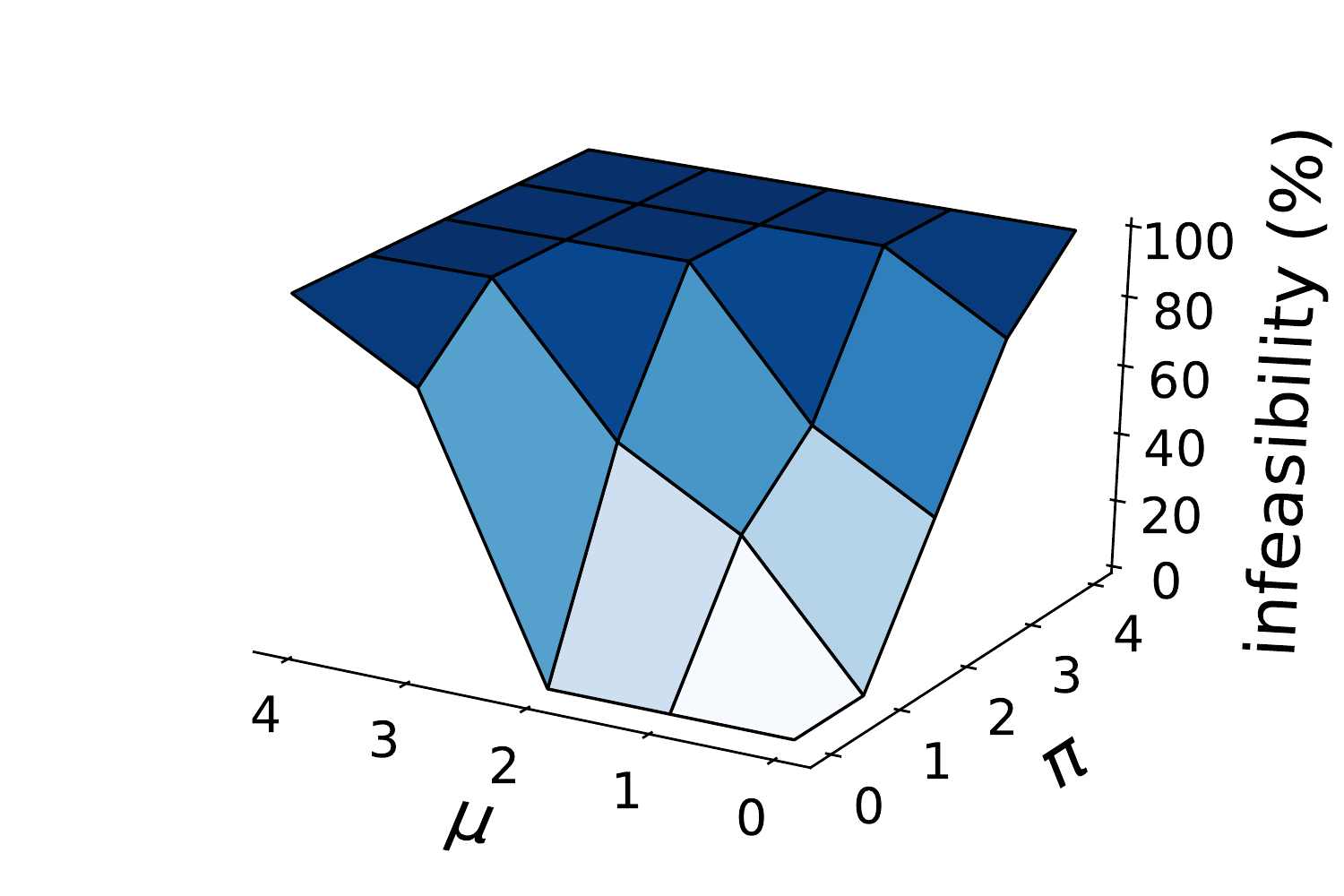}
  \caption{\textbf{Monroe}}
  \label{fig:infeasibility/instances/monroe}
\end{subfigure}
\caption{Using \textbf{SynData 1}, Proportion (in \%) of instances that have an infeasible committee. Each combination of $\mu$ (\# candidate attributes) and $\pi$ (\# voter attributes) has 5 instances.}
\label{fig:infeasibility}
\end{figure*}

\begin{figure*}[t]
\centering
\begin{subfigure}{.3\textwidth}
  \centering
  \includegraphics[trim=80 15 0 25,clip,width=0.975\linewidth]{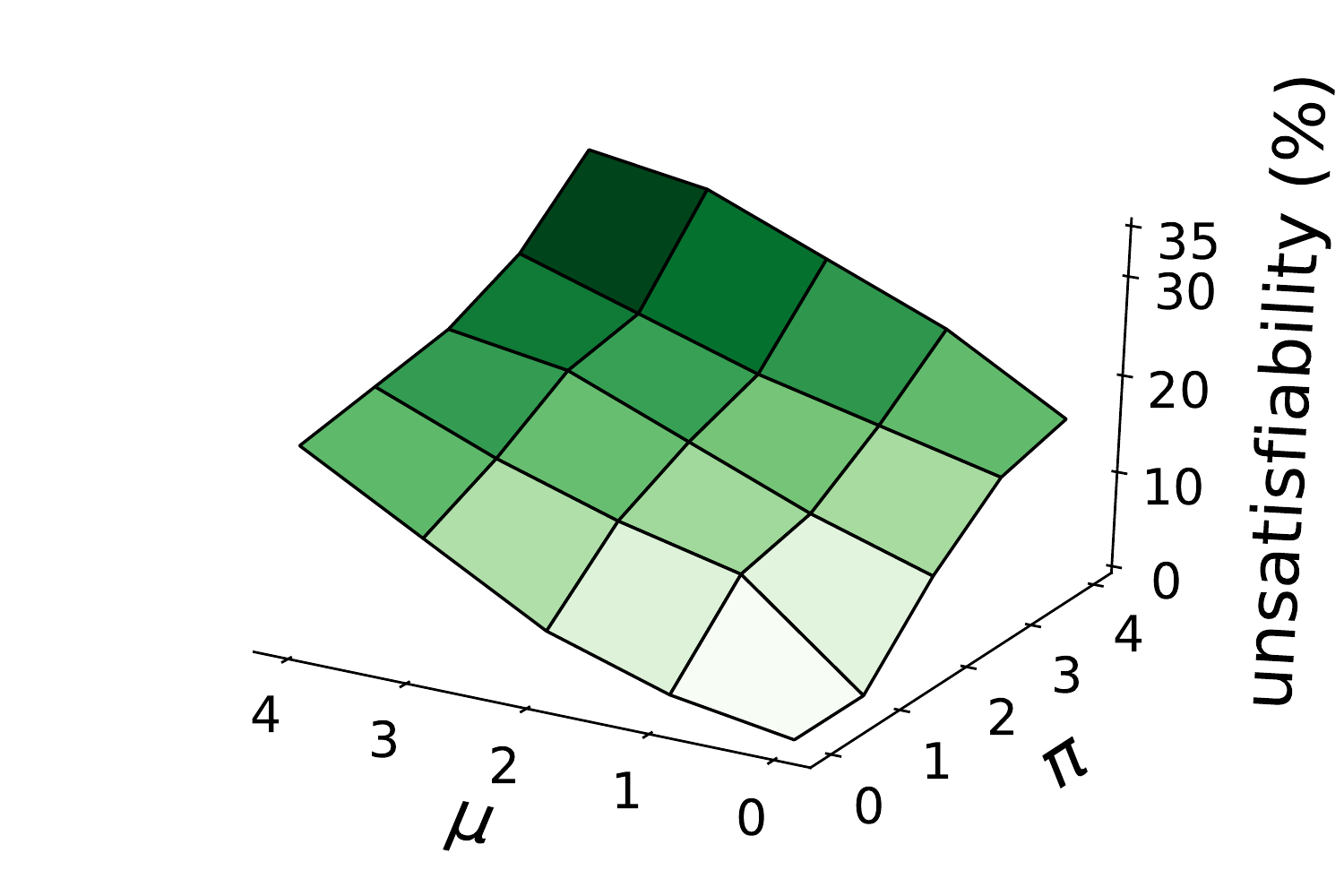}
  \caption{\textbf{$k$-Borda}}
  \label{fig:unsatisfiable/constraints/kborda}
\end{subfigure}
\begin{subfigure}{.3\textwidth}
  \centering
  \includegraphics[trim=80 15 0 25,clip,width=0.975\linewidth]{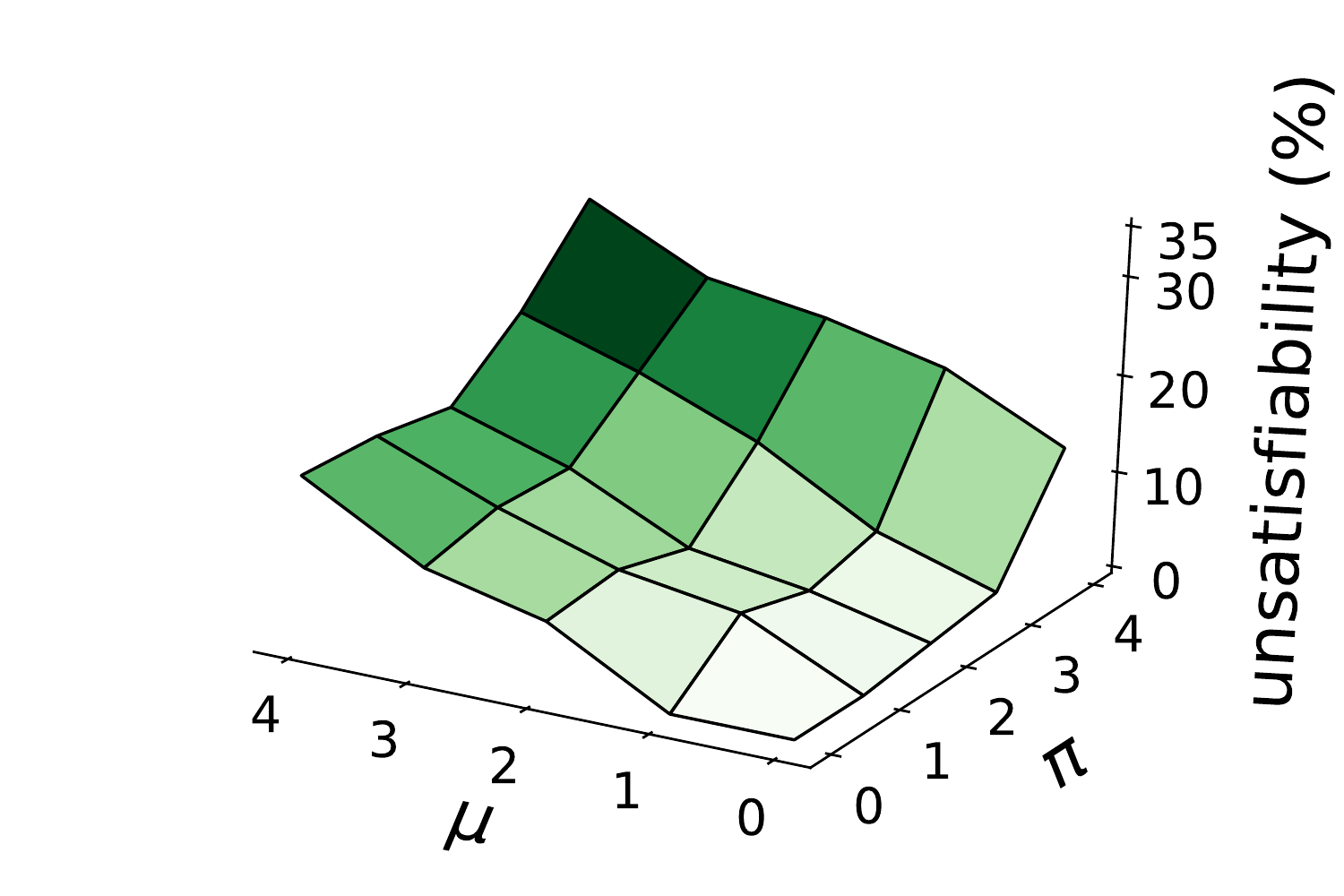}
  \caption{\textbf{$\beta$-CC}}
  \label{fig:unsatisfiable/constraints/bcc}
\end{subfigure}
\begin{subfigure}{.3\textwidth}
  \centering
  \includegraphics[trim=80 15 0 25,clip,width=0.975\linewidth]{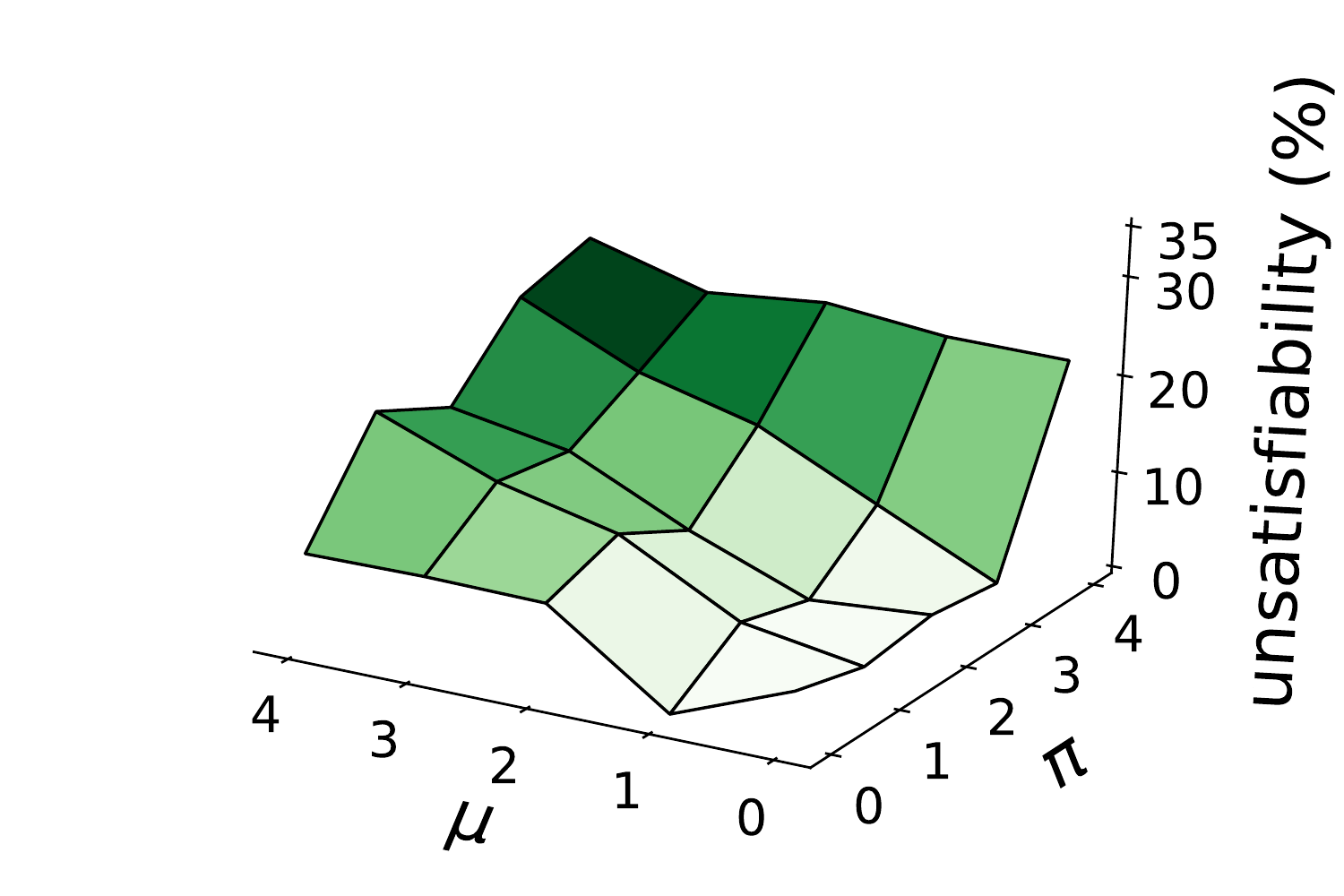}
  \caption{\textbf{Monroe}}
  \label{fig:unsatisfiable/constraints/monroe}
\end{subfigure}
\caption{Using \textbf{SynData 1}, the mean of maximum proportion (in \%) of constraints that are unsatisfiable per instance; the maximum proportion being 0\% if all constraints are satisfied, 100\% if no constraint is satisfied and so on. Each combination of $\mu$ (\# candidate attributes) and $\pi$ (\# voter attributes) has 5 instances.}
\label{fig:unsatisfiable}
\end{figure*}

\paragraph{Easier to satisfy constraints when $\sfrac{k}{(|\mathcal{G}|+|\mathcal{P}|)}$ ratio is higher:} The proportion of constraints that are satisfied per instance changed from 100\% to 49\%  (mean=82\%, sd=12\%) as the ratio changed from 1.00 to 0.25. 
This analysis basically captures the committee size to number of constraints ratio. 
Overall, it is easier to have a feasible instance with a larger the committee size ($k$) or a smaller the number of constraints ($|\mathcal{G}|+|\mathcal{P}|$). 

\paragraph{Loss in utility is proportional to the number of attributes:}
Among all feasible instances, the mean ratio of utilities of constrained to unconstrained committee ranged from 0.99 (sd=0.01; for $\mu$=1, $\pi$=0) to 0.43 (sd=0.22; for $\mu$=2, $\pi$=1, the highest number of attributes with feasible committee) for $k$-Borda, from 1.00 (sd=0.01; for $\mu$=0, $\pi$=1) to 0.49 (sd=0.18; for $\mu$=1, $\pi$=2)  for $\beta$-CC, and from 1.00 (sd=0.01; for $\mu$=0, $\pi$=1) to 0.48 (sd=0.15; for $\mu$=2, $\pi$=2) for the Monroe rule. 

\paragraph{Higher group size to lower constraint ratios are easier to satisfy:} 
An important step of our heuristic algorithm was the use of the ``minimum-remaining-value'' heuristic, which helped in the selection of unsatisfied variable. We quantified the need for this heuristic by systematically varying the ratio of group (and population) size to lower constraint (equivalent to $\sfrac{|D_i|}{S_i}$), and found that the utility ratio is the highest when the average of the said ratio across all groups and population is the highest. Also, feasibility of an instance increases with an increase in this ratio. Hence, our heuristic, which prioritizes the \emph{lower} ratio is efficient as it makes sense to first satisfy the groups or populations that are the hardest to satisfy.

\subsubsection{Real Datasets} 

For each dataset, we implemented our model using 3 sets of constraints: constraint 1 only, constraint 2 only, and constraints 1 \& 2. For Eurovision, these were at least one from each ``region'', at least one from each ``language'', and both combined. The ratios of utilities of constrained to unconstrained committees were 0.97, 0.88, and 0.82, respectively. For the UN resolutions, the constraints were at least two from each ``topic'', at least six from ``significant vote'', and both combined.  The ratio of utilities was 0.99 for each of the individual constraints. No feasible committee was found when the constraints were combined. Importantly, our algorithm always terminated in under 102 sec across all instances.


\section{Conclusion and Future Work}
\label{sec:conc}

\paragraph{Conclusion:} 
There is an understanding in social sciences that organizations that answer the call for diversity to avoid legal troubles or to avoid being labeled as ``racists'' may actually create animosity towards racial minorities due to their imposing nature \cite{dobbin2016diversity,bonilla2006racism, ray2019theory}. Similarly, when voters feel that diversity is mandatory and if it comes at the cost of their representation, it can do more harm than good. Hence, it is important to consider \textbf{all actors} of an election, namely candidates and voters, when designing fair algorithms. Doing so in this paper, we first motivated the need for diversity \emph{and} representation constraints in multiwinner elections, and developed a model, \DiReCWD. \DiReCWD, which gives DiRe committees, is also needed because the call for diversity is becoming ubiquitous. However, in the context of elections, \textbf{only diversity can do more harm than good} as the price of diversity may disproportionately be paid more by historically disadvantaged population. Finally, we show the importance to \textbf{delineate the candidate and voter attributes} as we observed that \textbf{diversity does not imply representation and vice versa}, which contrasts the common understanding, and hence, requires further investigation. This is to say that having a female candidate on the committee is different from having a candidate on the committee who is preferred by the female voters, and who themselves may or may not be female. These two are separate but equally important aims that need to be achieved simultaneously.

We 
note that \DiReCWD can satisfy many properties of multiwinner voting rules (e.g., monotonicity) \cite{elkind2017properties} and it can be used as a common framework to solve other problems. As our model was  computationally hard (Tables~\ref{tab:compResults} and \ref{tab:apxTractResults}), we developed a heuristic-based algorithm, which was efficient on tested datasets. Finally, we did an empirical analyses of feasibility, utility traded-off, and efficiency. 

\vspace{-0.075cm}
\paragraph{Future Work:} It remains open to determine \emph{how} the diversity and representation constraints are set to have a ``fair'' outcome. 
The way these constraints are set can lead to unfairness and hence, newer approaches are needed to ensure fairer outcomes. For instance, existing methods that guarantee representation fail when voters are divided into predefined population over one or more attributes. The apportionment method is one way to set the representation constraints, however, it does not account for the cohesiveness of the preferences within a population. Furthermore, this work can also give mathematical guarantees about the existence of DiRe committees. Additionally, just like correlation does not imply causation, diversity does not imply representation and vice versa. Hence, a mathematical framework is needed that can answer the following question: when does diversity imply representation, or when does representation imply diversity? The implications of such a formal framework range from hiring to clinical trials. 


Another open question is determining \emph{what} candidates are used to satisfy the constraints. Assuming that the given constraints are acceptable by everyone, the candidates chosen to satisfy the constraints can lead to unfair outcomes, especially for historically disadvantaged groups. For example, consider a $k$-sized committee election ($k = 4$). The accepted constraints are that the committee should have two male and two female candidates. Next, consider two cases: (Case 1) top two scoring male candidates are selected in the committee and the bottom two scoring female candidates are selected and (Case 2) the top and the bottom scoring male candidates are selected in the committee and the top and bottom scoring female candidates are selected. While both the cases satisfy the given constraints, Case 1 is unfair for female candidates as their top-scoring candidate does not get a seat on the committee while male candidates do get their top two scoring candidates in the committee. This inequality is distributed in Case 2 and it seems naturally ``fairer''.

Next open question pertains to the classification of the complexity of \DiReCWD w.r.t. the \csr $\mathtt{f}$. In this paper, $\mathtt{f}$ could take only two values: it can either be a monotone, submodular but not separable function or a monotone, separable function. We established that determining the winning committee using the former is NP-hard, which was done via establishing hardness of using Chamberlin-Courant rule that uses a positional scoring rule whose first two values of the scoring vector are same (Theorem~\ref{lemma:DiReCWDsubmod}). However, the classification of complexity of determining the winning committee using Chamberlin-Courant rule, Monroe rule, and other submodular but not separable scoring functions w.r.t. different families of positional scoring rules remains open. 

Continuing on the mathematical front, another future direction pertains to the relaxations made to group the candidates and the voters for Corollaries~\ref{lemma:DiReCWDrep} and \ref{cor:DiReCWDsubmod}. More specifically, we showed the hardness persists even when a candidate attribute groups all candidates into one group and a voter attribute groups all voters into one population. These are unrealistic instances as real-world stipulation will require that each candidate attribute partitions the candidates into two or more groups and each voter attribute partitions the voters into two or more populations. Mathematically, Corollaries~\ref{lemma:DiReCWDrep} and \ref{cor:DiReCWDsubmod} will not hold under this stipulation and new proofs are required such that they conform to the stated stipulations.


Finally, the restrictions on voter preferences is another direction for future research. The reduction used to prove Theorem~\ref{lemma:DiReCWDrep01} shows that finding a committee using our model is NP-hard even when each population has one voter. This can be generalized to say that the hardness persists even when each population is completely cohesive \emph{within itself}, which is a very natural assumption to make. For example, all male voters may have the same preferences and all female voters may have the same preferences and one population's preferences are different from the other's. However, given a set of constraints, one can explore as to how does the cohesiveness of voter preferences \emph{across} populations affect the complexity of our model? In addition, it remains open whether the structure of voter preferences across population affects the complexity? Finally, the winning committees of populations can also be cohesive or structured, independent of the structure and cohesiveness of the voter preferences. Hence, even when voters' preferences are not cohesive or structured, the cohesiveness or structure among the population's winning committees may make our model tractable. An immediate consequence of a positive result on this front may be the narrowing of the gap between the intractability of finding a proportionally representative committee (e.g., using Chamberlin-Courant rule) and its tractable instance due to structured preferences. If finding a winning committee is tractable under the assumption of cohesiveness and/or structure of winning committees of the voter populations, then there is hope that finding a proportionally representative committee may also be tractable even under a weaker assumption on the structure of the voter preferences. On the other hand, if we know that the preferences of the voters is cohesive by a factor of $\phi$ : $0 \leq \phi \leq 1$, then we conjecture that there exists a polynomial time approximation algorithm for the \DiReCF problem ($\mu=0$ and $\pi \geq 1$) with approximation ratio at most 
$k-(1-o(1))\frac{k(k-1)\ln\ln g(\phi)}{\ln g(\phi)}$ where $g(\phi)$ is a function that maps the cohesiveness of the preferences $\phi$ to the maximum number of winning committees $W_P$ of each population that a candidate can belong to. 
This approximation ratio improves on the general inapproximability ratio of $k$
(Theorem~\ref{thm:apxtract/01apx}) for \DiReCF when $\phi$ is not known. 

\section*{Acknowledgement}

I am grateful to Julia Stoyanovich for her insights that made me think about DiRe committees. I am thankful to Phokion G. Kolaitis for many helpful discussions that led to the classification of complexity of \DiReCWD. I acknowledge the efforts of high-school students Raisa Bhuiyan and Rachel Rose in collecting the real-world datasets, and of Théo Delemazure for comments on empirical analysis. Finally, I thank anonymous reviewers for their comments on an earlier version of this paper.

\bibliographystyle{unsrt}  
\bibliography{references_original}  






\end{document}